\let\newfloat\newfloat@ltx
\renewcommand\onecolumngrid{
\do@columngrid{one}{\@ne}
\def\set@footnotewidth{\onecolumngrid}
\def\footnoterule{\kern-6pt\hrule width 1.5in\kern6pt}
}
\renewcommand\twocolumngrid{
        \def\footnoterule{
        \dimen@\skip\footins\divide\dimen@\thr@@
        \kern-\dimen@\hrule width.5in\kern\dimen@}
        \do@columngrid{mlt}{\tw@}
}
\def\HC{\mathcal{H}}
\def\LC{\mathcal{L}}
\def\ad{^{\dagger}}
\tikzset{every picture/.style=remember picture}
\newcommand{\poly}{\operatorname{poly}}
\newcommand{\Ebb}{\mathbb{E}}
\newcommand{\Ubb}{\mathbb{U}}
\newcommand{\AC}{\mathcal{A}}
\newcommand{\BC}{\mathcal{B}}
\newcommand{\IC}{\mathcal{I}}
\newcommand{\MC}{\mathcal{M}}
\newcommand{\NC}{\mathcal{N}}
\newcommand{\OC}{\mathcal{O}}
\newcommand{\PC}{\mathcal{P}}
\newcommand{\QC}{\mathcal{Q}}
\newcommand{\SC}{\mathcal{S}}
\newcommand{\TC}{\mathcal{T}}
\newcommand{\UC}{\mathcal{U}}
\newcommand{\XC}{\mathcal{X}}
\newcommand{\Var}{{\rm Var}}
\newcommand{\Cov}{{\rm Cov}}
\renewcommand{\geq}{\geqslant}
\renewcommand{\leq}{\leqslant}
\newcommand{\ot}{\otimes}
\newcommand{\bs}{\textsf{BS}}
\newcommand{\sg}{\sigma }
\newcommand{\thv}{\vec{\alpha}}
\newcommand{\Xv}{\vec{X}}
\newcommand{\Yv}{\vec{Y}}
\newcommand{\sv}{\vec{s}}
\newcommand{\pplus}{p_{k;t}^{(+)}}
\newcommand{\pminus}{p_{k;t}^{(-)}}
\newenvironment{customthm}[1]
  {\innercustomthm}
  {\endinnercustomthm}
\def\be{\begin{equation}}
\def\ee{\end{equation}}
\def\bs{\begin{split}}
\def\es{\end{split}}
\def\bea{\begin{eqnarray}}
\def\eea{\end{eqnarray}}
\def\t{\tau}
\newtheorem{theorem}{Theorem}
\newtheorem{lemma}{Lemma}
\newtheorem{corollary}{Corollary}
\newtheorem{proposition}{Proposition}
\newtheorem*{proposition*}{Proposition}
\newtheorem{supplemental_proposition}{Supplemental Proposition}
\newtheorem{supplemental_corollary}{Supplemental Corollary}
\newtheorem{supplemental_theorem}{Supplemental Theorem}
\newtheorem{definition}{Definition}
\newenvironment{specialproof}{\textit{Proof:}}{\hfill$\square$}
\newcommand{\rhot}[1]{\rho_{#1}} 
\newcommand{\rhoa}[1]{\rho_{a_{#1}}} 
\newcommand{\rhoh}[1]{\rho_{h_{#1}}} 
\newcommand{\vrhoa}[1]{\vec{\rho}_{a_{#1}}} 
\newcommand{\sga}[1]{\sigma_{a_{#1}}} 
\newcommand{\sgh}[1]{\sigma_{h_{#1}}}
\newcommand{\sgt}[1]{\sigma_{#1}}
\renewcommand \partname{}
\renewcommand{\vec}[1]{\boldsymbol{#1}}  
\begin{document}
\doparttoc 
\faketableofcontents 

\title{Role of scrambling and noise in temporal information processing with quantum systems}

\author{Weijie Xiong}
\affiliation{Institute of Physics, Ecole Polytechnique F\'{e}d\'{e}rale de Lausanne (EPFL),  Lausanne, Switzerland}
\affiliation{Centre for Quantum Science and Engineering, Ecole Polytechnique F\'{e}d\'{e}rale de Lausanne (EPFL),  Lausanne, Switzerland}

\author{Zo\"{e} Holmes}
\affiliation{Institute of Physics, Ecole Polytechnique F\'{e}d\'{e}rale de Lausanne (EPFL),  Lausanne, Switzerland}
\affiliation{Centre for Quantum Science and Engineering, Ecole Polytechnique F\'{e}d\'{e}rale de Lausanne (EPFL),  Lausanne, Switzerland}

\author{Armando Angrisani}
\affiliation{Institute of Physics, Ecole Polytechnique F\'{e}d\'{e}rale de Lausanne (EPFL),  Lausanne, Switzerland}
\affiliation{Centre for Quantum Science and Engineering, Ecole Polytechnique F\'{e}d\'{e}rale de Lausanne (EPFL),  Lausanne, Switzerland}

\author{Yudai~Suzuki}
\affiliation{Institute of Physics, Ecole Polytechnique F\'{e}d\'{e}rale de Lausanne (EPFL),  Lausanne, Switzerland}
\affiliation{Centre for Quantum Science and Engineering, Ecole Polytechnique F\'{e}d\'{e}rale de Lausanne (EPFL),  Lausanne, Switzerland}

\author{Thiparat Chotibut}
\affiliation{Chula Intelligent and Complex Systems Lab, Department of Physics, Faculty of Science, Chulalongkorn University, Bangkok, Thailand}

\author{Supanut Thanasilp}
\affiliation{Institute of Physics, Ecole Polytechnique F\'{e}d\'{e}rale de Lausanne (EPFL), Lausanne, Switzerland}
\affiliation{Centre for Quantum Science and Engineering, Ecole Polytechnique F\'{e}d\'{e}rale de Lausanne (EPFL),  Lausanne, Switzerland}
\affiliation{Chula Intelligent and Complex Systems Lab, Department of Physics, Faculty of Science, Chulalongkorn University, Bangkok, Thailand}

\date{\today}

\begin{abstract}
Scrambling quantum systems have attracted attention as effective substrates for temporal information processing. Here we consider a quantum reservoir processing framework that captures a broad range of physical computing models with quantum systems. We examine the scalability and memory retention of the model with scrambling reservoirs modelled by high-order unitary designs in both noiseless and noisy settings. In the former regime, we show that measurement readouts become exponentially concentrated with increasing reservoir size, yet strikingly do not worsen with the reservoir iterations. Thus, while repeatedly reusing a small scrambling reservoir with quantum data might be viable, scaling up the problem size deteriorates generalization unless one can afford an exponential shot overhead. In contrast, the memory of early inputs and initial states decays exponentially in \emph{both} reservoir size and reservoir iterations. In the noisy regime, we also prove that memory decays exponentially in time for local noisy channels. These results required us to introduce new proof techniques for bounding concentration in temporal quantum models.
\end{abstract}

\maketitle

\section{Introduction}
Physical computing is an emerging computational paradigm in which the native, continuous dynamics of a physical substrate is used directly as part of a computational process, rather than being fully abstracted into digital logic~\cite{mcmahon2023thephysics,mohseni2022ising,markovic2020physics,stern2023learning,wetzstein2020inference}. This approach can offer adaptability and, in some implementations, substantial energy savings, both of which are critical for large-scale artificial intelligence~\cite{wetzstein2020inference,seok2024beyond,hasler2021physical}. Among such substrates, quantum systems are especially appealing due to their large Hilbert spaces and the complex spatio-temporal correlations they naturally generate, making them well-suited for information processing tasks involving spatially or temporally correlated data~\cite{suzuki2022naturalq,chen2020temporal}. Recent experiments indeed show that reusing a quantum system repeatedly can enhance performance in time-series prediction, suggesting a \emph{recurrent} or \emph{iterative} mode of operation that interleaves these quantum systems with streams of quantum data~\cite{senanian2024microwave,kornjavca2024large}.

Quantum reservoir processing (QRP)~\cite{fujii2017harnessing,nakajima2019boosting,mujal2021opportunities,ghosh2020reconstructing,govia2021quantum,hu2023tackling} exemplifies this iterative physical computing paradigm. Building upon the input-feedback strategy in classical reservoir computing ~\cite{jaeger2001echo,maass2002real,lukosevicius2009reservoir}, this approach has recently gained considerable attention as a
compelling non-variational framework for leveraging quantum substrates for temporal information processing~\cite{fujii2017harnessing,nakajima2019boosting,mujal2021opportunities,ghosh2020reconstructing,govia2021quantum,hu2023tackling}.
At its core, QRP processes input states using a fixed quantum evolution (the \emph{reservoir}) to generate a rich feature map. The reservoir-processed states are then read out via quantum measurements and the resulting vector of expectation values are trained classically via linear regression to solve a given learning task. This approach can tackle not only standard machine learning problems with classical data~\cite{burgess2022quantum,settino2024memory,pfeffer2022hybrid,ahmed2024prediction,chen2024efficient}, but also intrinsically quantum problems such as state discrimination and tomography~\cite{angelatos2021reservoir,ghosh2020reconstructing,kobayashi2024feedback,li2024estimating,vetrano2025state,nokkala2024retrieving} and quantum dynamics prediction~\cite{sornsaeng2023quantum}. Experimental realizations across various platforms further highlight QRP's viability~\cite{yasuda2023quantum,dudas2023quantum,govia2021quantum,kornjavca2024large,garcia2023scalable,hu2024overcoming}, while a growing benchmarking studies chart its performance~\cite{fujii2017harnessing,kalfus2022hilbert,khan2021physical,nakajima2019boosting,martinez2020information,martinez2021dynamical,ghosh2020reconstructing,nokkala2021gaussian,pfeffer2022hybrid,fry2023optimizing,senanian2024microwave,martinez2023quantum,wright2020capacity,govia2022nonlinear,kubota2022quantum,suzuki2022naturalq,domingo2023taking,kubota2021unifying,garcia2023scalable} and hints towards the potential to rival large classical neural networks~\cite{fujii2016power,burgess2022quantum,pfeffer2022hybrid}.

Many QRP schemes utilize quantum scrambling under chaotic dynamics as a quantum reservoir to create a natural, high-dimensional feature map for input data~\cite{fujii2017harnessing,nakajima2019boosting,martinez2021dynamical}. Numerous studies highlight correlations between the degree of scrambling of the reservoir and the QRP performance~\cite{martinez2021dynamical,palacios2024role,vetrano2025state,martinez2020information,hu2024overcoming}. Yet existing work is confined to small problem sizes \cite{yasuda2023quantum,dudas2023quantum,govia2021quantum,kornjavca2024large,garcia2023scalable,burgess2022quantum,pfeffer2022hybrid}, thus it remains unclear whether these promising results survive when the reservoir size enters the regime where real quantum advantages might appear. In the absence of large-scale, high-quality quantum hardware, theoretical analysis is the only viable avenue to study these phenomena on larger scales.
\begin{figure*}
    \centering
    \includegraphics[width=0.99\textwidth]{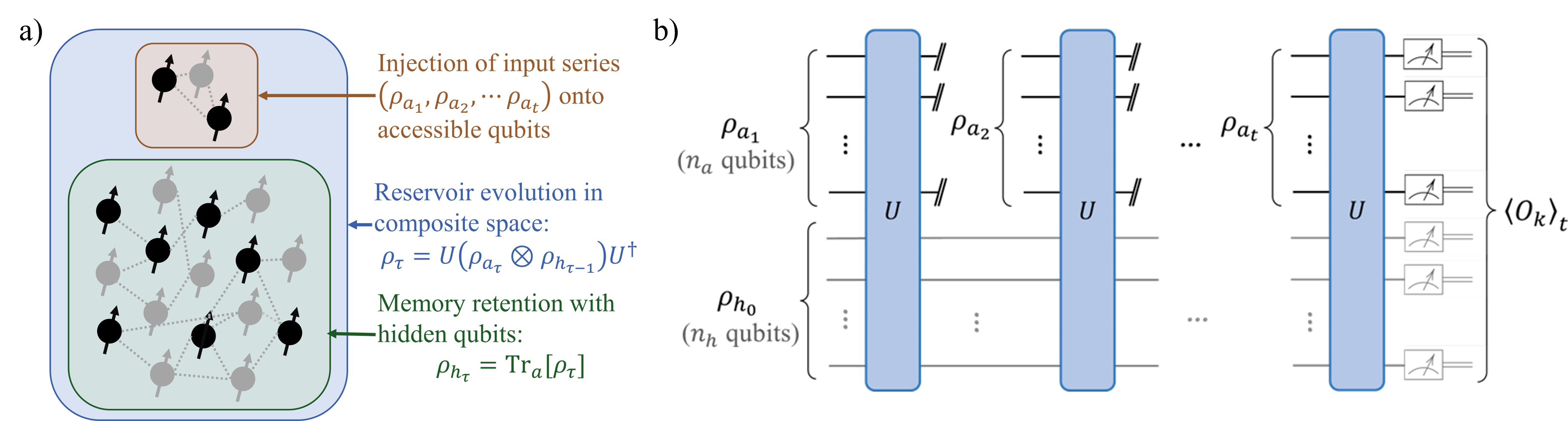}
    
    \caption{\textbf{{Framework of QRP (a) and its circuit diagram (b)}.} The quantum reservoir consists of $n=n_a+n_h$ qubits. $n_a$ denotes the number of accessible qubits, which are used for data uploading; and $n_h$ the number of hidden qubits. After injecting the input states into accessible qubits, the reservoir dynamics, described by some unitary $U$, will be applied to the composite system of accessible and hidden qubits. The key component of a RC system is its memory. In QRP, while the accessible qubits will be initialized after each iteration and subsequently be encoded with data of next time step, the hidden qubits will never be initialized and their state are obtained from the reduced state of the last reservoir evolution in hidden space. Thus, the hidden qubits play the role of quantum memory and store the information of previous inputs.
    }
    \label{fig:Framework_QRP}
\end{figure*}

In this work, we idealize scrambling reservoirs as unitaries drawn from a high-order design ensemble, to capture the extreme-scrambling limit while enable rigorous theoretical analysis with Haar integration~\cite{holmes2020barren,roberts2017chaos}. Related techniques already establish scalability barriers for variational quantum algorithms~\cite{mcclean2018barren, larocca2024review}, quantum kernels~\cite{thanasilp2022exponential}, and quantum extreme learning machines (QELMs), the single-pass variant of QRP such that the reservoir evolution is applied only once \cite{xiong2025fundamental,sannia2025ExpConcQRC}. By contrast, the full QRP protocol is much more demanding: a fixed reservoir repeatedly interleaves with a stream of input time-series. This temporal correlation between reservoir iterations hinders standard analytical techniques. To address this challenge, we apply tensor-diagram approaches to unroll multi-step QRP into a single high-moment Haar integral amenable for scalability analysis.

Using the `unravelled' tensor-diagram approach, we derive an asymptotic concentration bound for the QRP's output magnitudes. Our bounds reveal the exponential concentration of the output with increasing reservoir size (i.e., the number of qubits), yet incurs no further concentration over successive reservoir iterations. On the one hand, this exponential concentration presents a scalability barrier- their outputs for distinct inputs become effectively data independent for large scale problems.
On the other hand, as reservoir iterations cost no further concentration,  repeatedly reusing a modest-sized scrambling reservoir remains effective with a feasible number of measurement shots. This potentially leaves room for practical data processing advantages for quantum data in the spirit of physical computing. 

We also examine how a scrambling reservoir retains memory of its past. For this, we introduce a \emph{memory indicator} that quantifies how strongly outputs depend on earlier inputs or initial reservoir states. Previous studies have indicated that the memory of QRP generally decays exponentially over reservoir iterations~\cite{garcia2023scalable, sannia2025non}. Here we compute the exact rate of memory decay for a fully scrambling reservoir.  Viewed positively, this decay reflects the  Fading Memory Property (FMP) and Echo State Property (ESP) - behaviors necessary for operational QRP~\cite{grigoryeva2018echo,mujal2021opportunities}.  However, the memory indicators also exponentially {decay},  ultimately rendering the model input-insensitive at large reservoir sizes. Thus, while modest scale QRP with fully scrambling reservoirs could remain viable, QRP with large highly scrambling reservoirs will be unusable in practice as it becomes input-agnostic and its memory fades exponentially fast. This highlights the need for caution when relying solely on standard FMP and ESP for designing scalable QRP.

Realizing QRP on near-term quantum devices unavoidably involves hardware noise. We further extend our analysis to scrambling reservoirs with both unital and non-unital noises. We derive upper bounds showing that the reservoir loses the memory of an initial state exponentially with increasing temporal iterations. While recent evidence suggests certain forms of dissipation or decoherence can be used as a computational resource for improved performance on small QRP~\cite{martinez2023quantum,martinez2024input,sannia2024dissipation, cheamsawat2025dissipation, monzani2024leveraging, yosifov2025dissipation}, our analysis explicitly shows how generic noise channels can compound concentration effects over multiple iteration steps,  leading to vanishing memory and scalability barrier for QRP in the presence of noise.

Finally, to explore how one might sidestep these scalability barriers, we examine two \emph{moderate-scrambling} reservoirs: (i) alternating layer circuits and (ii) chaotic Ising evolutions, with intermediate depth and evolution time respectively. We observe only \emph{polynomial} rather than exponential output concentration, albeit at the cost of reduced expressive power and potentially classical simulability. These examples suggest that carefully controlling the degree of scrambling may provide a middle-ground to engineer quantum reservoirs to remain sufficiently expressive while avoiding exponential resource overheads.

\section{Preliminaries}
\subsection{Framework of QRP}\label{sec:QRP_framework}

To systematically examine quantum reservoir processing (QRP) across diverse experimental platforms, we recast it in a unified framework that naturally accommodates both quantum and classical inputs. This abstraction captures the essential structure encompassing most current proposals for quantum reservoir computing, allowing for the applicability of our results across quantum substrates.

Concretely, we will consider a dataset $\SC$ containing $D$ time series, each of  length~$t$, denoted
by $\bigl\{\vrhoa{t}^{(l)}\bigr\}_{l=1}^{D}$.
Every sequence of input quantum states
$
  \vrhoa{t}^{(l)}
  =\bigl(\rhoa{1}^{(l)},\rhoa{2}^{(l)},\dots,\rhoa{t}^{(l)}\bigr)
$
is associated with a target output sequence
$\vec{y}_{t}^{(l)}=(y_{1}^{(l)},\dots,y_{t}^{(l)})$.
The input sequences can be intrinsic quantum data such as time-dependent density matrices generated from quantum dynamics ~\cite{cerezo2022challenges}, or they can
arise from classical data $s_\tau$  (e.g. Mackey-Glass chaotic sequences) that have been encoded through a unitary
$U_{\mathrm{enc}}(s_\tau)$ acting on a fixed reference state.

Given a training subset
\(
  \mathcal{S}_{\rm tr}=\bigl\{(\vrhoa{t}^{(l)},\vec{y}_{t}^{(l)})\bigr\}_{l=1}^{D_{\mathrm{tr}}}
  \subseteq\SC,
\)
our goal is to predict an output $y_{\tau}$ associated with a previously unseen input $\rhoa{\tau} \notin \mathcal{S}_{\rm tr}$.
For simplicity, we assume the output is scalar, though it may generally be vector-valued.

Concretely, the quantum reservoir architecture consists of
\begin{itemize}
  \item $n_a$ \emph{accessible qubits}, onto which each new input $\rhoa{\tau}$ is loaded at time $\tau$ (replacing the previous accessible-qubit state),
  \item $n_h$ \emph{hidden qubits}, which accumulate information from past inputs and hidden-qubit states. We denote their initial state by $\rhoh{0}$, and throughout the reservoir iterations is never reinitialized. 
\end{itemize}

At each time step $\tau$, the QRP model recursively processes the data stream as follows (see  Fig.~\ref{fig:Framework_QRP}):
\begin{enumerate}
\item[\textbf{(i)}] \textbf{Data injection:} Replace the state of the $n_a$ accessible qubits by the new input $\rhoa{\tau}$. The hidden qubits remain in the state carried over from the previous iteration.
\item[\textbf{(ii)}] \textbf{Reservoir evolution:} 
   Apply a \emph{fixed} reservoir evolution $U$ acting on all $n_a + n_h$ qubits (here we assume a unitary channel for simplicity, although we will consider more general noisy channels in subsequent sections.):
   \begin{equation}\label{eq:qrp-total-state}
      \rhot{\tau}
        =
         U\bigl(\rhoa{\tau} \,\otimes\, \rhoh{\tau-1}\bigr)\,
         U^\dagger,
   \end{equation}
   where $\rhoh{\tau-1} = \mathrm{Tr}_a\bigl[\rhot{\tau-1}\bigr]$ is the hidden-qubit state from the previous time step, and $\rhoh{0}$ denotes the initial hidden-qubit state.
\end{enumerate}
After processing the input sequence by repeating (i) and (ii) up to time step $\tau \leq t$, one can choose to perform measurement readout to obtain QRP outputs. 
\begin{enumerate}
\item[\textbf{(iii)}] \textbf{Measurement readout:} 
   Measures a fixed set of observables $\{O_k\}_{k=1}^M$ on the state $\rhot{\tau}$. The QRP output associated with the $k^{\rm th}$ component of the total readout is then
   \begin{align}\label{obs_def_QRP}
       \langle O_k \rangle_\tau = \Tr\left[ O_k \rhot{\tau} \right] \;,
   \end{align} 
   leading to the ideal (shot noise-free) readout of the form
   \begin{equation}
     \bm{r}_\tau 
       =
         \Bigl(\mathrm{Tr}[O_1\,\rhot{\tau}],\,
                \dots,\,
                \mathrm{Tr}[O_M\,\rhot{\tau}]
         \Bigr).
   \end{equation}
   In practice, however, these outputs are estimated with finite measurement shots leading to limited precision due to shot noise~\cite{hu2023tackling}.
\item[\textbf{(iv)}] \textbf{Classical post-processing:}
To approximate the label $y_\tau$, one typically forms a linear combination of these measured QRP observables, namely
   \begin{equation}\label{eq:qrp-model-prediction}
     f_{\t}\bigl(\vrhoa{\t};\bm{\eta}\bigr)
       =
         \sum_{k=1}^M\,\eta_k\,
         \mathrm{Tr}\bigl[\,O_k\,\rhot{\tau}\bigr]\;,
   \end{equation}
   where $\vrhoa{\t}=(\rhoa{1},\rhoa{2},\dots,\rhoa{\t})$ and $\t\leq t$ specifies the number of iterations of QRP, $\bm{\eta} = (\eta_1,..., \eta_M)\in \mathbb{R}^M$ are trainable weights and we recall that $\rhot{\tau}$ implicitly depends on $\vrhoa{\t}$. 

The optimal weights are then obtained by classically minimising an empirical loss,  $\bm{\eta}^{*}
=\arg\min_{\bm{\eta}}[\mathcal{L}(\bm{\eta})],$
which, for suitable choices of the loss $\mathcal{L}$ (e.g.\ regularised
least-squares), reduces to standard linear regression and therefore admits a
closed-form solution. For example, one could use the mean-squared error loss
\begin{equation}\label{eq:loss}
  \mathcal{L}(\bm{\eta})
  =
  \frac{1}{|\mathcal{T}_{\mathrm{tr}}|}
  \sum_{\tau\in\mathcal{T}_{\mathrm{tr}}}
  \mathcal{L}_{\tau}(\bm{\eta}),
\end{equation}
where $\mathcal{T}_{\mathrm{tr}}\subseteq[t]$ specifies the time steps used
for training -- typically those after a wash-out period,
$\mathcal{T}_{\mathrm{tr}}=\{t_{\mathrm{w}},t_{\mathrm{w}}+1,\dots\}$ with
$t_{\mathrm{w}}\gg1$~\cite{suzuki2022naturalq} -- and
   \begin{equation}
    \mathcal{L}_\t(\bm{\eta})
       =
       \frac{1}{D_{\rm tr}}\sum_{(\vrhoa{t}^{(l)},\vec{y}_t^{(l)})\in \mathcal{S}_{\rm tr}}
         \Bigl(
           f_{\t}\bigl(\vrhoa{\t}^{(l)};\bm{\eta}\bigr)
           -
           y_{\tau}^{(l)}
         \Bigr)^2 ,
   \end{equation}
where  $\vrhoa{\t}^{(l)}$ is a subseries of element $\vrhoa{t}^{(l)}$ of the training set $\SC_{\rm tr}$ . With that loss form, one obtains closed-form optimal weights $\bm{\eta}^*$ ~\cite{fujii2017harnessing}. Then, the trained model $f_{\boldsymbol{\eta}^*}(\cdot)$ can be used to predict or classify  unseen time-series data.
\end{enumerate}

The special case $t=1$, involving a single reservoir evolution and a single input datum, corresponds to the \emph{quantum extreme learning machine} (QELM) framework, of which Fourier expressivity and scalability barriers were analyzed in Ref.~\cite{xiong2025fundamental}. In contrast, the multi-step QRP protocol boosts expressivity by repeatedly interleaving fresh inputs into the reservoir dynamics, yet this recurrent coupling introduces scalability analysis challenges.

\subsection{Memory properties of reservoirs}\label{subsec:mem_prop}
Here we review relevant memory properties of reservoirs, which have been widely studied in the context of theory of dynamical systems~\cite{mujal2021opportunities, grigoryeva2018echo, fujii2017harnessing}. Two fundamental memory properties, that are necessary for reservoir computing to be operational (though by themselves insufficient to guarantee strong predictive performance) are:

\begin{enumerate}

    \item \emph{Fading Memory Property (FMP):} A reservoir has the FMP if the influence of inputs fed \emph{far in the past} on the present output \emph{fades away} in time. In other words, perturbations in the input $\rhoa{t-\tau}$ for large $\tau$ cause only a negligible difference in the current output $y_t$. FMP ensures that the reservoir focuses on the relevant recent history of the input and is not overwhelmed by far past information.

    \item \emph{Echo State Property (ESP):} A reservoir has the ESP if the reservoir state $\rhoh{t}$ at large $t$ depends only on the driving inputs $\{\rhoa{\tau}\}_{\tau \leq t}$ and \emph{not} on the reservoir's initial condition, i.e., the initial state of hidden qubits $\rhoh{0}$. The ESP thus guarantees that, after a long washout (burn-in) period, the reservoir state becomes a function of only input sequences, and of only recent inputs if the reservoir also has FMP, but \emph{independent} of how the reservoir was initially prepared. Without ESP, the output could spuriously depend on arbitrary past state preparations, making the system ill-defined for consistent time series processing.
\end{enumerate}
Both properties are necessary conditions for the QRP model to be operative~\cite{mujal2021opportunities,grigoryeva2018echo}. Moreover, by properly choosing a reservoir, such as a sufficiently large Echo State Network (ESN), certain QRP models with such properties can achieve a \emph{universal approximation} of fading memory maps and possess short-term sequence-modelling capabilities \cite{grigoryeva2018echo, gonon2021universal}.

\subsection{Exponential concentration}
The exponentially large Hilbert space, if not properly handled,  can hinder the scalability of QML models. In variational quantum models, this curse of dimensionality manifests itself as the Barren Plateau (BP) phenomena~\cite{mcclean2018barren, larocca2024review}, where loss landscapes become exponentially flat with system size, prohibiting training. In contrast, the curse of dimensionality appears differently in \emph{non-variational} QML approaches such as quantum kernel methods~\cite{thanasilp2022exponential,suzuki2024quantum}, and QELMs~\cite{xiong2025fundamental}. In these classically trainable, non-variational QML models, the exponential concentration results in data-insensitive models that generalize poorly. 

We now review the definition of exponential concentration

\begin{definition}[Probabilistic exponential concentration]\label{def_prob_exp_concentration}
Consider a quantity $Q(\vec{\alpha})$ that depends on some variable $\vec{\alpha}$ which can be estimated from an $n$-qubit quantum computer as an expectation value of some observable. $Q(\vec{\alpha})$ is said to probabilistically exponentially concentrate around an input-independent value $\mu$ if
\be\label{eq:def_exp_concentration}
\mathrm{Pr}_{\vec{\alpha}}[|Q(\vec{\alpha})-\mu|\geq\delta] \leq \dfrac{\beta}{\delta^2}\hspace{2pt},\;\; \beta \in \OC(1/b^{n}) \;,
\ee
for some $b > 1$. That is, the probability that $Q(\vec{\alpha})$ deviates from $\mu$ by a small amount $\delta$ is exponentially small for all $\thv$.
\end{definition}

The above definition is rather general and can be applied to many instances. In the context of QRP, we will consider $Q(\thv)=\expval{O}_{t}$, where $O$ can be any one of the observables out of the set used for training. In this case, the probability in Eq.~\eqref{eq:def_exp_concentration}, as well as variance and mean of $\expval{O}_{t}$, would be taken over the set of all inputs $\XC$ or over an ensemble from which the reservoir $U$ is drawn. We note that in Eq.~\eqref{eq:def_exp_concentration}, by applying Chebyschev's inequality, $\beta$ can readily be associated to  $\Var_{\vec{\alpha}}[Q(\vec{\alpha})]$ with $\mu=\Ebb_{\vec{\alpha}}[Q(\vec{\alpha})]$. 

\section{Results}

\begin{table*}[t]
\centering
\begin{tikzpicture}
\definecolor{zred}{HTML}{BD2327}
\definecolor{zgreen1}{HTML}{0F483E}
\draw[thick] (-8.0, -4.6) rectangle (7.0, -8.6);
\draw[thick] (-8.0, -5.4) -- (7.0, -5.4);
\draw[thick] (-8.0, -7.0) -- (7.0, -7.0);
\draw[thick] (-5.0, -4.6) -- (-5.0, -8.6);
\draw[thick] (1.0, -4.6) -- (1.0, -8.6);
\node[align=center,text width=4.0cm] at (-6.5,-4.98) 
    {Reservoir Dynamics};
\node[align=center,text width=4.0cm] at (-6.5,-6.2) 
    {Scrambling};
\node[align=center,text width=4.0cm] at (-6.5,-7.8) 
    {Local Noise};
\node[align=center,text width=4.0cm] at (-2.0,-4.98) 
    {Output};
\node[align=center,text width=4.0cm] at (4.0,-4.98) 
    {Memory};
\node[align=center,text width=6cm] at (-2.0,-5.8) 
    { \textbf{Theorem}~\ref{Thm:HaaResVar}};
\node[align=center,text width=6cm] at (-2.0,-6.35) 
    {\textcolor{zred}{$\bullet$ Exp. concentration in system size} };
\node[align=center,text width=6cm] at (-2.0,-6.7) 
    {\textcolor{zgreen1}{$\bullet$ No concentration in iterations} };
\node[align=center,text width=6cm] at (-2.0,-7.4) 
    { Refs.~\cite{thanasilp2022exponential,xiong2025fundamental,wang2020noise}};
\node[align=center,text width=6cm] at (-2.0,-8.1) 
    {\textcolor{zred}{$\bullet$ Exp. concentration in system size \\ (Pauli noise)} };
\node[align=center,text width=6cm] at (4.0,-5.8) 
    { \textbf{Theorem}~\ref{thm_main:mem_ind_upp_bound}};
\node[align=center,text width=6cm] at (4.0,-6.35) 
    {\textcolor{black}{$\bullet$ Exp. decay in system size} };
\node[align=center,text width=6cm] at (4.0,-6.7) 
    {\textcolor{black}{$\bullet$ Exp. decay in iterations} };
\node[align=center,text width=6cm] at (4.0,-7.4) 
    { \textbf{Theorem}~\ref{thm:unital_noise} / \textbf{Theorem}~\ref{thm:non_unital_noise_temp}};
\node[align=center,text width=6cm] at (4.0,-8.1) 
    {\textcolor{black}{$\bullet$ Exp. decay in iterations (and also in system size with non-unital noise)} };
\end{tikzpicture}
\caption{\textbf{Summary of this work.} The analytical results of the paper are summarized in this table. We analyse the scaling of the variance of output observables and indicators of the memory of the reservoir for two types of reservoir dynamics -- scrambling and noise. }
\label{tab:summary}
\end{table*}

\subsection{Extreme scrambling reservoir dynamics}
Here we study the effect of using scramblers as the quantum reservoir.
A highly scrambling process at long times can be modelled by typical (pseudo) random unitaries~\cite{roberts2017chaos, tangpanitanon2023signatures, holmes2020barren}.  
This approach for modelling the scrambling reservoir has the added benefit of allowing us to adopt tools from Haar integration for analytical treatments~\cite{mele2023introduction}.
Concretely, for $t$ time steps, we suppose that the scrambling unitary is drawn from a distribution that agrees with the (Haar) random distribution of unitaries up to the 2t-th moment. That is, we consider reservoirs to form $2t$-design over the unitary group of $d_a d_h$ dimensions $\Ubb(d_a d_h)$ with $d_a = 2^{n_a}$ and $d_h = 2^{n_h}$. 

\subsubsection{Concentration in output's magnitude}\label{main:conc_output}
We start with the impact of extreme scrambling reservoirs on the scalability. The scrambling dynamics from the reservoir spreads out local information encoded from the input injection steps onto the exponentially large Hilbert space, making the information inaccessible in the large system size limit. Consequently, this effect manifests itself as the exponential concentration of the QRP output with respect to the reservoir size and is formalized as the vanishing variance in the following theorem.

\begin{theorem}[Variance of QRP output with extreme scrambling reservoir]\label{Thm:HaaResVar} Given that a reservoir is sampled from a $2t$-design on $\Ubb(d_a d_h)$, the variance of a QRP's outputs vanish exponentially with the system size. Formally, for large $t$, the variance is given by
\begin{equation}
\Var_{U\sim\Ubb(d_a d_h)}[\expval{O}_t]=\frac{\Tr[O^2]\left(1+ \beta\right)}{ d_a^2 d_h^3}\;,
\end{equation}
with $\beta \in \OC(\frac{1}{d_a d_h})$.
\end{theorem}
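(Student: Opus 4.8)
The plan is to compute the second moment $\Ebb_U[\expval{O}_t^2]$ directly and subtract the squared mean, exploiting the fact that a $2t$-design reproduces the Haar average of any degree-$(2t,2t)$ polynomial in the entries of $U$. The obstacle to a naive calculation is that the \emph{same} $U$ is reused at all $t$ iterations, so the single-step averages do not factorize across time. I would resolve this by \emph{unrolling} the recursion in Eq.~\eqref{eq:qrp-total-state}: writing $\expval{O}_t$ as a tensor diagram containing exactly $t$ copies of $U$ and $t$ of $U^\dagger$, with the partial traces $\Tr_a$ wiring the accessible legs of consecutive layers together and the inputs $\rhoa{\tau}$, the initial hidden state $\rhoh{0}$, and the observable $O$ sitting on the boundary. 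Squaring doubles this to a diagram with $2t$ copies of each, so the whole object becomes a \emph{single} $2t$-th moment integral over $U$ -- precisely the regime where a $2t$-design suffices.

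Concretely, I would apply the Weingarten formula, replacing the $2t$ boxes by a sum over permutation pairs $(\sigma,\pi)\in S_{2t}\times S_{2t}$ weighted by $\mathrm{Wg}(\sigma\pi^{-1},d_a d_h)$, each term contracting the boundary tensors through the temporal ladder into a monomial in $d_a$, $d_h$, $\Tr[O]$, $\Tr[O^2]$, and the input purities. The leading contributions come from $\sigma=\pi$, for which $\mathrm{Wg}(e,d)=d^{-2t}(1+O(d^{-2}))$, while non-trivial relative permutations are suppressed by powers of $d^{-1}=(d_a d_h)^{-1}$. Among the surviving diagrams, those that keep the two replicas separated reconstruct $(\Ebb_U[\expval{O}_t])^2$ and cancel against the subtracted mean, so the variance is carried entirely by the \emph{cross-replica} contractions. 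I would then show that the leading cross-replica diagram is a single swap localized at the final readout layer, whereas swaps inserted at an earlier layer carry an additional partial-trace loop and are therefore suppressed by $1/(d_a d_h)$.

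The mechanism that makes the answer clean and $t$-independent is scrambling: averaging one layer sends $\rho\mapsto \Tr_a[U(\rhoa{\tau}\otimes\rho)U^\dagger]$ to $\Tr[\rho]\,\eye_h/d_h$, so after many iterations the hidden state is driven to the maximally mixed state $\eye_h/d_h$ with fluctuations that I would show are themselves $O(1/(d_a d_h))$-suppressed. This is the quantitative form of the Echo State Property and is exactly what erases any dependence on $\rhoh{0}$ and on the early inputs in the $t\to\infty$ limit. I would organize the layer-by-layer contractions as a recursion (a transfer map in replica space) and identify its fixed point: at the final layer the relevant second moment of the input $\rhoa{t}\otimes(\eye_h/d_h)$ has purity $\Tr[\rhoa{t}^2]/d_h=1/d_h$ for pure accessible inputs, and feeding this into the single-step Haar variance (the $t=1$ / QELM case of Ref.~\cite{xiong2025fundamental}) yields $\Var[\expval{O}_t]=\Tr[O^2]\,\Tr[(\rhoa{t}\otimes \eye_h/d_h)^2]/(d_a d_h)^2=\Tr[O^2]/(d_a^2 d_h^3)$ at leading order.

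Finally, I would collect the subleading contributions -- the non-identity Weingarten permutations, the earlier-layer swaps, and the residual fluctuations of $\rhoh{t-1}$ about $\eye_h/d_h$ -- and bound their total relative size by $\beta\in\OC(1/(d_a d_h))$, giving the stated prefactor $(1+\beta)$. The main obstacle is exactly this bookkeeping: controlling the full $S_{2t}\times S_{2t}$ sum and proving that the recursion converges to the advertised fixed point \emph{uniformly in $t$}, i.e.\ that the cumulative effect of the $t$ layers neither accumulates with $t$ nor reintroduces input dependence. Showing that the transfer map has a single dominant contribution and that all competing diagrams are genuinely down by $1/(d_a d_h)$ is the technical heart of the argument, and is what upgrades the heuristic ``hidden state $\approx\eye_h/d_h$'' into the rigorous large-$t$ asymptotic statement.
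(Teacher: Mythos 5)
Your proposal follows essentially the same route as the paper: unroll the recursion in Eq.~\eqref{eq:qrp-total-state} into a tensor diagram with $t$ copies of $U$ (the paper's Supplemental Proposition~\ref{sup-prop:qrp-output}, obtained by inserting operator-basis resolutions of identity on the hidden register between layers), square to get a single $2t$-th moment integral where the $2t$-design suffices, expand in permutations (the paper uses the leading-order twirl formula of Lemma~\ref{lemma:L3_deep_QNN} rather than full Weingarten bookkeeping), identify the cross-replica swap localized at the final readout copy as the source of the $\Tr[O^2]/(d_a^2 d_h^3)$ term, and cancel the replica-separated identity contribution against the squared mean. Your fixed-point heuristic (hidden state driven to $\eye_h/d_h$, then the QELM single-step variance with purity $1/d_h$) correctly reproduces the leading coefficient and matches the paper's result, including the implicit restriction to pure accessible inputs.

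One claim in your plan is, however, quantitatively wrong and would derail the bookkeeping you identify as the technical heart. You assert that cross-replica swaps inserted at earlier layers ``carry an additional partial-trace loop and are therefore suppressed by $1/(d_a d_h)$.'' This per-diagram suppression fails: the permutation $\sigma=\bigotimes_{\tau=1}^{t}\mathrm{SWAP}_{\tau,\tau+t}$, which swaps the two replicas at \emph{every} layer, gains in the hidden-space delta counting (it produces $d_h^{2(t-1)}$ rather than $d_h^{2t-3}$) while losing only in $d_a$, and contributes
\begin{equation}
\Delta_{\rm temp}(t)=\frac{\Tr[O^2]}{d_a^{t+1}d_h^2}\,,
\end{equation}
which is suppressed relative to the leading term by $d_h/d_a^{t-1}$, not by any power of $1/(d_a d_h)$. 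For small $t$ with $d_h\gg d_a$ this family actually \emph{dominates} the variance --- this is precisely the time-dependent term the paper isolates in Appendix~\ref{app:var_small_t} and the reason the theorem is stated ``for large $t$.'' So the trade-off between $d_a$ and $d_h$ scalings must be tracked separately (more hidden deltas can compensate fewer accessible-space loops), rather than assuming a uniform $1/(d_a d_h)$ cost per early-layer swap; once this family is accounted for and shown to decay as $d_a^{-t}$, your large-$t$ conclusion and the $(1+\beta)$ prefactor with $\beta\in\OC(1/(d_a d_h))$ go through as in the paper.
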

By using Chebyshev's inequality, we straightforwardly obtain the probabilistic exponential concentration as in Definition~\ref{def_prob_exp_concentration} with the average $\mathbb{E}_{U \sim\Ubb(d_a d_h) }\left[\expval{O}_t\right] = \frac{\Tr[O](1+\beta)}{d_a d_h}$ as the concentration value, which is also independent of either reservoir or input states. A formal version of Theorem~\ref{Thm:HaaResVar} and its proof is presented in Appendix~\ref{app:conc_output}.

Proving Theorem~\ref{Thm:HaaResVar} is non-trivial and involves going beyond the standard proof technique commonly used to show exponential concentration. In particular, while the same instance of the quantum reservoir is employed for every time step, the entire composite system becomes larger with every time step due to the  input state injection process. This prompts the question of whether the randomness generated by the same scrambling instance is sufficient to cause concentration. The proof of Theorem~\ref{Thm:HaaResVar} relies on expressing the whole QRP dynamics as a tensor diagram. One then carefully works with the recursive expression of expectation values in Eq.~\eqref{obs_def_QRP} by writing in the effectively extended space such that the first and second moments of the expectation transformed to a standard Haar integral with $2t$-design. This alternative form of the QRP output is presented in Supplemental Proposition~\ref{sup-prop:qrp-output}.

We remark that our assumption that the reservoir is drawn from a unitary $2t$-design is stronger than typically required for similar concentration bounds in QELM or quantum kernel methods~\cite{xiong2025fundamental,thanasilp2022exponential}, where a $2$-design ensemble is sufficient. Whether this is also a necessary condition, or just a relic of our proof techniques, remains an open question.

\begin{figure}[ht]
    \centering
    \includegraphics[width=0.48\textwidth]{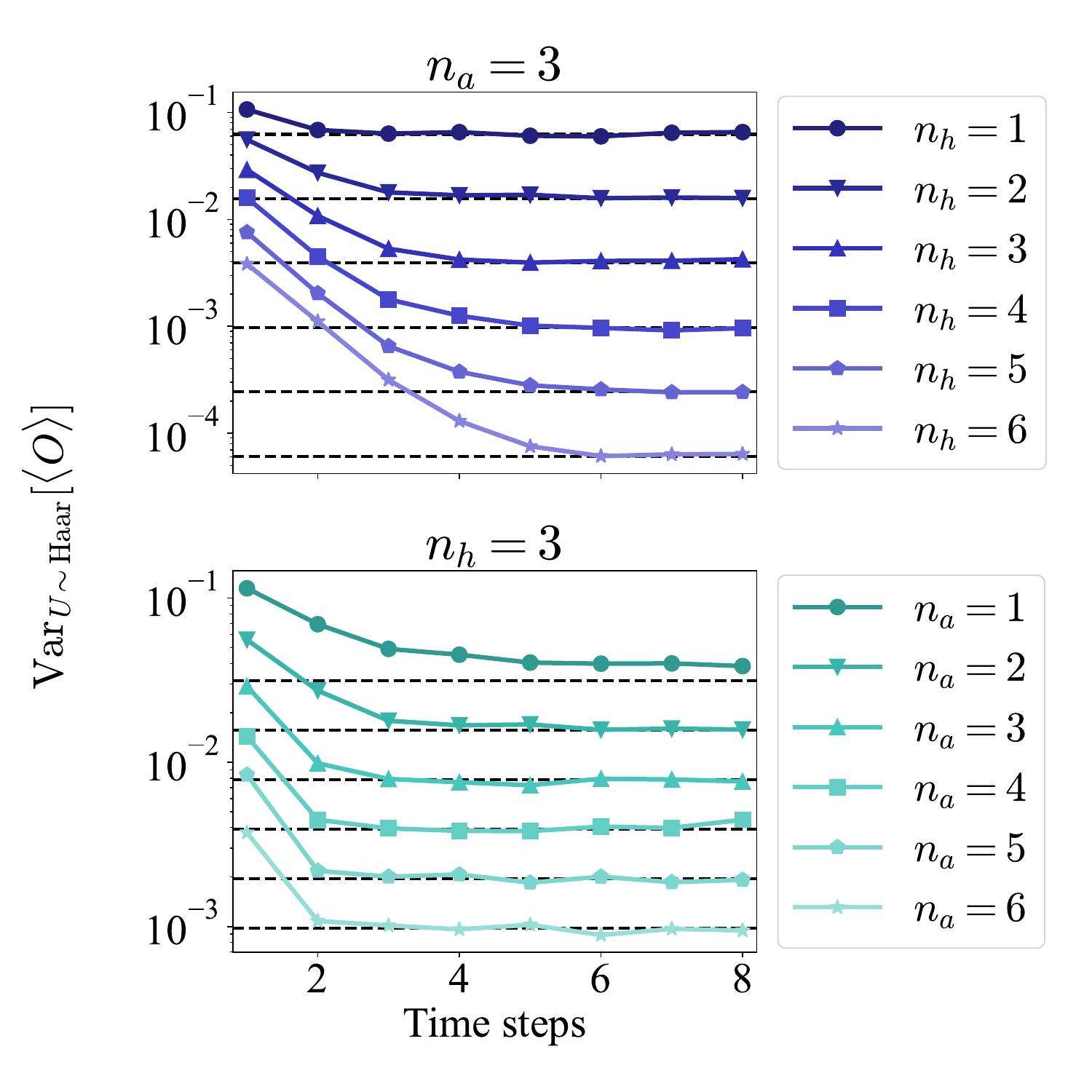}
    
    \caption{\textbf{Concentration induced by reservoir dynamics.} The variance of reservoir output over Haar random reservoirs are plotted against number of time steps for different numbers of hidden qubits (top) and accessible qubits (bottom). The observables are Pauli strings. The dashed horizontal lines represent the leading terms $\frac{1}{d_a d_h^2}$ obtained analytically. Note that the classical input are sequences of independent random numbers uniformly drawn from $[0,1]$ and the exponential encoding strategy is employed to encode the data. 
    }
    \label{fig:conc_U}
\end{figure}

A numerical simulation of the variance scaling with the scrambling reservoir is illustrated in Fig.~\ref{fig:conc_U}, as a function of $n_a$ and $n_h$. While the dominant term of the variance (for large $d$) in Theorem~\ref{Thm:HaaResVar} does not depend on the time steps, one indeed observes some discrepancy at the early time steps. 
This temporally vanishing contribution of variance is dominated by the term $\frac{\Tr[O^2]}{d_a^{t+1}d_h^2}$ as presented in the formal version of the theorem and proven/further discussed in Appendix~\ref{app:var_small_t}.
However, after some reservoir iterations, the variance quickly saturates to the predicted value which depends only on the number of qubits. The convergence rate is empirically observed to depend on the ratio $n_h/n_a$, which reflects the interplay of the injection of new data and the accumulated scrambling due to reservoir dynamics.

\paragraph*{Practical consequences.} Exponential concentration does not prevent the training of a QRP model, as the optimisation over readout weights is a classical convex optimization problem. However, due to the intrinsically probabilistic nature of quantum hardware, the observables read-out can only be estimated with measurement shots. For large system sizes, one typically can only afford a polynomial number of measurements to estimate these observables. As a consequence of the concentration phenomena, these estimates become statistically indistinguishable from input-independent random variables. Training on these estimates therefore yields an \emph{input-insensitive}
QRP with poor generalisation. 
Thus, while trainability is guaranteed, the scalability of generalisation is not.

The following proposition formalizes this notion for QRP output.
\begin{proposition}\label{prop:prac-cons}
    Consider the same assumptions as Theorem~\ref{Thm:HaaResVar} and further suppose that $O$ is a Pauli string. After polynomial measurement shots $N \in \OC(\poly(n))$, the estimate of $\langle O\rangle_t$  is statistically indistinguishable, with probability exponentially close to $1$,
    from an input-independent random variable of the form
    \begin{align}\label{eq:rand-output-main}
        \widehat{O}_{\text{rand}} = \dfrac{1}{N}\sum_{i=1}^N z_i \;,
    \end{align}
    where each $z_i$ takes a value `$+1$' and `$-1$' with an equal probability.
\end{proposition}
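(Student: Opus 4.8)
The plan is to separate the two independent sources of randomness — the reservoir ensemble (and inputs) on the one hand, and the finite-shot measurement statistics on the other — and to chain the concentration result of Theorem~\ref{Thm:HaaResVar} with a total-variation bound on the per-shot distribution. First I would observe that since $O$ is a non-identity Pauli string, its spectrum is $\{+1,-1\}$, so a single measurement returns $z_i \in \{+1,-1\}$ with $\Pro[z_i = +1] = (1 + \expval{O}_t)/2$ and $\Pro[z_i = -1] = (1 - \expval{O}_t)/2$. Thus the true per-shot bias is exactly $\expval{O}_t$, and the whole claim reduces to showing that this bias is negligible and that the induced empirical mean is therefore close to a fair-coin average.

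Next I would invoke Theorem~\ref{Thm:HaaResVar}: for a non-identity Pauli string $\Tr[O] = 0$, so the concentration value is $\mu = 0$, and the variance obeys $\Var[\expval{O}_t] \in \OC(1/(d_a^2 d_h^3))$, i.e.\ $\beta \in \OC(b^{-n})$ for some $b > 1$. Applying Chebyshev's inequality with the judiciously chosen threshold $\delta = b^{-n/4}$ gives $\Pro[|\expval{O}_t| \geq \delta] \leq \beta/\delta^2 \in \OC(b^{-n/2})$. Hence with probability exponentially close to $1$ over the draw of the reservoir (and inputs), the bias satisfies $|\expval{O}_t| \leq b^{-n/4}$, which is itself exponentially small. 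The key point is that this two-parameter choice makes both the failure probability and the residual bias exponentially small simultaneously.

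I would then condition on this high-probability event and compare the two per-shot laws: the actual Bernoulli law $P$ with bias $\expval{O}_t$, and the fair-coin law $Q$ with bias $0$ underlying $\widehat{O}_{\text{rand}}$. Their total-variation distance is exactly $d_{\mathrm{TV}}(P,Q) = |\expval{O}_t|/2 \leq \delta/2$. Since both estimators are the same deterministic function (the empirical mean) of $N$ i.i.d.\ draws, subadditivity of total variation over product measures together with the data-processing inequality yields $d_{\mathrm{TV}}(\widehat{O}, \widehat{O}_{\text{rand}}) \leq N\, d_{\mathrm{TV}}(P,Q) \leq N\delta/2$. Because $N \in \OC(\poly(n))$ while $\delta$ is exponentially small, this distance is exponentially small, so no statistical test can distinguish the two estimators with more than exponentially small advantage.

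The main obstacle is not any single calculation but the careful bookkeeping of quantifiers: one must keep the reservoir/input randomness and the shot randomness logically separate, establish the ``with probability exponentially close to $1$'' clause over the former, and only then argue about the shot statistics conditioned on a typical reservoir. The one inequality requiring explicit care is the tensorization bound $d_{\mathrm{TV}}(P^{\otimes N}, Q^{\otimes N}) \leq N\, d_{\mathrm{TV}}(P,Q)$, which follows from a standard hybrid/coupling argument; everything else is a direct consequence of Theorem~\ref{Thm:HaaResVar} and the $\pm 1$ spectrum of Pauli observables.
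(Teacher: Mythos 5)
Your proposal is correct and follows essentially the same route as the paper's proof (Supplemental Theorem~\ref{sup-thm:stat-indis-qrp-output}): Chebyshev with a fourth-root threshold ($\delta=\beta^{1/4}$ in the paper, $\delta=b^{-n/4}$ in yours) to make both the failure probability and the residual bias exponentially small, followed by the tensorization bound $\|\PC^{\otimes N}-\QC^{\otimes N}\|_1\leq N\|\PC-\QC\|_1$ applied to the per-shot Bernoulli law versus the fair coin. The only cosmetic differences are that you track the bias $\expval{O}_t$ directly rather than the outcome-probability deviations $\Delta_p^{(\pm)}$ (related by a factor of $2$), and you close the argument with an explicit data-processing step where the paper invokes a brief proof by contradiction.
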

\noindent 
The formal version of this statement is presented in Supplemental Theorem~\ref{sup-thm:stat-indis-qrp-output} using hypothesis testing tools. In addition, Supplemental Corollary~\ref{sup-coro:qrp-performance} also formally presents an explicit connection to trainability and unseen model predictions. 
We refer the readers to Appendix~\ref{app:concentration-consequence-summary} for further technical discussions and the proof.

While the above implications arise in the regime of large system size, for \emph{small}
reservoirs one can afford sufficient shots to obtain accurate estimates of observables.
Together with the fact that the prediction variance in Theorem~\ref{Thm:HaaResVar} does not further decrease with time, this leaves a
 window of opportunity in which \emph{an extreme-scrambling QRP on modest system sizes might
offer a polynomial advantage over standard tomography approaches when
processing quantum data.}

\paragraph*{Other concentration sources.} With the detrimental implications of concentration being unfolded, it is crucial to identify other potential factors. While the concentration due to hardware noise will be discussed in Section~\ref{sec:noise}, other sources of exponential concentration of QRP --such as entanglement and global measurement-- can be shown directly using the same proof strategies as for QELM~\cite{xiong2025fundamental} - we can simply view the initial state of QELM as the state of QRP $\big(\rhoa{t}\otimes \rhoh{t-1}\big)$ before the reservoir evolution of the last step $t$.

\begin{figure}[ht]
    \centering
    \includegraphics[width=0.48\textwidth]{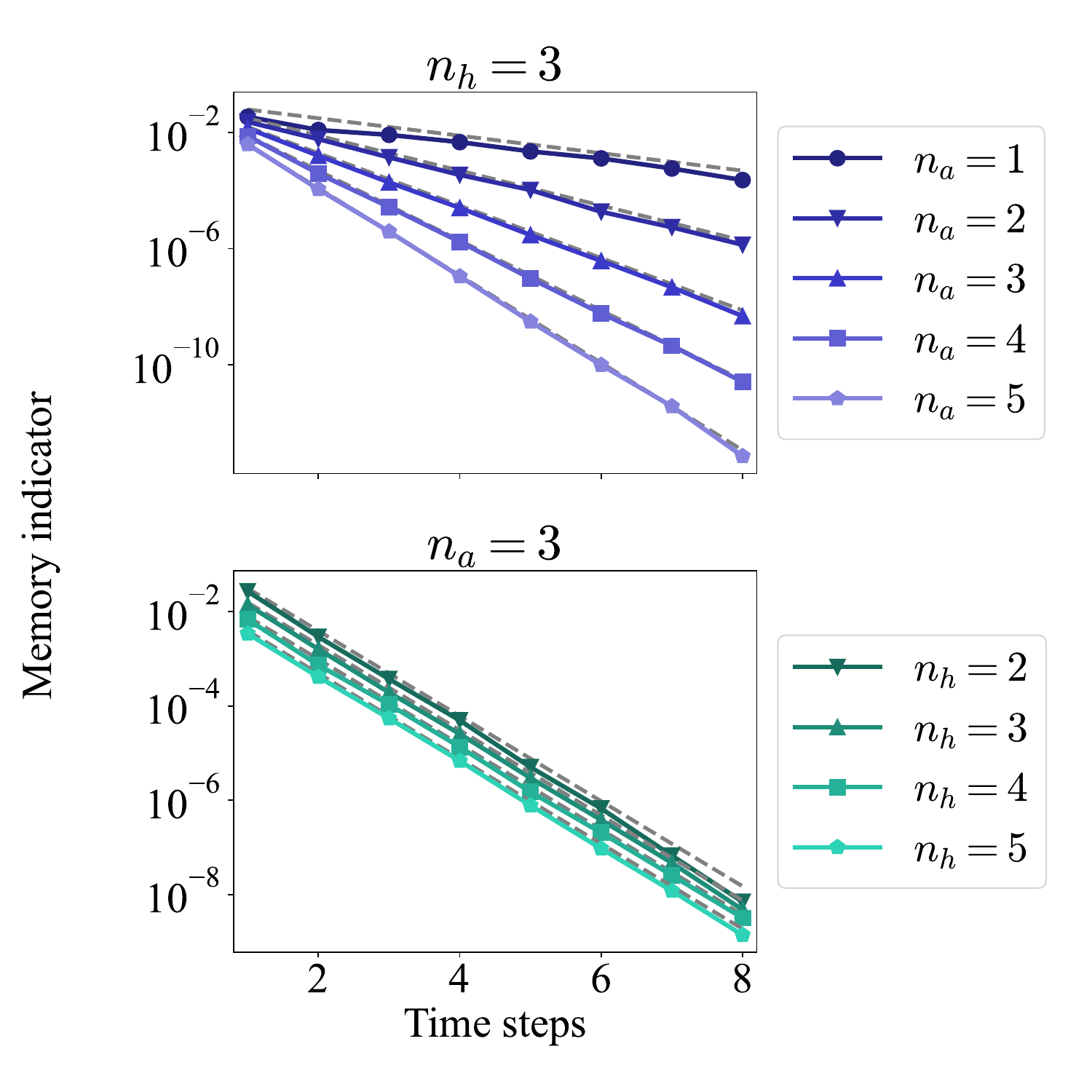}
    
    \caption{\textbf{Memory indicator of the early input.} The averaged dependence of outputs on inputs of $t$ steps early $\Ebb_{U\sim\Ubb_{d_a d_h}}[\MC_a(t;\SC_a,U)]$ are plotted for different qubit numbers. Here both the reservoir and the first input state are Haar random. The grey lines indicate the bounds given in Theorem~\ref{thm:mem_ind_upp_bound}.
    }
    \label{fig:inp_dep}
\end{figure}

\subsubsection{Exponential decay of memory indicators}\label{sec:exp_mem_indicators}
A viable QRP must not only remember useful information about past inputs, but also should eventually forget older inputs and older reservoir states, so that the current outputs are not dominated by the far past history. In classical reservoir computing (RC), these two requirements are commonly encapsulated by the \emph{Fading Memory Property} (FMP) and the \emph{Echo State Property} (ESP)~\cite{grigoryeva2018echo,mujal2021opportunities}. As reviewed in Section~\ref{subsec:mem_prop}, FMP requires that the effect of inputs fed far in the past fades away, while ESP demands that the reservoir’s internal state eventually becomes independent of its initial state. 

To capture these forgetting characteristics of QRP, we define two \emph{memory indicators} that quantify how much the output depends on an early state: one for the \emph{accessible qubits} (input channel) and one for the \emph{hidden qubits} (reservoir’s internal state).

\begin{definition}[Memory of input $\t$] Let us denote the distribution of input states at step $\t$, which we assume is independent of inputs at other time steps, by $\rhoa{\t}\sim \SC(d_a)$. Given a reservoir unitary $U$, we define the memory of the input at step $\tau$ after $t$ time steps have passed as\label{def:input_dep}
\be
    \MC_a(t;\SC_a,U):=\Var_{\rhoa{\t}\sim \SC_a}\left[\expval{O}_{(t+\t)}\right]\,.
\ee
\end{definition}

\begin{definition}[Memory of initial condition] Given the distribution of initial state of the hidden qubits $\rhoh{0}\sim \SC(d_h)$, we define the memory of the initial state at step $t$ for a given reservoir $U$ as\label{def:init_state_dep}
\be
    \MC_h(t;\SC_h,U):=\Var_{\rhoh{0}\sim \SC_h}\left[\expval{O}_{t}\right]\,.
\ee
\end{definition}

If $\MC_h(t)$ is significant, the final output is still sensitive to how one prepared the hidden qubits long ago. Similarly, if $\MC_a(t)$ is large, the output significantly depends on input states fed into the accessible qubits many steps earlier. These memory indicators provide an operational measure of whether a QRP is retaining or forgetting old information at time $t$.

The asymptotic behaviour of the memory indicators $\MC_a$ and $\MC_h$ reflect the FMP and ESP of QRP models, respectively. That is, the memory indicators tend to $0$ at large $t$, such that the FMP and ESP are satisfied. However, in contrast to FMP and ESP, which only characterize the asymptotic properties of memory, our memory indicators we defined also measure the rate of memory decay. We refer readers to Appendix~\ref{appendix:memory} for more formal and comprehensive discussions about the connection of memory indicators and other memory metrics of QRP.

\smallskip

We now state the bounds on these memory indicators (with the proof presented in Appendix~\ref{app:mem_conc}).
\begin{theorem}[Upper bound of memory of arbitrary early input and initial state ensemble]\label{thm_main:mem_ind_upp_bound}
    For extreme scrambling reservoirs and an arbitrary ensemble of inputs $\SC_a$,  we have
\begin{align}
       \Ebb_{U\sim\Ubb(d_a d_h)}\left[\MC_a(t;\SC_a,U)\right]\in\OC\left(\frac{1}{d_h d_a^t}\right)\;.
\end{align}
Similarly, for an arbitrary ensemble of initial reservoir states $\SC_h$, we have 
\begin{align}
    \Ebb_{U\sim\Ubb(d_a d_h)}\left[\MC_h(t;\SC_h,U)\right]\in\OC\left(\frac{1}{d_h d_a^t}\right)\;.
\end{align}
\end{theorem}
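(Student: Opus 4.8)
The plan is to bound both indicators by reducing each to a Haar average of a squared linear response to a traceless perturbation, and then to evaluate that average with the unravelled tensor-diagram machinery already developed for Theorem~\ref{Thm:HaaResVar}. The starting observation is linearity: because the carried-over hidden state $\rhoh{\t-1}$ is independent of the freshly injected $\rhoa{\t}$, the recursion of Eq.~\eqref{eq:qrp-total-state} makes the final state $\rhot{t+\t}$, and hence $\expval{O}_{(t+\t)}$, an affine-linear function of $\rhoa{\t}$. Writing $\expval{O}_{(t+\t)}=\Tr[M\,\rhoa{\t}]$ for an effective operator $M$ that depends on $U$ and on the (fixed) remaining inputs, the constant (identity) part of $M$ drops out of the variance, so with $\Delta_a:=\rhoa{\t}-\Ebb_{\SC_a}[\rhoa{\t}]$ one has $\MC_a(t;\SC_a,U)=\Ebb_{\SC_a}[\,|\Tr[M\,\Delta_a]|^2\,]$, where $\Tr[\Delta_a]=0$ means only the traceless part of $M$ contributes and the fluctuation norm obeys $\Tr[\Delta_a^2]\in\OC(1)$. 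The identical reduction holds for $\MC_h$ with $\rhoh{0}$ and its traceless fluctuation $\Delta_h$ in place of $\rhoa{\t}$ and $\Delta_a$, so it suffices to bound the Haar average of the squared linear response to a single traceless insertion.

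I would then recast $\Ebb_{U}[\MC_a]=\Ebb_{U}\Ebb_{\SC_a}[\expval{O}_{(t+\t)}^2]-\Ebb_{U}[(\Ebb_{\SC_a}\expval{O}_{(t+\t)})^2]$ using the extended-space form of Supplemental Proposition~\ref{sup-prop:qrp-output}, which collapses the first and second moments of the multi-step output into a single standard Haar integral over a high tensor power of $U$ that is reproduced by the design assumption. Subtracting the mean-squared term is exactly what replaces the varied state by its traceless fluctuation inside the diagram, so the task becomes a single high-moment Weingarten integral in which $\Delta_a$ (respectively $\Delta_h$) appears as a traceless operator on one register of the unravelled input.

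Evaluating this integral is where the argument does its work. The decisive structural fact is that $\Tr[\Delta_a]=0$ eliminates the Weingarten terms in which the inserted operator closes into a bare trace loop -- precisely the terms that would survive for a generic trace-one input and dominate -- leaving only contributions in which $\Delta_a$ is propagated nontrivially through the reservoir steps following its insertion. The relevant building block is the single-step contraction, which for any traceless $X$ reads
\[
\Ebb_{U}\big[\mathrm{Tr}\big[(\mathrm{Tr}_{a}[U(\rho_{a}\otimes X)U^{\dagger}])^{2}\big]\big]=\frac{\mathrm{Tr}[\rho_{a}^{2}]\,\mathrm{Tr}[X^{2}]}{d_{a}}\big(1+\OC(1/d_{h})\big),
\]
i.e.\ an on-average Hilbert--Schmidt contraction of the propagated fluctuation by a factor $\OC(1/d_a)$ per reservoir channel. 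Composing across the reservoir applications separating the insertion from the readout gives $\OC(1/d_a^{\,t-1})$, while the final observable overlap contributes $\Tr[O^2]/d^2\in\OC(1/(d_a d_h))$ (with $\Tr[O^2]=d$ for a Pauli string), so that the surviving terms scale as $\OC(1/(d_h\,d_a^{t}))$. The main obstacle is that the \emph{same} fixed $U$ enters every step, so this per-step contraction cannot simply be multiplied: the correlations must be resolved inside the genuine high-moment integral, where one must show that the sequential (non-crossing) permutations dominate and that the crossing permutations linking distinct time steps are subleading. Controlling the associated Weingarten functions, and verifying that the traceless insertion removes every term that would otherwise decay more slowly than $1/(d_h d_a^{t})$, is the technical heart of the proof.

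Finally, the bound for $\MC_h$ follows by the identical route, with $\rhoh{0}$ playing the role of the early input: its traceless fluctuation $\Delta_h$ is threaded through all $t$ iterations and read out at step $t$, and the same Weingarten bookkeeping yields $\Ebb_{U}[\MC_h(t;\SC_h,U)]\in\OC(1/(d_h\,d_a^{t}))$. The symmetry between the two indicators -- an input fluctuation injected at step $\t$ versus an initial-state fluctuation present at step $0$ -- is precisely what makes the two bounds coincide.
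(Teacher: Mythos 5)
Your proposal is correct and takes essentially the same approach as the paper's proof: it unrolls the multi-step output via Supplemental Proposition~\ref{sup-prop:qrp-output} into a single high-moment Haar integral and identifies the dominant sequential permutation $\bigotimes_{\t=1}^{t}\mathrm{SWAP}_{\t,t+\t}$ yielding the $1/(d_h d_a^{t})$ scaling, with your up-front traceless-fluctuation reduction being equivalent to the paper's subtraction of the mean-state ($\Bar{\rho}$) term, which cancels every permutation acting as the identity on the varied register. The one step you explicitly defer --- proving that crossing permutations linking distinct time steps are subleading --- is precisely what the paper executes via Lemma~\ref{lemma:L3_deep_QNN} and its Kronecker-delta counting, so your plan matches the executed proof.
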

Based on this theorem, we further upper bound the variance of the memory indicators using the fact that the QRP's outputs $\expval{O}$ are in $\OC(1)$ (the proof is again presented in Appendix~\ref{app:mem_conc}).
\begin{corollary}[Concentration of memory indicators]\label{cor:var_mem}
    For extreme scrambling reservoirs and an arbitrary ensemble of inputs $\SC_a$, we have
\begin{align}
       \Var_{U\sim\Ubb(d_a d_h)}\left[\MC_a(t;\SC_a,U)\right]\in\OC\left(\frac{1}{d_h d_a^t}\right)\;.
\end{align}
Similarly, for an arbitrary ensemble of initial reservoir states $\SC_h$, we have 
\begin{align}
    \Var_{U\sim\Ubb(d_a d_h)}\left[\MC_h(t;\SC_h,U)\right]\in\OC\left(\frac{1}{d_h d_a^t}\right)\;.
\end{align}
\end{corollary}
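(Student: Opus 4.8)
The plan is to derive Corollary~\ref{cor:var_mem} directly from Theorem~\ref{thm_main:mem_ind_upp_bound}, exploiting the key structural fact that both memory indicators are themselves \emph{variances} and hence non-negative and bounded. First I would observe that for any \emph{fixed} reservoir $U$, the output $\expval{O}_\tau$ is a bounded quantity: for a Pauli-string observable it lies in $[-1,1]$, and more generally it is $\OC(1)$ by the assumption stated in the corollary. By Popoviciu's inequality (the variance of a random variable supported on an interval of length $\ell$ is at most $\ell^2/4$), this yields a \emph{uniform} pointwise bound $0 \leq \MC_a(t;\SC_a,U) \leq C$ and $0 \leq \MC_h(t;\SC_h,U) \leq C$ for a constant $C \in \OC(1)$ that is independent of $U$ and of the number of qubits.

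The second step is the elementary observation that any non-negative random variable $X$ satisfying $0 \leq X \leq C$ obeys $X^2 \leq C X$ pointwise, and therefore
\begin{equation}
\Var_U[X] = \Ebb_U[X^2] - \left(\Ebb_U[X]\right)^2 \leq \Ebb_U[X^2] \leq C\,\Ebb_U[X].
\end{equation}
Applying this with $X = \MC_a(t;\SC_a,U)$ and then $X = \MC_h(t;\SC_h,U)$ reduces the variance over the reservoir ensemble to the corresponding expectation over $U$. Since Theorem~\ref{thm_main:mem_ind_upp_bound} already establishes $\Ebb_U[\MC_a] \in \OC(1/(d_h d_a^t))$ and $\Ebb_U[\MC_h] \in \OC(1/(d_h d_a^t))$, and $C$ is constant, the claimed bounds follow immediately.

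I do not anticipate a genuine technical obstacle, since the heavy lifting — the Haar integration that produces the exponential decay of the expectation — is already carried out in Theorem~\ref{thm_main:mem_ind_upp_bound}. The only point that warrants care is verifying that the boundedness constant $C$ is truly $\OC(1)$ and uniform in $U$: for Pauli observables this is immediate with $C=1$, and for general $\OC(1)$ observables it follows directly from the stated output bound. A secondary subtlety to keep in mind is that the two randomness sources are distinct — the inner variance defining $\MC_a$ (resp. $\MC_h$) is taken over the input ensemble $\SC_a$ (resp. the initial-state ensemble $\SC_h$), whereas the outer variance is taken over the reservoir $U$ — but because the pointwise bound $0 \leq \MC_a,\MC_h \leq C$ holds for \emph{every} realization of $U$, the inequality $X^2 \leq CX$ can be applied before averaging over $U$, and the argument goes through without complication.
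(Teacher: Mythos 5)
Your proposal is correct and follows essentially the same route as the paper: the paper proves exactly your key inequality as a standalone lemma ($\Var_A[g(A)]\leq \max_A[g(A)]\,\Ebb_A[g(A)]$ for non-negative $g$), applies it with $g=\MC_a$ (resp.\ $\MC_h$), bounds $\max_U \MC_a \in \OC(1)$ via the boundedness of $\expval{O}_t$, and invokes Theorem~\ref{thm_main:mem_ind_upp_bound} to conclude. The only (cosmetic) difference is that you obtain the uniform $\OC(1)$ bound on the inner variance via Popoviciu's inequality, whereas the paper reapplies its lemma to $\expval{O}_{\t+t}$ itself to get $\MC_a \leq O_{\rm max}^2$ --- your variant is if anything slightly cleaner, since it does not presuppose positivity of $\expval{O}$.
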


In essence, Theorem~\ref{thm_main:mem_ind_upp_bound} indicates that extreme-scrambling unitaries behave too chaotically to store memories at scale. Early injected states get scrambled in the large Hilbert space and their signals decay exponentially fast. 
By applying Chebychev tail bound, Corollary~\ref{cor:var_mem} further implies that this exponential decay happens with probability close to $1$ for an extreme scrambling reservoir.  
For certain tasks requiring longer memory, this exponential forgetting may prohibitive. 

While such forgetting behaviors are reminiscent of the FMP and ESP used in classical RC theory~\cite{grigoryeva2018echo}, our exponential concentration-based statements are inherently \emph{averaged} over random reservoirs or random inputs, and thus are not necessarily strict analogs of the traditional deterministic properties in formal RC. Nonetheless, they ensure with high probability that typically these reservoirs do not retain obsolete information indefinitely, which is crucial for most time-series processing tasks in QRP. In Appendix~\ref{appendix:memory}, we provide further discussion and formalize the relation of our vanishing memory indicators to FMP-like and ESP-like behaviours (see Supplemental Theorem~\ref{thm:FMP_main} and Supplemental Theorem~\ref{thm:ESP-from-Mh}, respectively). In addition, we also remark on the connection to the universality of fading memory maps.
\subsection{Noisy reservoir dynamics}\label{sec:noise}
Here we study the impact of hardware noise in two different settings: (i) unital noise where local Pauli noise acts between different noiseless reservoir dynamics, and (ii) noisy scrambling reservoir where the reservoir itself consists of local scrambling unitaries intervened with arbitrary local non-unitary noise.

\subsubsection{Unital noise}
To study how the noise between each time steps of noiseless reservoir evolutions affects the  memory, we consider the state after reservoir evolution as
\begin{equation}
    \rhot{\t+1}=\NC\left(U\left(\rhoa{\t+1}\otimes\rhoh{\t}\right)U^\dag\right)\,,\label{Eq:noisy_res_evo}
\end{equation}
where $\NC(\cdot)$ is a noise channel which implements a tensor product of local Pauli noise on each individual qubits i.e., $\NC = \NC_1 \otimes...\otimes \NC_{n_a+n_h}$ with 
\begin{align}\label{Eq:NOise_factor}
    \NC_j(\sg)=q_\sg \sg \;,
\end{align}
acting on the $j^{\rm th}$ qubit. Here, $\sg \in \{X, Y,Z\}$ is one of the non-trivial single-qubit Pauli operators and $q_\sg$ is an associated noise coefficient. We further define the characteristic noise parameter $q := \max \{|q_X|, |q_Y|, |q_Z|\}$. 

With the above setting, the following theorem holds.
\begin{figure}[ht]
    \centering
    \includegraphics[width=0.45\textwidth]{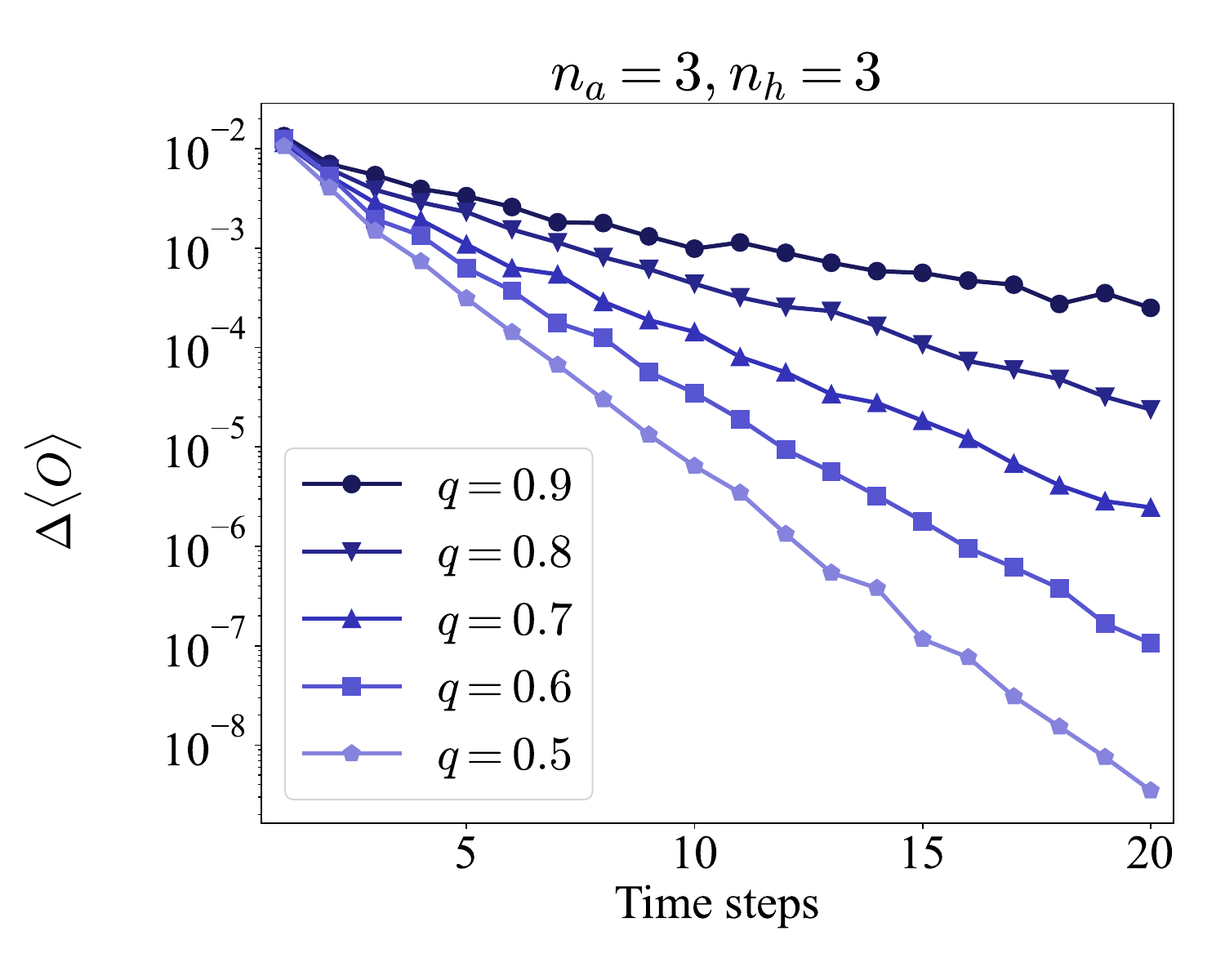}
    
    \caption{\textbf{Noisy reservoir induced concentration.} The averaged deviation of two outputs of QRP with randomly sampled pairs of initial states and first inputs are plotted for different noise factors. 
    }
    \label{fig:conc_noisy_reservoir}
\end{figure}

\begin{theorem}[{Memory erasure induced by unital noise}]\label{thm:unital_noise}
Consider a QRP with noisy layers between noiseless reservoir evolutions as defined in Eq.~\eqref{Eq:NOise_factor}, with $q<1$. Then, for time step $t$, the expectation value of the observable will become independent of the initial condition and early inputs, exponentially in number of steps $t$. Formally, given two initial states of hidden qubits $\rhoh{0}$ and $\sgh{0}$, given two inputs $\rhoa{1}$ and $\sga{1}$ at the first time step, given the same series of input states $[\rhoa{\tau}]_{\tau=2}^t=[\sga{\tau}]_{\tau=2}^t$ from the second step, then the deviation of the corresponding outputs
    \begin{align}
       \Delta O\leq & \norm{O}_\infty \sqrt{2\ln{2}q^{b(t-1)}} \nonumber \\
        &\qquad\cdot\bigg(S_2\left(\rhoa{1}\lVert\sga{1}\right)+S_2\left(\rhoh{0}\lVert\sgh{0}\right)\bigg)^{1/2}\\
        \in&\OC(\rm{exp(-t)})\,,
    \end{align}
    where $\Delta O=|\expval{O}^{(\rhoh{0},\rhoa{1})}_t - \expval{O}^{(\sgh{0},\sga{1})}_t|$, $b= 1/\ln{2}$ and $S_2(\cdot\lVert\cdot)$ denotes the sandwiched 2-Rényi relative entropy.
\end{theorem}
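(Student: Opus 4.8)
The plan is to track a single distinguishability measure between the two runs and show it contracts geometrically in the number of noisy layers. Write $\rhot{\tau}$ and $\sgt{\tau}$ for the time-$\tau$ states of the two trajectories, started from $(\rhoh{0},\rhoa{1})$ and $(\sgh{0},\sga{1})$ and driven by the common inputs $[\rhoa{\tau}]_{\tau\ge 2}$. First I would reduce the observable gap to a trace distance by H\"older, $\Delta O\le \norm{O}_\infty\,\lVert \rhot{t}-\sgt{t}\rVert_1$, and then pass from trace distance to relative entropy. Quantum Pinsker together with the monotonicity of the sandwiched R\'enyi divergences in their order, $S_1\le S_2$, gives $\lVert \rhot{t}-\sgt{t}\rVert_1\le \sqrt{2\ln 2\, S_2(\rhot{t}\lVert \sgt{t})}$ when $S_2$ is measured in bits; this is exactly the origin of the prefactor $\sqrt{2\ln 2}$ in the statement. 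It then remains to bound $S_2(\rhot{t}\lVert \sgt{t})$ by $q^{b(t-1)}\bigl(S_2(\rhoa{1}\lVert \sga{1})+S_2(\rhoh{0}\lVert \sgh{0})\bigr)$.

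The core is a step-to-step recursion built from three structural properties of $S_2$. At step $\tau\ge 2$ the two states are $\NC(U(\rhoa{\tau}\otimes \rhoh{\tau-1})U^\dagger)$ and $\NC(U(\rhoa{\tau}\otimes \sgh{\tau-1})U^\dagger)$ with the \emph{same} injected input $\rhoa{\tau}$. Invariance of $S_2$ under the unitary conjugation, additivity over tensor products (so the identical factor contributes zero, since $S_2(\rhoa{\tau}\lVert \rhoa{\tau})=0$), and the data-processing inequality for the partial trace $\mathrm{Tr}_a$ together reduce the comparison at step $\tau$ to that of the reduced hidden states at step $\tau-1$. Crucially, these three facts cost no dimensional factors: this is precisely why a data-processing-monotone entropic quantity must be used rather than a bare Schatten norm, for which the partial trace would incur a spurious $d_a$ blow-up. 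The base case is handled the same way: $S_2(\rhot{1}\lVert \sgt{1})\le S_2(\rhoa{1}\otimes\rhoh{0}\lVert \sga{1}\otimes\sgh{0})=S_2(\rhoa{1}\lVert \sga{1})+S_2(\rhoh{0}\lVert \sgh{0})$ by data processing and additivity, producing the sum of relative entropies on the right-hand side.

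The engine of the decay is the local Pauli noise. Expanding any difference of states in the Pauli basis, only the non-identity strings survive (the difference is traceless), and each such string is damped by a factor of modulus at most $q$ under $\NC$; equivalently, the $\chi^2$-divergence $2^{S_2}-1$ contracts by at least a fixed factor with every noisy layer. Accumulating this over the $t-1$ noisy layers acting between steps $2$ and $t$ yields the geometric factor, and converting the $\chi^2$-level contraction back into an additive $S_2$ bound via $\log_2(1+x)\le x/\ln 2$ and the bits--nats conversion is what fixes the constant $b=1/\ln 2$. I expect this last conversion to be the main obstacle: the multiplicative contraction coefficient of $S_2$ itself is \emph{not} uniformly below one (it tends to $1$ when the two states are nearly perfectly distinguishable), so the contraction must be argued at the level of $\chi^2$ (or the collision/purity quantity) and only then repackaged as a bound on $S_2$, with the constants tracked carefully through Pinsker. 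Assembling the recursion, the base case, and the Pinsker step gives the claimed exponential-in-$t$ decay, and the $\OC(\exp(-t))$ statement follows since $q<1$.
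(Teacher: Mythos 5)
Your proposal follows essentially the same route as the paper's proof: H\"older to reduce $\Delta O$ to $\norm{O}_\infty \norm{\rhot{t}-\sgt{t}}_1$, Pinsker combined with $S\leq S_2$ to produce the $\sqrt{2\ln 2}$ prefactor, then a per-step recursion using unitary invariance, additivity of $S_2$ over the identical injected factor $\rhoa{\tau}$, data processing under $\Tr_a$, and a $q^b$ contraction per noisy layer, with the base case yielding $S_2\left(\rhoa{1}\lVert\sga{1}\right)+S_2\left(\rhoh{0}\lVert\sgh{0}\right)$ and $q^{b(t-1)}$ after $t-1$ layers. The only difference is that the paper imports the per-layer contraction with $b=1/\ln 2$ wholesale as a cited lemma (Lemma~\ref{lem:wang}, from Refs.~\cite{wang2020noise,hirche2020contraction}), whereas you sketch re-deriving it via $\chi^2$/Pauli-damping and conversion --- the obstacle you correctly flag there (that the multiplicative $S_2$ contraction for arbitrary state pairs is the delicate point) is precisely the content of that cited result, so your argument is sound.
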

This upper bound indicates that the difference between outputs from different initial states shrinks exponentially with the number of time steps. Thus at long times the QRP is effectively insensitive to the initial state. Numerics in Fig.~\ref{fig:conc_noisy_reservoir} demonstrates these effects. The proof of this Theorem is provided in Appendix~\ref{sec:conc_noisy_res_proof}.

\begin{figure*}
    \centering
    \includegraphics[width=0.8\textwidth]{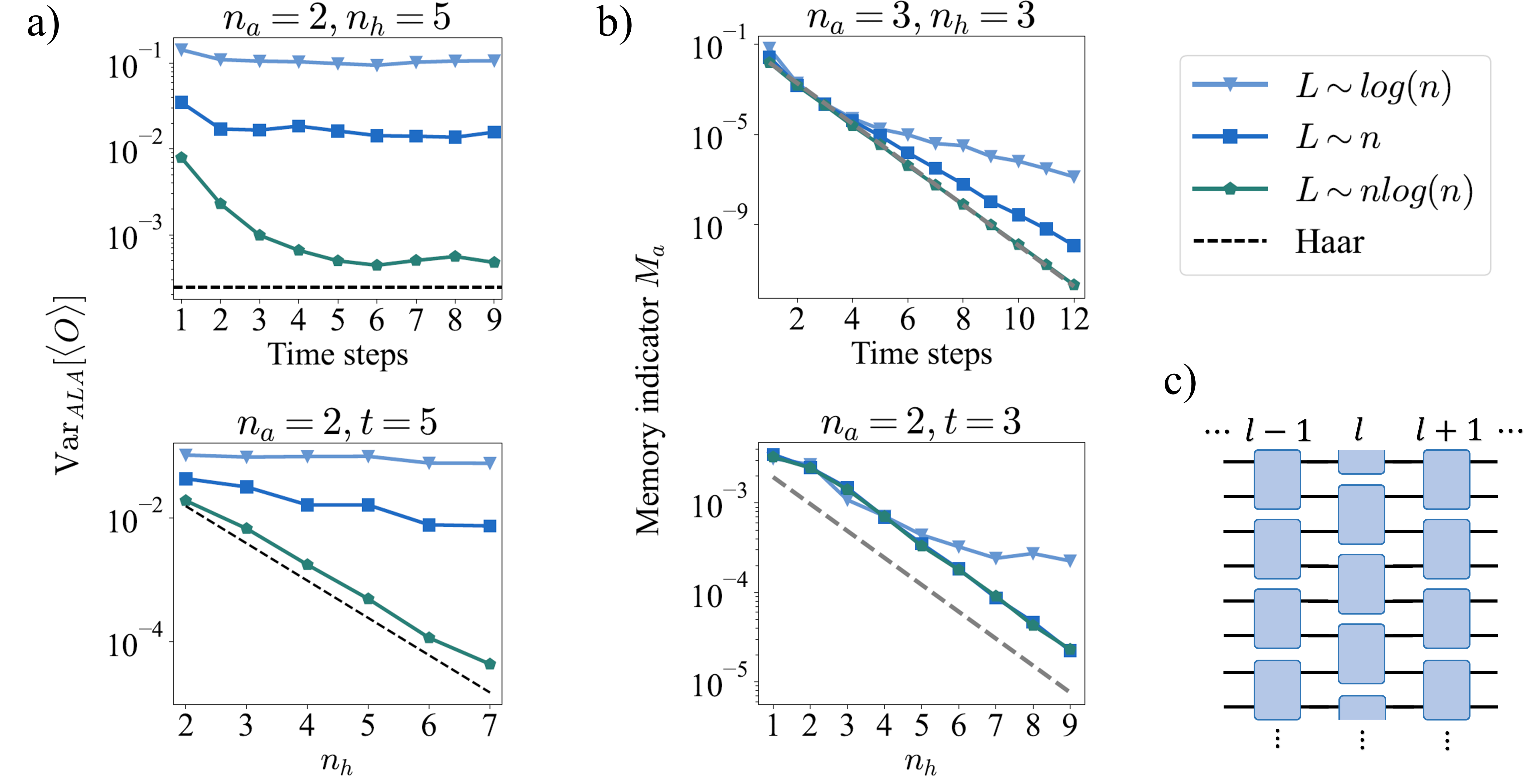}
    \caption{\textbf{Absence of exponential concentration in QRP with alternating layered reservoirs.} Here we consider reservoir consisting of $L$ such alternating layers, as illustrated in \textbf{c)}. Each layer is given by tensor products of 2-local blocks of neighbouring qubits, and in the next layer each qubit is connected with another neighbouring qubit. The blocks are independent and randomly sampled from Haar measure. The variance of output over reservoirs and the memory indicator $M_a$ for QRP with alternating layers are plotted for different depth, and compare with Haar random reservoir, in \textbf{a)} and \textbf{b)}. In the top sub figures, the qubit numbers are fixed and the variance and memory indicator are plotted against number of time step. In the bottom one, we fix instead the number of accessible qubits and the time step, to plot both quantities against the number of hidden qubits. 
    }
\label{fig:ala_all}
\end{figure*}

On another note, when input data are classical, noise can act also during the input-data encoding stage. We proved in Appendix~\ref{sec:conc_noisy_enc} that noisy encoding can solely drive exponential concentration in QRP's outputs, extending the noisy-encoding bounds previously obtained for QELMs~\cite{xiong2025fundamental}.

\subsubsection{Non-unital noise}
In this section, we investigate how \emph{arbitrary} local noise -- possibly significantly deviating from the Pauli noise model considered above -- {erase memory of early inputs and reservoir's initial state}. In particular, we extend our analysis to include non-unital noise, which exhibits markedly different behaviors. Unlike unital noise, non-unital noise does not leave the maximally mixed state invariant. Instead, it can reduce the entropy of a quantum system, thereby providing a potential resource for computation. This property has been exploited in various contexts, including fault-tolerant quantum computing~\cite{ben2013quantum} and QRP~\cite{sannia2024dissipation}

In the case of non-unital noise it is not possible to derive bounds for generic reservoirs because with careful engineering (essentially sneaking in error correction) arbitrary circuits can be realized despite non-unital noise~\cite{ben2013quantum}. Instead, we here we focus on \emph{typical} noisy reservoirs. Specifically, following Refs.~\cite{mele2024noise, angrisani2025simulating}, we model the reservoir $\mathcal{U}$ as sequence of $L$ random unitary layers interleaved by arbitrary local noise $\mathcal{N}^{\otimes n}$:
\begin{align}\label{Eq:non-unital-noisy reservoir}
    \mathcal{U} =  \mathcal{U}_L \circ \mathcal{N}^{\otimes n} \circ \mathcal{U}_{L-1}\circ \dots \circ \mathcal{N}^{\otimes n}  \circ \mathcal{U}_1 \circ \mathcal{N}^{\otimes n} .
\end{align}
We assume that each $\mathcal{U}_j$ is sampled from a ``locally scrambling'' distribution~\cite{caro2022outofdistribution, huang2023learning, angrisani2024classically}, e.g. $\mathcal{U}_j$ can be Brickwall layer with local Haar-random gates, and we let the depth $L$ the be at least linear in system size.

One intuitively expects noisy QRP, even with non-unital noise, to be insensitive to its early states due to the noise destroying their information. The following theorem formalizes this notion.
\begin{theorem}[{Memory erasure induced by non-unital noise}]\label{thm:non_unital_noise_temp}
Given a noisy reservoir $\mathcal{U}$ as in Eq.~\eqref{Eq:non-unital-noisy reservoir}, let $\mathcal{N}$ be an arbitrary non-unitary channel with constant noise rate and let $L\geq c \cdot (n_a + n_h)$ for a sufficiently large constant $c>0$.
Given two initial states of hidden qubits $\rhoh{0}$ and $\sgh{0}$, given two inputs $\rhoa{1}$ and $\sga{1}$ at the first time step, given the same series of input states $[\rhoa{\tau}]_{\tau=2}^t=[\sga{\tau}]_{\tau=2}^t$ from the second step, the corresponding output states satisfies
\begin{align}
    \norm{\rhot{t} - \sgt{t}}_1 \in \exp\left(-\Omega((n_a+n_h)t)\right),
\end{align}
with probability at least $1 - \exp(-\Omega(n_a+n_h))$.
In particular, for any observable $O$ it holds that
\begin{align}
    \Delta O \in \exp\left(-\Omega((n_a+n_h)t)\right) \norm{O}_\infty,
\end{align}
with probability at least $1 - \exp(-\Omega(n_a+n_h))$, where we define $\Delta O=|\expval{O}^{(\rhoh{0},\rhoa{1})}_t - \expval{O}^{(\sgh{0},\sga{1})}_t|$.
\end{theorem}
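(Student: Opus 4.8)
The plan is to reduce the multi-step trajectory to a single contraction recursion driven by one structural fact about deep noisy random circuits: with high probability over the draw of $\mathcal{U}$, the reservoir channel is \emph{strongly contractive} in trace distance. Concretely, the key imported ingredient (essentially the content of Refs.~\cite{mele2024noise,angrisani2025simulating}) is that for depth $L \geq c(n_a+n_h)$ and arbitrary constant-rate local noise $\mathcal{N}$, there is a fixed ``anchor'' state $\Phi$ with $\norm{\mathcal{U}(\rho) - \Phi}_1 \leq \exp(-\Omega(n_a+n_h))$ simultaneously for \emph{all} input states $\rho$, with probability at least $1 - \exp(-\Omega(n_a+n_h))$ over the random layers. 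I would first restate this in the equivalent contraction form: since any traceless Hermitian $\delta$ can be written as $\delta = \tfrac{1}{2}\norm{\delta}_1(\rho_+ - \rho_-)$ with $\rho_\pm$ normalised states built from its positive and negative parts, linearity of $\mathcal{U}$ gives $\norm{\mathcal{U}(\delta)}_1 \leq \gamma \norm{\delta}_1$ with $\gamma = \exp(-\Omega(n_a+n_h))$. This form is what the recursion needs, and crucially it lives on the \emph{traceless} sector, which is preserved by every trace-preserving map; this is precisely why non-unitality of $\mathcal{N}$ causes no difficulty, since the non-unital ``drift'' resides in the identity component that never appears in a difference of two states.

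Second, I would set up the recursion on $\Delta_\tau := \norm{\rhot{\tau} - \sgt{\tau}}_1$. At the first step both trajectories feed genuine states into $\mathcal{U}$, so $\Delta_1 = \norm{\mathcal{U}(\rhoa{1}\otimes\rhoh{0}) - \mathcal{U}(\sga{1}\otimes\sgh{0})}_1 \leq 2\gamma$ directly from the contraction, as the two inputs differ by at most $2$ in trace norm. For $\tau \geq 2$ the inputs agree, $\rhoa{\tau} = \sga{\tau}$, so the two fed states differ only on the hidden register: their difference is $\rhoa{\tau}\otimes(\rhoh{\tau-1} - \sgh{\tau-1})$, which is traceless with $\norm{\rhoa{\tau}\otimes(\rhoh{\tau-1}-\sgh{\tau-1})}_1 = \norm{\rhoh{\tau-1}-\sgh{\tau-1}}_1$. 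Using $\rhoh{\tau-1}-\sgh{\tau-1} = \mathrm{Tr}_a[\rhot{\tau-1}-\sgt{\tau-1}]$ together with contractivity of the partial trace, $\norm{\rhoh{\tau-1}-\sgh{\tau-1}}_1 \leq \Delta_{\tau-1}$, the contraction lemma yields $\Delta_\tau \leq \gamma\,\Delta_{\tau-1}$. Unrolling gives $\Delta_t \leq \gamma^{t-1}\Delta_1 \leq 2\gamma^t = \exp(-\Omega((n_a+n_h)t))$, which is the first claimed bound; the observable statement then follows from H\"older, $\Delta O = |\mathrm{Tr}[O(\rhot{t}-\sgt{t})]| \leq \norm{O}_\infty \Delta_t$. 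Since the same fixed reservoir $\mathcal{U}$ is reused at every time step, only one good event is required, so the whole argument holds with probability $1 - \exp(-\Omega(n_a+n_h))$.

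The main obstacle is the contraction lemma itself, i.e.\ certifying $\gamma = \exp(-\Omega(n_a+n_h))$ for arbitrary non-unital local noise at linear depth. I would establish it (or cite it verbatim from Refs.~\cite{mele2024noise,angrisani2025simulating}) through a Heisenberg-picture Pauli-path argument: locally scrambling layers drive the Pauli-weight distribution of any traceless input operator toward high weight, while each local noise layer contracts a weight-$w$ Pauli by a factor exponential in $w$; at depth linear in $n_a+n_h$ the typical weight inside the light cone is $\Omega(n_a+n_h)$, producing the stated contraction, with the ``high probability'' coming from concentration of the weight growth over the random gates. The delicate points to verify are that the bound is \emph{uniform} over all input states for a single draw (so it can be reused across all $t$ steps and across the varying accessible inputs), and that the non-unital part of $\mathcal{N}$ does not spoil the weight-decay estimate --- which, as noted above, is guaranteed because the recursion only ever propagates traceless difference operators. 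Once this lemma is in hand, the remainder is the elementary contraction recursion above.
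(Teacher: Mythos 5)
Your proposal is correct and follows essentially the same route as the paper: a single high-probability, input-uniform trace-norm contraction of the fixed noisy reservoir on traceless operators (your anchor-state formulation is equivalent to the paper's conditioning on the event $\{\forall H:\norm{\mathcal{U}(H)}_1\le\epsilon\norm{H}_1\}$), followed by the identical recursion exploiting $\rhoa{\tau}=\sga{\tau}$ for $\tau\ge 2$, multiplicativity of the trace norm under tensoring with a state, contractivity of the partial trace, and H\"older's inequality for the observable bound. The one step you delegate to the cited references --- upgrading the averaged Heisenberg-picture 2-norm contraction to a statement uniform over all (reservoir-dependent) traceless operators on a single good event --- is precisely what the paper executes via a Pauli-basis expansion, Minkowski's inequality, and Markov's inequality in its Supplemental Corollary, so you correctly identified and handled the crux.
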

Our bounds imply that for reservoirs with arbitrary local noise, possibly non-unital, the dependence of reservoir states on early input states and the initial state of hidden qubits vanishes exponentially in number of time steps. 
This leads to FMP- and ESP-like characteristics, such that QRP forgets the influence of the distant past  and remains responsive to recent inputs~{(see Appendix~\ref{appendix:memory})}.
Nevertheless, if the decay of memory indicator is too strong the capability of QRP for processing long time series will be limited. 
While previous results on concentration (and lack thereof) for non-unital dynamics rely on i.i.d.\ assumptions, our setting involves repeated application of the same noisy reservoir. To establish our result, we extend the analytical framework developed in Refs.\ \cite{mele2024noise, angrisani2025simulating}.

\subsection{Practical reservoirs with less concentration}
In this section we analyze more moderately scrambling reservoirs. In particular, we consider two types of physical reservoirs with limited dynamics: (i). 2-local alternating layered random circuits and (ii). chaotic Ising model-- that may avoid exponential concentration. We numerically compare the memory decay of both types of reservoirs with different parameters and discuss their useful regimes.

\paragraph*{(i).~2-local random circuits with alternating layered structure:}
Here we consider an alternating layered structure of quantum circuits. In each layer the qubits are connected with a neighbouring qubit, and in the next layer they are connected with another neighbouring qubit. In terms of circuit diagram, as shown in panel c) of Fig.~\ref{fig:ala_all}, each layer consists of 2-local circuit blocks, that are independently and randomly sampled from Haar measure. The dynamics of the reservoir can be controlled by number of alternating layers. A variational quantum circuit with the alternating layered Ansatz does not exhibit exponential concentration if the number of layers $L$ is in $\OC(\rm{log}(n))$~\cite{cerezo2020cost}. 

As illustrated in Fig.~\ref{fig:ala_all}, we numerically demonstrate that for alternating layered reservoirs with low depth $L\in\OC(\rm{log}(n))$, the exponential concentration of outputs and memory indicator in qubit number can be avoided. However, for reservoirs with more than linearly many layers, the reservoir dynamics approximates Haar random and the memory indicator becomes exponentially concentrated in system size.

\paragraph*{(ii).~Chaotic Ising model:}
Ising model is a typical class of physical platform for QRP~\cite{mujal2021opportunities,martinez2020information,domingo2023optimal,palacios2024role,domingo2022optimal,kobayashi2023quantum,kobayashi2024quantum,Kutvonen2020optimizing}. Here we focus on a chaotic Ising model, rather than an integrable Ising model, as these have been found to be more expressive~\cite{xiong2025fundamental}.
Here we numerically study the exponential concentration of QRP with chaotic Ising model as reservoir. 
\begin{figure}[ht]
    \centering
    \includegraphics[width=0.48\textwidth]{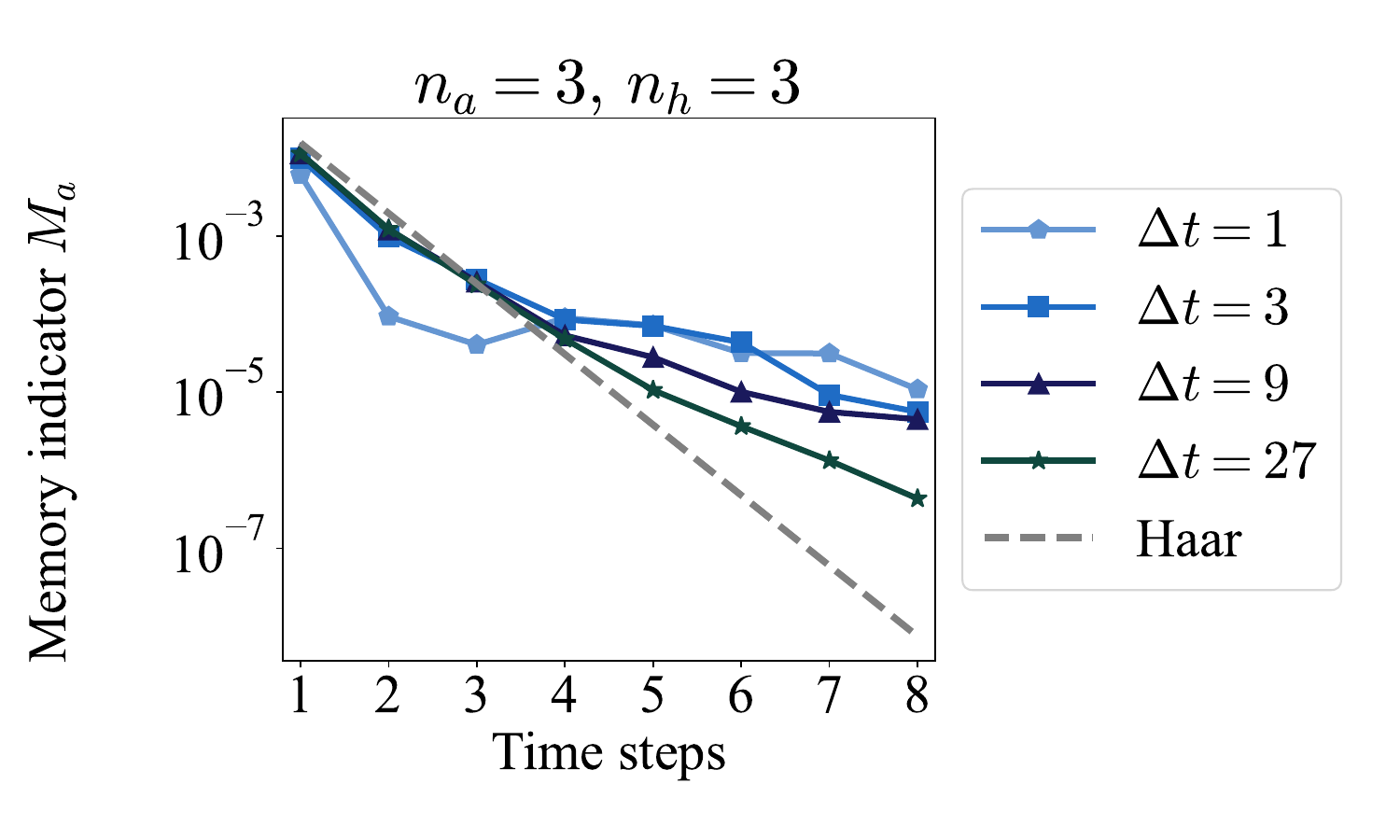}
    
    \caption{\textbf{Memory indicator of Ising reservoirs.} The dependence $\MC_a$ of outputs on inputs are plotted against number of time steps for different reservoirs. We consider reservoirs with different dynamics --sampled from Haar random unitaries and Ising models with different evolution times $\Delta t$, where the Hamiltonian is $H_{\rm{Ising}}=J\sum_{i=1}^{n-1}Z_i{}Z_{i+1}+B_z\sum_i^{n}Z_i+B_x\sum_i^n{}X_i\,,$ with $J=-1,\ B_x=0.7,\ B_z=1.5\,$.
    }
    \label{fig:inp_dep_comp}
\end{figure}
As shown in Fig.~\ref{fig:inp_dep_comp}, when the evolution time is increased, the memory indicator decays faster with the number of time steps and becomes closer to the behavior of a Haar random reservoir.

\section{Discussion}
In this work, we analytically study the scalability and memory properties of scrambling quantum systems used for temporal information processing, by considering QRP as a general framework, with an extreme scrambling reservoir modeled as a high-order unitary design. 
We introduce a tensor-diagram technique that unrolls the sequence of identical reservoir dynamics, casting it into a single-step higher-order moment integration. Our scalability analysis shows that QRP outputs concentrate exponentially with increasing reservoir size, hindering the use of extreme scramblers in the large-system size limit. 

However, once the concentration saturates it no longer worsens with increasing iterations. 
This provides a small window of opportunity in which QRP might yield a polynomial advantage, compared to standard tomography, on temporal tasks that process quantum data in a sufficiently small number of qubits.
In particular, the sampling complexity of performing quantum tomography followed by classical post-processing could be saved by processing the states directly within the QRP framework. This could be specifically relevant for input state ensembles with high rank, e.g. states generated from complex many-body systems, since
tomography requires $\Omega(\mathrm{rank}(\rho)d/\epsilon)$ samples~\cite{yuen2022improved}. These insights thus suggest an opportunity to utilize moderate-sized noisy quantum systems as a substrate for physical computing~\cite{preskill2018quantum, zimboras2025myths} and hint towards the possibility of achieving practical information processing advantages~\cite{senanian2024microwave}.

One potential approach to successfully scale QRPs to larger problem sizes is to incorporate symmetry into the construction of QRP. An analytical analysis on a single reservoir iteration has shown that working in the low dimensional symmetric subspace avoids exponential concentration~\cite{sannia2025ExpConcQRC}, in direct analogy with the case for variational quantum algorithms~\cite{fontana2023theadjoint, ragone2023lie}. Proving this for multiple reservoir iterations requires integrating over the symmetric group with the unrolled form of the QRP output~\cite{xiong2025QRPSymmetry}. Nonetheless, it is important to note that variational QML models which provably avoid exponential concentration are susceptible to classical surrogatability (i.e., they can be classical simulated after performing some form of state tomography)~\cite{cerezo2023does}. A similar relation is also expected in QRP.

In addition, we have analyzed the memory properties of extreme-scrambling reservoirs. By introducing \emph{memory indicators} to capture the characteristics reminiscent of Echo State
Property (ESP) and Fading Memory Property (FMP) \emph{on average}, we applied the
tensor-diagram technique  to show that these indicators indeed decay
exponentially with the time step. This rapid decay reflects the expected
forgetting behaviour associated with ESP and FMP. However, taken together with the scalability barrier, our results provide an explicit
example of a QRP that satisfies averaged ESP- and FMP-like characteristics yet remains practically non-operational (input-independent and poor generalization) at large scale.  Thus, achieving these memory properties alone is insufficient, and scalable QRP design requires a broader set of considerations.

Lastly, there is still much to explore when it comes to moderate scramblers as reservoirs. Our numerical studies suggest that it is possible to have scrambling reservoirs without the concentration, and provide a direction towards designing the reservoir with scalability guarantees. 
This same relation is also expected in QRP although the technical details remain to be pinned down. Making theoretical progress with moderate reservoirs to understand its scalability and more generally other QRP-related properties probably requires us to come up with new theoretical tools combining the existing analytical frameworks in the field of quantum many-body physics such as Eigenstate Thermalization Hypothesis~\cite{mori2018thermalization} together with those in the field of machine learning.

\medskip

\textit{Related work:} During the preparation of manuscript, we became aware of a related independent work~\cite{sannia2025ExpConcQRC} which also studied exponential concentration in QRP but with a different focus. While our analytical results have focussed on general QRP models with time correlated reservoir dynamics, Ref.~\cite{sannia2025ExpConcQRC}'s analytical results focus on a single reservoir iteration and explores the effect of symmetry.

\section{Acknowledgements}
The authors are indebted to Marco Cerezo, who showed us how to handle time correlated Haar random unitaries while waiting for tacos. 
The authors also gratefully acknowledge Rodrigo Martínez-Peña for valuable feedback on an earlier draft of this paper, and further acknowledge interesting discussions with Matthew Duschenes, Giorgio Facelli, Gian Luca Giorgi, Daniel Stilck França, Jonas Landman, Martin Larocca, Sacha Lerch, Hela Mhiri, Juan-Pablo Ortega, Barry Sanders, Mehrad Sahebi, Antonio Sannia, and Roberta Zambrini.

WX acknowledges the funding support from NCCR SPIN, a National Centre of Competence in Research, funded by the Swiss National Science Foundation (grant number 225153). ZH, AA, and ST acknowledge support from the Sandoz Family Foundation-Monique de Meuron program for Academic Promotion. TC and ST acknowledge the funding support from the NSRF via the Program Management Unit for Human Resources \& Institutional Development, Research and Innovation [grant number B39G680007]. ST also acknowledges the grants for development of new faculty staff, Ratchadaphiseksomphot Fund, Chulalongkorn University [grant number 3230120336 DNS\_68\_052\_2300\_012], as well as funding from National Research Council of Thailand
(NRCT) [grant number N42A680126].

\bibliography{quantum.bib,otherbib.bib}
\clearpage
\setcounter{theorem}{0}
\setcounter{proposition}{0}
\setcounter{corollary}{0}

\onecolumngrid

\renewcommand\partname{} 
\appendix
{\huge \textbf{Appendices}}
\part{}
\parttoc 

\section{Extreme scrambling reservoir dynamics}\label{app:Haar_proof}

In this section, we provide technical details regarding  the impact of the extreme scrambling reservoirs on the scalability and memory properties of QRP, including the proofs of Theorem~\ref{Thm:HaaResVar} and Theorem~\ref{thm_main:mem_ind_upp_bound}. 

\medskip

\noindent\textbf{Setting and technical challenge.} For the input time series of length $t$, we model the extreme scrambling reservoir as an instance drawn from a $2t$-design unitary ensemble of dimensions $d_a d_h$. Formally, we define the ensemble of the $t$-order extreme scrambling reservoirs $\UC_{\SC}^{(t)}(d_ad_h)$ such that it satisfies
\begin{align}\label{def:extreme-scrambling-u}
    \int_{U \sim \UC_{\SC}^{(t)}(d_ad_h)} \left(U \otimes U^\dagger\right)^{\otimes 2t} d\mu(U) = \int_{U \sim \Ubb(d_nd_h)} \left(U \otimes U^\dagger\right)^{\otimes 2t} d\mu_H(U) \;,
\end{align}
where $d\mu(U)$ and $d\mu_H(U)$ denote the measure associated with $\UC_{\SC}^{(t)}(d_ad_h)$ and the Haar measure, respectively.

Modelling the extreme reservoirs with Eq.~\eqref{def:extreme-scrambling-u} enables the Haar integral tools to analytically analyse the QRP dynamics. Nevertheless, the technical challenge stands due to the temporal correlations of the reservoir dynamics at different time steps. To see this, recall that the recursive relations of the QRP states at two consecutive time steps in Eq.~\eqref{eq:qrp-total-state}
\begin{align}\label{eq:app-qrp-total-state}
          \rhot{\tau}
        =
         U\bigl(\rhoa{\tau} \,\otimes\, \rhoh{\tau-1}\bigr)\,
         U^\dagger \;.
\end{align}
In particular, for all time steps, the reservoir dynamic $U$ is identical. This recursive temporal form hinders us from directly applying the standard Haar integral formulae.

\medskip

\noindent\textbf{Overview of our technical contributions.} To tackle the above challenge, we express the QRP output in an alternative analytical form as a single time-step on a larger dimension which is achieved by systematically unrolling multiple time step with the tensor-diagram approach, as shown in Appendix~\ref{app:unroll}. 
Then, we perform high-order Haar integration and carefully keep track of dominant terms for asymptotic scaling analysis of QRP to show exponential concentration of the outputs and exponential memory vanishing. In particular, the subsections are structured as:
\begin{itemize}
    \item In Appendix~\ref{app:prelim-haar}, we provide existing theoretical results necessary to perform high-order Haar integration.
    \item In Appendix~\ref{app:unroll}, we present the alternative form of QRP output unrolled as a single reservoir iteration on an extended space in Supplemental Proposition~\ref{sup-prop:qrp-output} together with the detailed proof.
    \item In Appendix~\ref{app:conc_output}, we prove Theorem~\ref{Thm:HaaResVar} which shows the vanishing variance of QRP output.
    \item In Appendix~\ref{app:mem_conc}, we prove Theorem~\ref{thm_main:mem_ind_upp_bound} which shows the exponential vanishing of memory in QPP.
\end{itemize}

\subsection{Preliminaries}\label{app:prelim-haar}

We present the preliminary results which allow us to perform high moment Haar integration. This is a necessary tool for identifying dominant contributions for scalability and memory analysis.

\begin{lemma}[Leading term of $k$-th moment of Haar integral (Lemma 3 of Ref.~\cite{garcia2023deep})]\label{lemma:L3_deep_QNN} Let $X$ be an operator in $\BC(\HC^{\otimes k})$, the twirl of $X$ over $\Ubb(d)$, defined as
\begin{equation}
    \mathcal{T}_{\Ubb(d)}^{(k)}[X]:=\int_{\Ubb(d)}d\mu(U)U^{\ot k}X(U^\dag)^{\ot k}\,,
\end{equation}
equals
\begin{equation}
    \mathcal{T}_{\Ubb(d)}^{(k)}[X]=\frac{1}{d^k}\sum_{\sg\in P_d(S_k)} \Tr[X \sg]\sg^{-1}+\frac{1}{d^k}\sum_{\sg,\pi\in P_d(S_k)}c_{\sg,\pi}\Tr[X \sg]\pi\,,
\end{equation}
where $P_d(S_k)$ is the set of permutation operators applied on the copies of spaces with dimension $d$ induced by the elements of symmetric group $S_k$, and the constants $c_{\sg,\pi}$ are in $\OC(1/d)$.
\end{lemma}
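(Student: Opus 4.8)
The plan is to derive this expansion directly from Weingarten calculus and then isolate its large-$d$ leading order. First I would invoke the Collins--\'Sniady formula, which evaluates the $k$-th order twirl \emph{exactly} as a double sum over the symmetric group. With a suitable choice of index conventions this reads
\begin{equation}
    \mathcal{T}_{\Ubb(d)}^{(k)}[X] = \sum_{\sg,\pi\in S_k} \mathrm{Wg}(\pi^{-1}\sg,d)\,\Tr[X\,P_d(\sg)]\,P_d(\pi^{-1})\,,
\end{equation}
where $\mathrm{Wg}(\cdot,d)$ is the unitary Weingarten function and $P_d(\cdot)$ is the representation of $S_k$ by permutation operators on $\HC^{\otimes k}$. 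Obtaining this operator-valued identity is the first bookkeeping step: one contracts the standard entrywise Weingarten identity $\Ebb[\prod_t U_{i_tj_t}\bar U_{i'_tj'_t}]=\sum_{\sg,\pi}\mathrm{Wg}(\sg\pi^{-1},d)\prod_t\delta_{i_t i'_{\sg(t)}}\delta_{j_t j'_{\pi(t)}}$ against the tensor-leg structure of $X$, so that the $j$-contractions produce the trace factor and the $i$-contractions produce the output permutation operator.

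Next I would insert the known large-$d$ asymptotics of the Weingarten function. The central fact is that $\mathrm{Wg}(\eta,d)=d^{-k-|\eta|}\bigl(\mathrm{Mob}(\eta)+\OC(d^{-2})\bigr)$, where $|\eta|$ is the Cayley distance of $\eta$ from the identity; in particular $\mathrm{Wg}(e,d)=d^{-k}\bigl(1+\OC(d^{-2})\bigr)$ with $\mathrm{Mob}(e)=1$, while every non-identity $\eta$ is suppressed, $\mathrm{Wg}(\eta,d)\in\OC(d^{-k-1})$. Substituting this into the double sum, the diagonal terms $\pi=\sg$ (for which $\pi^{-1}\sg=e$) collectively yield $\tfrac{1}{d^k}\sum_{\sg}\Tr[X\,P_d(\sg)]\,P_d(\sg^{-1})$, up to a relative $\OC(1/d)$ correction. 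This is exactly the stated leading term.

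Finally I would collect every remaining contribution --- the off-diagonal terms $\pi\neq\sg$ together with the subleading tail of the diagonal Weingarten values --- into a single residual $\tfrac{1}{d^k}\sum_{\sg,\pi}c_{\sg,\pi}\,\Tr[X\,P_d(\sg)]\,P_d(\pi)$, after relabelling the output index $\pi^{-1}\mapsto\pi$. Each such coefficient arises either from a non-identity Weingarten value (order $d^{-k-1}$) or from the $\OC(d^{-k-2})$ tail of $\mathrm{Wg}(e,d)$, so factoring out the common $d^{-k}$ leaves $c_{\sg,\pi}\in\OC(1/d)$, as claimed.

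I expect the main obstacle to be notational rather than conceptual, since the analytic input --- the Weingarten asymptotics --- is entirely standard. The care is in matching the permutation-inversion conventions of the Collins--\'Sniady formula to the $\sg/\sg^{-1}$ pairing in the statement, and in checking that $\mathrm{Mob}(e)=1$ so that no spurious $d$-dependent prefactor contaminates the leading term. Once the index contractions are pinned down, the split into a leading diagonal piece plus an $\OC(1/d)$ residual is immediate.
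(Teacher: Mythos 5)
Your proposal is correct. The paper itself offers no proof of this lemma --- it is imported verbatim as Lemma 3 of Ref.~\cite{garcia2023deep} --- and your derivation via the exact Collins--\'Sniady/Weingarten expansion combined with the asymptotics $\mathrm{Wg}(\eta,d)=d^{-k-|\eta|}\bigl(\mathrm{Mob}(\eta)+\OC(d^{-2})\bigr)$ (identity term dominant, all $\eta\neq e$ suppressed by at least $d^{-1}$, residuals absorbed into $c_{\sg,\pi}\in\OC(1/d)$) is precisely the standard argument by which the cited reference establishes it, so in substance you have reproduced the same proof.
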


\noindent\underline{\textit{Remark:}} The elements in the symmetric group $S_k$ can be denoted as collection of distinct tuples of indices from $[k]$. In the representation for the operators, these elements are essentially presented as SWAP operators between different copies/subsystems. More specifically, consider two subsystems $A$ and $B$; each of dimensions $d$ with orthornormal basis $\{|i\rangle\}_{i=1}^d$ , the SWAP operator is defined as ${\rm SWAP}_{A,B} = \sum_{i,j} |i \rangle\langle j |_A \ot |j\rangle\langle i|_B$. For two operators $O_A$ and $O_B$ on the subsystems $A$ and $B$ respectively, one useful identity follows
\begin{align}
    \Tr[(O_A \ot O_B)\,{\rm SWAP}_{A,B}] = \Tr[O_A O_B] \;.
\end{align}

Here we further provide examples of $S_k$ and the induced $\Tr[X\sigma]$ with $\sg\in P_d(S_k)$ for small $k$'s: 
\begin{itemize}
    \item $k=1$: $S_1=\{(1)\}$. \\
    For $s=(1)$, we have $\sg=P_d(s)=\IC$, and hence $\Tr[X\sg]=\Tr[X]$.
    \item $k=2$: $S_2=\{(1)(2),(21)\}$.\\
    For $s=(1)(2)$, we have $\sg=P_d(s)=\IC\otimes\IC$, and hence $\Tr[(X_{h_1}\otimes X_{h_2})\sg]=\Tr[X_{h_1}]\Tr[X_{h_2}]$.\\
    If $s=(21)$, then $\sg=P_d(s)=\rm{SWAP}$, and $\Tr[(X_{h_1}\otimes X_{h_2})\sg]=\Tr[X_{h_1} X_{h_2}]$. 
    \item $k=3$: $S_3=\{(1)(2)(3),(1)(23),(12)(3),(13)(2),(123),(132)\}$.\\
    If $s=(1)(23)$, then $\sg=P_d(s)=\IC_1\otimes\rm{SWAP_{23}}$, and $\Tr[(X_{h_1}\otimes X_{h_2} \otimes X_{h_3})\sg]=\Tr[X_{h_1}]\Tr[X_{h_2} X_{h_3}]$.\\
    If $s=(123)$, then $\sg=P_d(s)=(\IC_1\otimes\rm{SWAP_{23}})(\rm{SWAP_{12}}\otimes \IC_3)$, and $\Tr[(X_{h_1}\otimes X_{h_2} \otimes X_{h_3})\sg]=\Tr[X_{h_1} X_{h_2} X_{h_3}]$.\\
    If $s=(132)$, then $\sg=P_d(s)=(\rm{SWAP_{12}}\otimes \IC_3)(\IC_1\otimes\rm{SWAP_{23}})$, and $\Tr[(X_{h_1}\otimes X_{h_2} \otimes X_{h_3})\sg]=\Tr[X_{h_1} X_{h_3} X_{h_2}]$.
    \item Arbitrary k: $S_k=\{(\alpha_{1,1},\alpha_{1,2},\cdots \alpha_{1,n_1}),(\alpha_{2,1},\alpha_{2,2},\cdots \alpha_{2,n_2}),\cdots (\alpha_{m,1},\alpha_{m,2},\cdots \alpha_{m,n_m})\}$
where $\sum_{k=1}^m n_k=k$ and $\cup_{k=1}^m (\cup_{l=1}^{n_k} \alpha_{k,l}=[k])$ 
\end{itemize}
For more details about the symmetric group and the induced permutation, we refer the readers to Supplemental Information A of Ref.~\cite{garcia2023deep}, and the references therein.

\subsection{Unrolling QRP dynamics with tensor-diagram}\label{app:unroll}

We recall the QRP output at time step $t$ for the state $\rhot{t}$ in Eq.~\eqref{eq:app-qrp-total-state} and an observable $O$ is written in the recursive form as
\begin{align}
    \expval{O}_{t}&=\operatorname{Tr}[O \rhot{t}]\\
    &=\Tr\left[O U \left(\rhoa{t}\otimes\Tr_a(\rhot{t-1})\right) U^\dag\right]\;. \label{eq:app-qrp-output}
\end{align}

The following statement presents an alternative form of QRP output as a single iteration on a larger $(n_a + n_n)t$ composite system. A sketch of the proof of this statement is provided in Fig.~\ref{fig:ProofSketch}.

\begin{supplemental_proposition}\label{sup-prop:qrp-output} Consider a QRP model with the reservoir $U$, the initial reservoir state $\rhoh{0}$ which recursively processes the input states $\vrhoa{t} = (\rhoa{1},..., \rhoa{t})$. The QRP output at the time step $t$ in Eq.~\eqref{eq:app-qrp-output} can be expressed as
\begin{align}
    \expval{O}_{t}
    &=\sum_{\Xv_{h_{t-1}}}\Tr\left[\left(X_{h_1},\IC_{a};X_{h_2},\IC_{a};\cdots;X_{h_{t-1}},\IC_{a};O\right)U^{\otimes t}\left(\rhoh{0},\rhoa{1};X_{h_1},\rhoa{2};X_{h_2},\rhoa{3};\cdots;X_{h_{t-1}},\rhoa{t}\right)(U^\dag)^{\otimes t}\right]\;,\label{Eq:OutputExpression}
\end{align}
where $\Xv_{h_{t-1}}=(X_{h_1},\cdots X_{h_{t-1}})$ and $X_{h_\t} \in \BC_h$ with $\BC_{h}$ being the orthonormal basis over the operator space of the subsystem $h_{\t}$ with dimensions $d_h$.
Here, ``$,$'' denotes the tensor products of operators in accessible and hidden space, and ``$;$'' denotes the tensor products of composite operators in different copies of the composite space. In addition, $\IC_{a}$ is an identity matrix on the accessible subsystem with dimensions $d_a$.
\end{supplemental_proposition}

\begin{figure}[ht]
    \centering
    
    \includegraphics[width=0.95\textwidth]{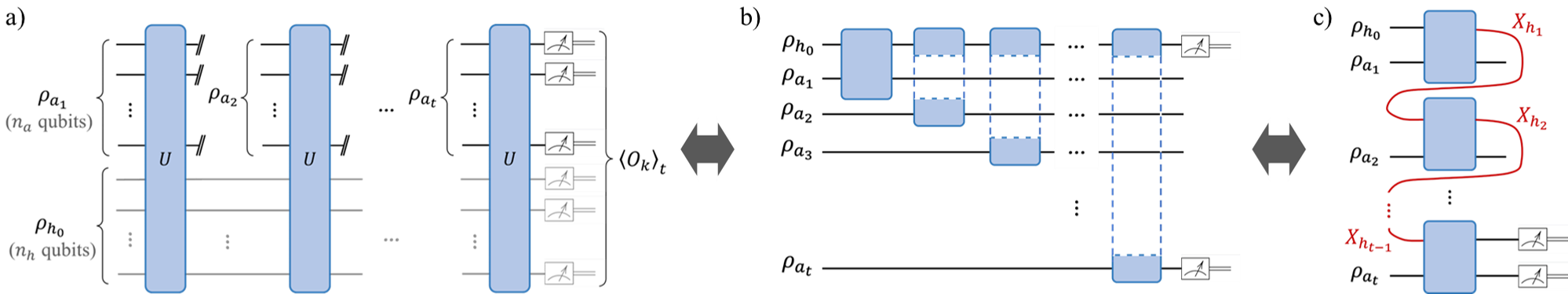}
    
    \caption{\textbf{Sketch of proof for unrolling the multiple step QRP output.} First, from panel a) to b), we simplify the recursive expression of expectation values in Eq.~\eqref{obs_def_QRP} and write it in the effectively extended space of $(t\cdot n_a+n_h)$ qubits. Second, from panel b) to c), we further extend the system to $t\cdot(n_a+n_h)$ qubits, such that the integral for QRP with arbitrary $t$-steps can be transformed to a standard Haar integral of $2t$'th moment. Finally, we calculate the leading term of observables' variance and find its lower bound using Lemma 3 of Ref.~\cite{garcia2023deep}.
    }
    \label{fig:ProofSketch}
\end{figure}

\begin{proof}
We consider the QRP output at time step $t$ in Eq.~\eqref{eq:app-qrp-output} and recursively unroll the early time step QRP states with Eq.~\eqref{eq:app-qrp-total-state}. Our first crucial remark is that throughout the proof we explicitly write the subscripts for all the reservoirs to keep track of which subsystems they act on. For example, at the last time step $t$, we have $U_{a_th}$ which indicates that the reservoir $U$ acts on the accessible subsystem ``$a_t$'' (corresponding to the one being injected by the input $\rhoa{t}$) and the hidden subsystem ``$h$''. Note that the observable $O$ acts on the subsystems ``$a_t$'' and ``$h$''. 

Now, our strategy is to absorb the partial trace with respect to the hidden space into a trace of larger dimension by using the ``trace-partial-trace-identity'', i.e. $\Tr_{AB}[(X_A\otimes\IC_B) Y_{AB}]=\Tr_{A}[X_A \Tr_{B}(Y_{AB})]$ for any operators $X_A$ in subspace $A$, $Y_{AB}$ in composite space $AB$ and $\IC_B$ is an identity on the subspace $B$. 
Let us exemplify the above process by unrolling the last $3$ consecutive time steps, namely $t, t-1$ and $t-2$.  
\begin{align}
    \expval{O}_{t}&= \Tr_{ha_t}\left[O U_{a_th} \left(\rhoa{t}\otimes\Tr_{a_{t-1}}(\rhot{t-1})\right) U_{a_th}^\dag\right]\\
    & = 
    \Tr_{ha_t a_{t-1}}\left[\left(O\otimes\IC_{a_{t-1}}\right) \left(U_{a_th}\otimes\IC_{a_{t-1}}\right) \left(\rhoa{t}\otimes\rhot{t-1}\right) \left(U_{a_th}^\dag\otimes\IC_{a_{t-1}}\right)\right]\\
    &=\Tr_{ha_t a_{t-1}}\left[\left(O\otimes\IC_{a_{t-1}}\right) \left(U_{a_th}\otimes\IC_{a_{t-1}}\right) \left(\rhoa{t}\otimes\left[U_{a_{t-1}h} \left(\rhoa{t-1}\otimes\Tr_{a_{t-2}}[\rhot{t-1}]\right) U_{a_{t-1}h}^\dag\right]\right) \left(U_{a_th}^\dag\otimes\IC_{a_{t-1}}\right)\right]\\
    &=\Tr_{ha_t\cdots a_{t-2}}\Big[\left(O\otimes\IC_{a_{t-1},a_{t-2}}\right) \left(U_{a_th}\otimes\IC_{a_{t-1},a_{t-2}}\right) \left(\rhoa{t}\otimes \IC_{h, a_{t-1},a_{t-2}}\right) \left(U_{a_{t-1}h}\otimes\IC_{a_t,a_{t-2}}\right)\nonumber \\
    &\qquad\qquad \left(\rhoa{t-1}\otimes\rhot{t-2}\otimes\IC_{a_{t}}\right) \left(U_{a_{t-1}h}^\dag\otimes\IC_{a_t,a_{t-2}}\right) \left(U_{a_th}^\dag\otimes\IC_{a_{t-1},a_{t-2}}\right)\Big]\\
    &=\Tr_{ha_t\cdots a_{t-2}}\Big[\left(O\otimes\IC_{a_{t-1},a_{t-2}}\right) \left(U_{a_th}\otimes\IC_{a_{t-1},a_{t-2}}\right)\left(U_{a_{t-1}h}\otimes\IC_{a_t,a_{t-2}}\right)\left(\rhoa{t}\otimes\rhoa{t-1}\otimes\rhot{t-2}\right)\nonumber\\
    &\qquad\qquad \left(U_{a_{t-1}h}^\dag\otimes\IC_{a_t,a_{t-2}}\right) \left(U_{a_th}^\dag\otimes\IC_{a_{t-1},a_{t-2}}\right)\Big] \;,
\end{align}
where to reach the last equality we commute $\left(\rhoa{t}\otimes \IC_{h, a_{t-1},a_{t-2}}\right)$ and $\left(U_{a_{t-1}h}\otimes\IC_{a_t,a_{t-2}}\right)$, as their non-identity parts are in different subspaces, and then do $\left(\rhoa{t}\otimes \IC_{h, a_{t-1},a_{t-2}}\right)\left(\rhoa{t-1}\otimes\rhot{t-2}\otimes\IC_{a_{t}}\right) =\left(\rhoa{t}\otimes\rhoa{t-1}\otimes\rhot{t-2}\right)$. 
We remark that in the equalities above, the subscripts of traces denote the subspaces that the trace operator is acting upon.

By repeating the above procedure until the first time step, we obtain
\begin{equation}
    \expval{O}_{t}=\Tr_{h,a_t,\cdots a_1}\left[\left(O\otimes\IC_{a_{t-1}\cdots a_1}\right) U_t\cdots \,U_{1} \left(\rhoh{0}\otimes\rhoa{t}\otimes\cdots\rhoa{1}\right) U_1^\dag \cdots \,U_t^\dag\right]\;,
\end{equation}
where, for all $\t=1,\cdots, t$, we denote $U_{\tau} = U_{a_{\tau}h}\otimes\IC_{\{a_t,\cdots a_1\}\backslash a_\t}$ which indicates that $U_{\tau}$ acts non-trivially on the hidden space and the accessible space of the $\t$-th iteration, and acts trivially on the other subsystems. 

Our next step is to further extend the space. To do so, it is convenient to work in the vectorized form. Using the identity $\Tr\left[OU\rho U^\dag \right]=\llangle O | U\otimes U^* | \rho \rrangle$ where $| \rho \rrangle$ and $| O \rrangle$ are vectorized forms of $\rho$ and $O$ respectively, we obtain
\begin{align}
    \expval{O}_{t}&=\llangle O\otimes\IC_{a_{t-1}\cdots a_1} |\left(U_{t}\cdots \, U_{1}\right) \otimes \left(U_{t}^*\cdots \, U_{1}^*\right)|\rhoh{0}\otimes\rhoa{t}\otimes\cdots\rhoa{1}\rrangle\\ 
    &=\llangle O\otimes\IC_{a_{t-1}\cdots a_1}| W_{a_th}\cdots W_{a_1 h} | \rhoh{0}\otimes\rhoa{t}\otimes\cdots\rhoa{1}\rrangle \;,\label{C41}
\end{align}
where $W_{a_\t h}:=U_{a_\t h}\otimes U_{a_\t h}^*\otimes \IC_{\{a_t,\cdots a_1\}\backslash a_\t}^{\otimes2}$.  
We note that here the order of spaces and subspaces are interchangeable. 

To further proceed, it is crucial to notice that for distinct $\t$ and $\t'$, $U_{a_\t h}$ and $U_{a_{\t'} h}$ (as well as $W_{a_\t h}$ and $W_{a_{\t'} h}$) only overlap in the hidden subsystem but never in the accessible subsystems, as illustrated in Fig.~\ref{fig:ProofSketch}\,b). Inspired by this, let us first consider the simplified version with only three subsystems $A, B$ and $C$ and the following quantity $\langle \widetilde{O} \rangle_{ABC}$ of the form
\begin{align}
    \langle \widetilde{O} \rangle_{ABC} = \llangle O_{AB} \otimes O_C  | (W_{AB} \otimes \IC_C^{\otimes2}) ( W_{AC} \otimes \IC_B^{\ot 2})|\rho_A \otimes \rho_B \otimes\rho_C\rrangle  \;,
\end{align}
where $W_{AB}$ and $W_{AC}$ are some operators acting on the subsystems $AB$ and $AC$ respectively. Crucially, they both act on subsystem $A$. In addition, $O_{AB}$ is an arbitrary observable on the subsystems $AB$ while $O_C$ is an observable on the subsystem $C$. Similarly, $\rho_A, \rho_B, \rho_C$ are some arbitrary states on the subsystems $A, B, C$ respectively. 

By using the Hilbert-Schmidt resolution of identities on the common subsystem, the following computation steps hold (or, alternatively, are illustrated in Fig.~\ref{fig:qrp-output-td1} with tensor diagrams): 
\begin{align}
    \langle \widetilde{O} \rangle_{ABC} 
 &= \left(\llangle O_{AB} |W_{AB} \otimes\IC_{C}^{\otimes 2}\right)\left(\IC_A^{\otimes 2}\otimes|\rho_{B}\rrangle\otimes\llangle O_C |\right)\left(W_{AC} \otimes \IC_B^{\ot 2}|\rho_A\otimes \rho_C \rrangle\right)\\
    &= \left(\llangle O_{AB} |W_{AB} \otimes\IC_{C}^{\otimes 2}\right)\left[ \sum_{X \in B_A}|X \rrangle \llangle X| \right] \otimes | \rho_{B}\rrangle\otimes \llangle O_C |\left(\IC_{A}^{\otimes 2}\otimes W_{CB}|\rho_A\otimes \rho_C \rrangle\right)\\
    =&\sum_{X \in B_A } \llangle O_{AB} | W_{AB} |X \otimes \rho_{B} \rrangle \llangle X \otimes O_C| W_{AC} |\rho_A\otimes \rho_C\rrangle\label{eq:proof-appx-expand-systems}
\end{align}
where the sum is taken over $X \in B_A$ where $B_A$ is a complete orthonormal basis in the operator space of the subsystem $A$ under the Hilbert-Schmidt norm. For instance, we could choose a set of singleton matrices $B_A=\{\ketbra{i}{j}:i,j=0,'\cdots d_A-1\}$ such that $\{|X \rrangle : X\in B_A\}$ forms the computational basis; alternatively we could also use Pauli strings, if space $A$ is qubit-defining, such that their vectorizations are Bell states.
\begin{figure}[ht]
    \centering
    
    \includegraphics[width=0.80\textwidth]{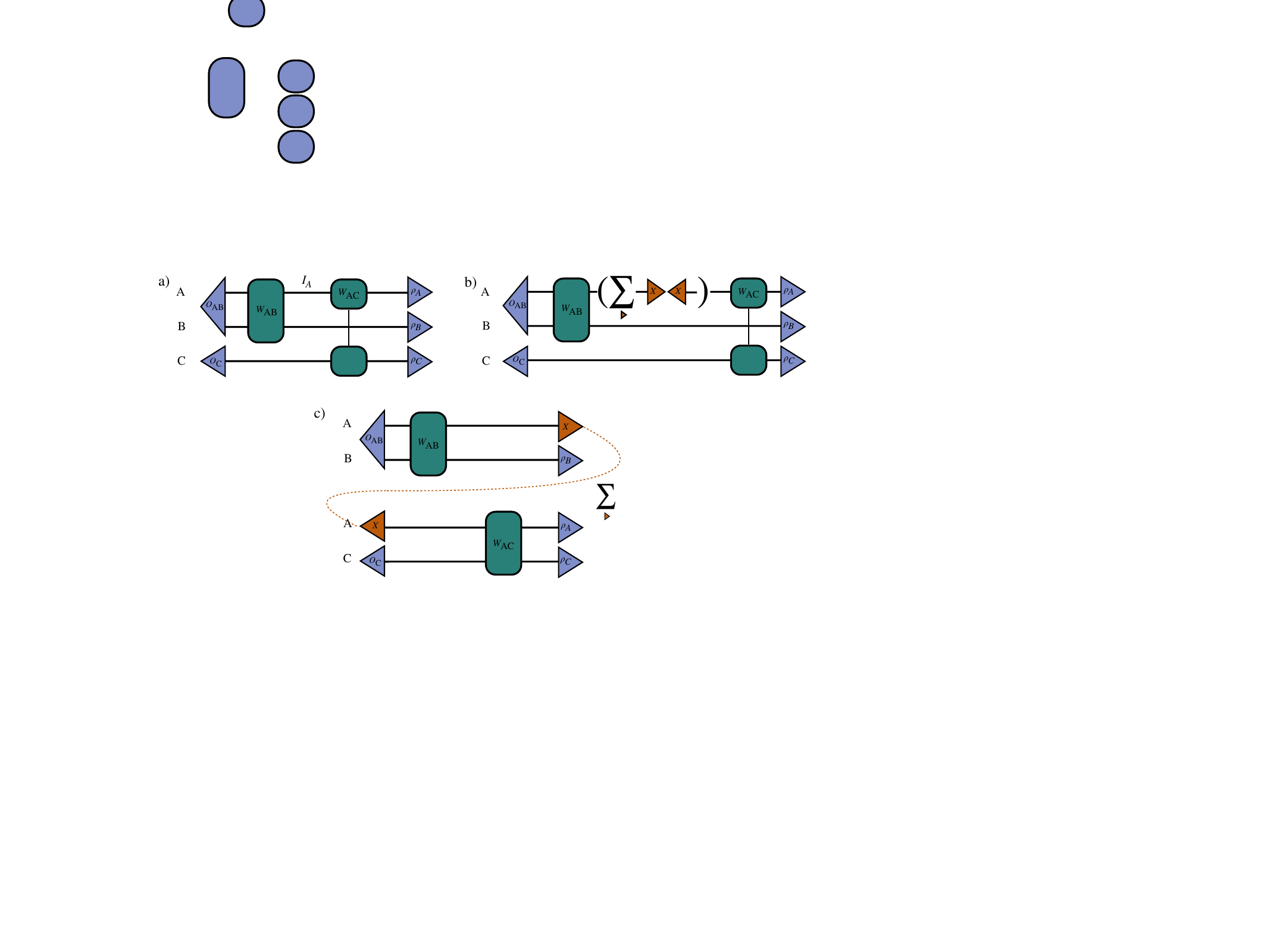}
    
    \caption{\textbf{Tensor diagram for proving Eq~\eqref{eq:proof-appx-expand-systems}}. Unrolling multiple step QRP output relies on recursively applying this tensor diagram for different time steps (with appropriately assigned subsystems $A$, $B$ and $C$). 
    }
    \label{fig:qrp-output-td1}
\end{figure}

We are now ready to tackle the QRP output $\langle O \rangle_t$ in Eq.~\eqref{C41} by recursively applying Eq.~\eqref{eq:proof-appx-expand-systems}. This is done by systematically identifying what the subsystems $A, B$ and $C$ are for each iteration.

\medskip

\noindent\underline{\textit{For the time step $t$.}} We choose $A$ in Eq.~\eqref{eq:proof-appx-expand-systems} to be the hidden subsystem $h_0$, choose $B$ to be the subsystem $a_t$ and choose $C$ to be the rest i.e., $a_{t-1}, \dots, a_1$. 

With this selection, we have $O_{AB} = O$, $O_C = \IC_{a_{t-1},\cdots a_1 }$, as well as $\rho_A = \rhoh{0}$, $\rho_B = \rhoa{t}$, $\rho_C = \rhoa{t-1}\ot \cdots \ot \rhoa{1}$, leading to
\begin{align}
     \expval{O}_{t}&= \sum_{X_{h_{t-1}} \in B_h} \llangle O | W_{a_th}|X_{h_{t-1}} \otimes \rhoa{t} \rrangle \llangle X_{h_{t-1}} \ot \IC_{a_{t-1},\cdots a_1 }| W_{a_{t-1}h} \cdots W_{a_{1}h} | \rhoh{0} \ot \rhoa{t-1}\cdots\rhoa{1} \rrangle  \;, \label{eq:ext-h-t}
\end{align}
where we introduce the subscript $h_t$ to indicate that $X_{h_{t-1}}$ acts on the expanded hidden subspace.

\medskip

\noindent\underline{\textit{For the time step $t-1$.}}  We only apply Eq.~\eqref{eq:proof-appx-expand-systems} to the term $\llangle X_{h_{t-1}} \ot \IC_{a_{t-1},\cdots a_1 }| W_{a_{t-1}h} \cdots W_{a_{1}h} | \rhoh{0} \ot \rhoa{t-1}\cdots\rhoa{1} \rrangle$ in Eq.~\eqref{eq:ext-h-t}. Here, we choose $A$ to be the hidden subsystem $h_0$, choose $B$ to be $\rhoa{t-1}$ and choose $C$ to be the rest i.e, $a_{t-2}, \dots, a_{1}$ (excluding $a_t$).

With this selection, we have $O_{AB} = X_{h_{t-1}} \ot \IC_{a_{t-1}}$, $O_C = \IC_{a_{t-2}\cdots a_1}$, and $\rho_A = \rhoh{0}$, $\rho_B = \rhoa{t-1}$, $\rho_C = \rhoa{t-2}\ot\cdots\ot\rhoa{1}$, leading to
\begin{align}
     \expval{O}_{t}&= \sum_{X_{h_{t-1}}, X_{h_{t-2}}} \llangle O | W_{a_th}|X_{h_{t-1}} \otimes \rhoa{t} \rrangle \llangle X_{h_{t-1}} \ot \IC_{a_{t-1}} | W_{a_{t-1}h}|X_{h_{t-2}} \otimes \rhoa{t-1} \rrangle 
     \llangle X_{h_{t-2}} \ot \IC_{a_{t-2},\cdots a_1 }| W_{a_{t-2}h} \cdots W_{a_{1}h} | \rhoh{0} \ot \rhoa{t-2}\cdots\rhoa{1} \rrangle  \\
     & = \sum_{X_{h_{t-1}}, X_{h_{t-2}}} \llangle O \ot X_{h_{t-1}} \ot \IC_{a_{t-1}}| W_{a_th} \ot W_{a_{t-1}h} |X_{h_{t-1}} \otimes \rhoa{t} \ot X_{h_{t-2}} \otimes \rhoa{t-1} \rrangle \llangle X_{h_{t-2}} \ot \IC_{a_{t-2},\cdots a_1 }| W_{a_{t-2}h} \cdots W_{a_{1}h} | \rhoh{0} \ot \rhoa{t-2}\cdots\rhoa{1} \rrangle \;,
\end{align}
where in the second equality we use $\llangle A | B |C \rrangle \llangle A' | B' |C' \rrangle  = \llangle A \ot A' | B\ot B |C \ot C' \rrangle $. 

By recursively doing this procedure, we eventually obtain
\begin{align}
     \expval{O}_{t} & = \sum_{\vec{X}_{h_{t-1}}} \llangle O ; X_{h_{t-1}}, \IC_{a_{t-1}}; \, \cdots ; X_{h_1}, \IC_{a_1} | W_{a_th} ;  W_{a_{t-1}h};\cdots ;  W_{a_1h} |X_{h_{t-1}}, \rhoa{t} ; \cdots ; X_{h_1} , \rhoa{2} ; \rhoh{0} , \rhoa{1} \rrangle \\
     &= \sum_{\Xv_{h_{t-1}}}\Tr\left[\left(X_{h_1},\IC_{a_1};X_{h_2},\IC_{a_2};\cdots;X_{h_{t-1}},\IC_{a_{t-1}};O\right)U^{\otimes t}\left(\rhoh{0},\rhoa{1};X_{h_1},\rhoa{2};X_{h_2},\rhoa{3};\cdots;X_{h_{t-1}},\rhoa{t}\right)(U^\dag)^{\otimes t}\right] \;,
\end{align}
where $\vec{X}_{h_{t-1}} = (X_{h_{t-1}}, \cdots,X_{h_1})$ with each $X_k$ being sum over all the basis. Here in the above equations, ``$,$'' denotes the tensor products of operators in accessible and hidden space, and ``$;$'' denotes the tensor products of composite operators in different copies of the composite space. In the last equality, we undo the vectorization together with explicitly writing the reservoir $U$ (instead of $W_{a_\t h}$).
\end{proof}

\subsection{Proof of Theorem~\ref{Thm:HaaResVar}: Exponential concentration}\label{app:conc_output}
In this section, we provide a formal version of Theorem~\ref{Thm:HaaResVar} addressing the concentration of QRP's output and its detailed proof. We note that the formal version explicitly includes the temporal dependent term which vanishes exponentially in the reservoir iterations.
\begin{customthm}{1}[Variance of QRP output with extreme scrambling reservoir, formal]\label{Thm:HaaResVar_app}
Given that a reservoir is sampled from a $2t$-design on $\Ubb(d_a d_h)$, the variance of QRP output with some observable $O$ at time step $t$ vanishes exponentially with the number of accessible $n_a$ and hidden $n_h$ qubits. The scaling depends on neither the input states nor the time step $t$. Formally, at a large time step $t$, the variance is given by
\begin{equation}
\Var_{U\sim\Ubb(d_a d_h)}[\expval{O}_t]=\frac{\Tr[O^2]\left(1+ \beta\right)}{ d_a^2 d_h^3} +  \Delta_{\rm temp}(t)\;\;,
\end{equation}
with $\beta \in \OC(\frac{1}{d_a d_h})$ and the temporal dependent term is
\begin{align}
    \Delta_{\rm temp}(t):=\frac{1}{d_a^{t+1}d_h^2}\Tr[O^2] \;.
\end{align}
For large $t$, since $\Delta_{\rm temp}(t)$ vanishes exponentially in $t$, the non-temporal dependent term dominates and the QRP output exponentially concentrates as
\begin{equation}\label{eq:qrp-scrambling-concentration}
\operatorname{Pr}_{U\sim\Ubb(d_a d_h)}\left(\left|\expval{O}_t-\Ebb_{U}[\expval{O}_t]\right|\geq\delta\right)\leq \frac{\Tr[O^2]}{\delta^2 d_a^2 d_h^3}\,,
\end{equation}
where the concentration point is independent of the reservoir and the input states
\begin{equation}
    \Ebb_{\Ubb(d_a d_h)}[\expval{O}]=\frac{\Tr[O]}{d_a d_h}\left(1+ \beta' \right)\;,\label{Eq:FirstMoment-thm} 
\end{equation}
with $\beta' \in \OC\left(\frac{1}{d_a d_h}\right)$.
\end{customthm}

\begin{specialproof}
By Supplemental Proposition~\ref{sup-prop:qrp-output}, we obtain the standard form of $t$-th moment for QRP's output, which enables Haar integral. Our proof strategy is by applying Lemma~\ref{lemma:L3_deep_QNN} to find the leading term of the variance of QRP output. 

We first recall the expression of reservoir output from Supplemental Proposition~\ref{sup-prop:qrp-output}
\begin{align}
    \expval{O}_{t}&=\sum_{\Xv_{h_{t-1}}}\Tr\left[\left(X_{h_1},\IC_{a};X_{h_2},\IC_{a};\cdots;X_{h_{t-1}},\IC_{a};O\right)U^{\otimes t}\left(\rhoh{0},\rhoa{1};X_{h_1},\rhoa{2};X_{h_2},\rhoa{3};\cdots;X_{h_{t-1}},\rhoa{t}\right)(U^\dag)^{\otimes t}\right]\,.
\end{align}
Here, ``$,$'' denotes the tensor products of operators in accessible and hidden space, and ``$;$'' denotes the tensor products of composite operators in different copies of the composite space.
Then, the square of the quantity is given by
\begin{align}
    \expval{O}^2_{t}=&\sum_{\Xv_{h_{t-1}}, \Yv_{h_{t-1}}}\Tr\left[\left(X_{h_1},\IC_{a};X_{h_2},\IC_{a};\cdots;X_{h_{t-1}},\IC_{a};O\right)U^{\otimes t}\left(\rhoh{0},\rhoa{1};X_{h_1},\rhoa{2};X_{h_2},\rhoa{3};\cdots;X_{h_{t-1}},\rhoa{t}\right)(U^\dag)^{\otimes t}\right]\nonumber\\
    &\quad\quad\cdot \Tr\left[\left(Y_{h_1},\IC_{a};Y_{h_2},\IC_{a};\cdots;Y_{h_{t-1}},\IC_{a};O\right)U^{\otimes t}\left(\rhoh{0},\rhoa{1};Y_{h_1},\rhoa{2};Y_{h_2},\rhoa{3};\cdots;Y_{h_{t-1}},\rhoa{t}\right)(U^\dag)^{\otimes t}\right]\\
    =&\sum_{\Xv_{h_{t-1}}, \Yv_{h_{t-1}}}\Tr\Bigr[\left(X_{h_1},\IC_{a};X_{h_2},\IC_{a};\cdots;X_{h_{t-1}},\IC_{a};O;Y_{h_1},\IC_a;Y_{h_2},\IC_a;\cdots;Y_{h_{t-1}},\IC_{a};O\right)U^{\otimes 2t}\nonumber\\
    &\quad\quad\cdot \left(\rhoh{0},\rhoa{1};X_{h_1},\rhoa{2};X_{h_2},\rhoa{3};\cdots;X_{h_{t-1}},\rhoa{t};\rhoh{0},\rhoa{1};Y_{h_1},\rhoa{2};Y_{h_2},\rhoa{3};\cdots;Y_{h_{t-1}},\rhoa{t}\right)(U^\dag)^{\otimes 2t}\Bigr]
\end{align}
To obtain the variance, we proceed by determining the leading terms of first and second moments, respectively.

\subsubsection{First moment}
We first determine the leading term of the expectation value of QRP's output:
\begin{align}
    &\mathbb{E}_{\Ubb(d_a d_h)}\left[\expval{O}_t \right]\\
    =&\sum_{\Xv_{h_{t-1}}}\int d\mu(U)\Tr\left[\left(X_{h_1},\IC_{a};X_{h_2},\IC_{a};\cdots;X_{h_{t-1}},\IC_{a};O\right)U^{\otimes t}\left(\rhoh{0},\rhoa{1};X_{h_1},\rhoa{2};X_{h_2},\rhoa{3};\cdots;X_{h_{t-1}},\rhoa{t}\right)(U^\dag)^{\otimes t}\right]\\
    =&\frac{1}{(d_a d_h)^{t}}\sum_{\Xv_{h_{t-1}}} \biggr(\sum_{\sg\in P_{d_a d_h}(\mathcal{S}_{t})}\Tr\left[\left(X_{h_1},\IC_{a};X_{h_2},\IC_{a};\cdots;X_{h_{t-1}},\IC_{a};O\right)\sg\right]\Tr\left[\left(\rhoh{0},\rhoa{1};X_{h_1},\rhoa{2};X_{h_2},\rhoa{3};\cdots;X_{h_{t-1}},\rhoa{t}\right)\sg^{-1}\right]\nonumber\\
     &\qquad+\sum_{\sg,\pi\in P_{d_a d_h}(\mathcal{S}_{t})} c_{\sg,\pi}\Tr\left[\left(X_{h_1},\IC_{a};X_{h_2},\IC_{a};\cdots;X_{h_{t-1}},\IC_{a};O\right)\sg\right]\Tr\left[\left(\rhoh{0},\rhoa{1};X_{h_1},\rhoa{2};X_{h_2},\rhoa{3};\cdots;X_{h_{t-1}},\rhoa{t}\right)\pi\right]\biggr)\,,
\end{align}
where the last equality is obtained using Lemma~\ref{lemma:L3_deep_QNN}, and the prefactors $c_{\sg,\pi}\in\OC\left((d_a d_h)^{-1}\right)$ which are exponentially small in number of qubits. We further note that both sums have polynomially many terms. Hence for the leading order terms we have:
\begin{align}
     &\mathbb{E}_{\Ubb(d_a d_h)}\left[\expval{O}_t\right]\nonumber\\
    =&\frac{1}{(d_a d_h)^{t}}\left(1+\OC\left((d_a d_h)^{-1}\right)\right)\sum_{\Xv_{h_{t-1}}} \sum_{\sg\in P_{d_a d_h}(\mathcal{S}_{t})}\\
    &\qquad\qquad\Tr\left[\left(X_{h_1},\IC_a;X_{h_2},\IC_a;\cdots;X_{h_{t-1}},\IC_a;O\right)\sg\right]\Tr\left[\left(\rhoh{0},\rhoa{1};X_{h_1},\rhoa{2};X_{h_2},\rhoa{3};\cdots;X_{h_{t-1}},\rhoa{t}\right)\sg^{-1}\right]\\
    =&\frac{1}{(d_a d_h)^{t}}\left(1+\OC\left((d_a d_h)^{-1}\right)\right)\sum_{\sg\in P_{d_a d_h}(\mathcal{S}_{t})} g^{(1)}_{\sg}(O)\label{A36}\,,
\end{align}
where we denote
\begin{equation}\label{A34}
    g^{(1)}_{\sg}(O):=\sum_{\Xv_{h_{t-1}}}\Tr\left[\left(X_{h_1},\IC_a;X_{h_2},\IC_a;\cdots;X_{h_{t-1}},\IC_a;O\right)\sg\right]\Tr\left[\left(\rhoh{0},\rhoa{1};X_{h_1},\rhoa{2};X_{h_2},\rhoa{3};\cdots;X_{h_{t-1}},\rhoa{t}\right)\sg^{-1}\right]\,.
\end{equation}

In what follows, our argument hinges on two fundamental observations:
\begin{enumerate}
    \item $g^{(1)}_{\sg}(O)$ with the identity operator $\sg = \bigotimes_{i=1}^t \IC_{a_i}\otimes\IC_{h_i}$ is the dominant term in Eq.~\eqref{A36}.
    \item Other $g^{(1)}_{\sg}(O)$ (with other permutation operators $\sg \neq \bigotimes_{i=1}^t \IC_{a_i}\otimes\IC_{h_i}$) are smaller than the dominant term by at least one order of $d_ad_h$.
\end{enumerate}

Proving them is indeed not straight forward. We proceed by first gaining some intuition with the toy example for $t=2$, and then consider the general setting with arbitrary $t$ time steps by performing the computation with the identity element as well as arguing that other permutation elements give smaller contribution.

\begin{figure}[ht]
    \centering
    
    \includegraphics[width=0.9\textwidth]{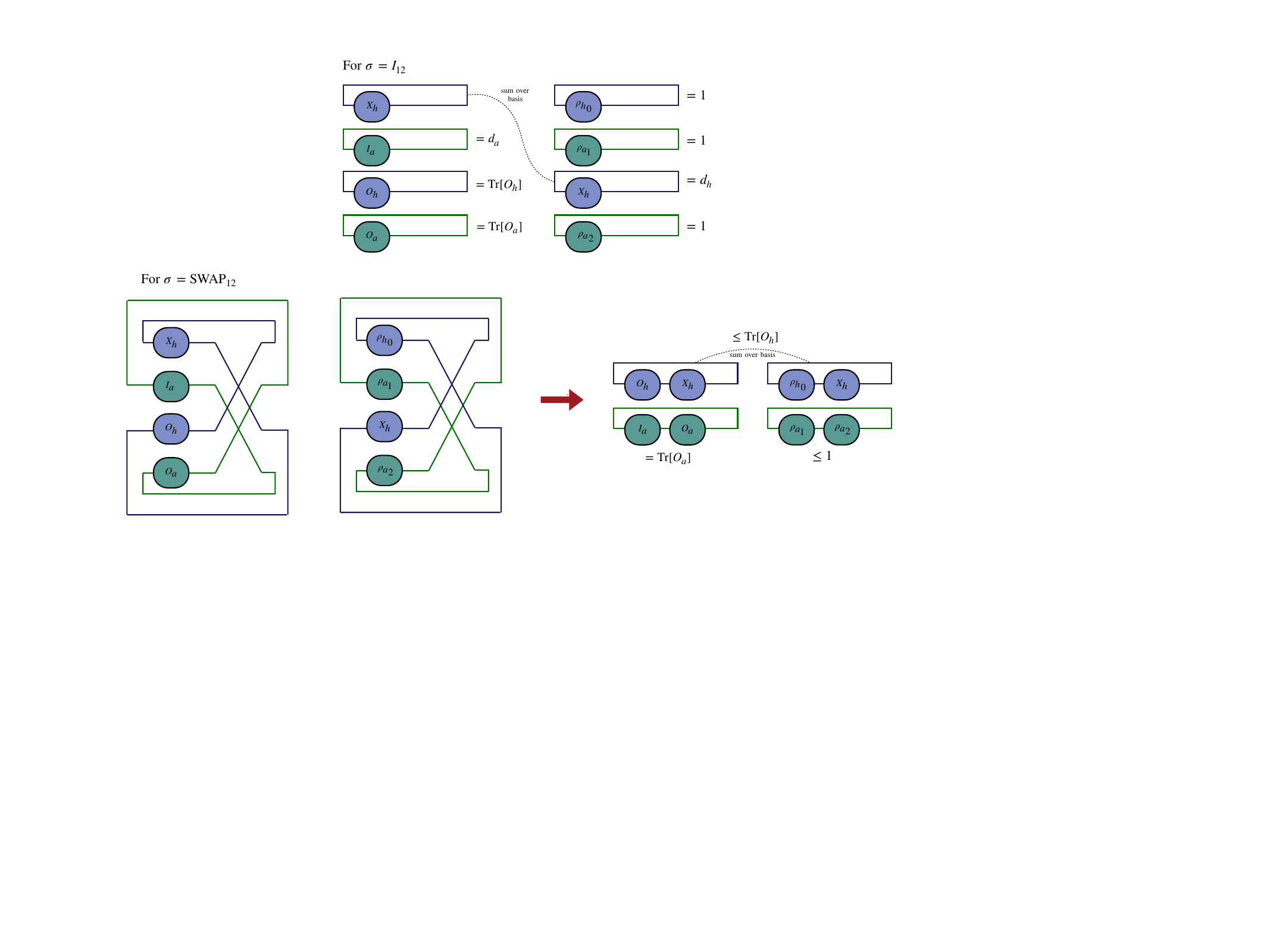}
    
    \caption{\textbf{Tensor diagrams of $g^{(1)}_{\sg}(O)$ for the toy example with $t=2$.}  
    We illustrate the tensor diagrams for $g^{(1)}_{\sg}(O)$ in Eq.~\eqref{eq:g-term-toy} associated with identity and SWAP operators. 
    The effect of the SWAP operator is such that the factor of $d_ad_h$ vanishes, and hence $g^{(1)}_{\sg}(O)$ with $\sg = \IC_{12}$ is the dominant term.
    }
    \label{fig:qrp-td2-dominant-term}
\end{figure}

\bigskip

\paragraph*{\underline{Toy example with $t=2$ and a separable observable.}} We gain some intuitive understanding regarding the dominant term with a simple example~\footnote{-- just like how other physics problems start to get solved with a simple harmonic oscillator --}. Let us consider $t =2$ and $O = O_h\ot O_a$ with some observables $O_h$ and $O_a$ acting on hidden and accessible subsystems respectively. In this setting, $g^{(1)}_{\sg}(O)$ is of the form
\begin{align}\label{eq:g-term-toy}
    g^{(1)}_{\sg}(O)=\sum_{\Xv_{h_{1}}\in\{ |i\rangle\langle j| \}_{i,j=1}^{d_h}}\Tr\left[\left(X_{h_1},\IC_a;O_h,O_a\right)\sg\right]\Tr\left[\left(\rhoh{0},\rhoa{1};X_{h_1},\rhoa{2}\right)\sg\right] \;.
\end{align}
Here, the set of permutation operators is $\sigma \in \{ \IC_{12}, {\rm SWAP}_{12}\}$ with $\IC_{12} = (\IC_{a}\otimes\IC_{h_0})^{\ot 2}$. We now compute $g^{(1)}_{\sg}(O)$ for these two operators and compare their values. Fig.~\ref{fig:qrp-td2-dominant-term} illustrates this computation with tensor network diagrams.

For $\sg = \IC_{12}$, we have
\begin{align}
    g^{(1)}_{\sg}(O)&=\sum_{\Xv_{h_{1}}\in\{ |i\rangle\langle j| \}_{i,j=1}^{d_h}}\Tr\left[\left(X_{h_1},\IC_a;O_h,O_a\right)\IC_{12}\right]\Tr\left[\left(\rhoh{0},\rhoa{1};X_{h_1},\rhoa{2}\right)\IC_{12}\right] \\
    & = \sum_{\Xv_{h_{1}}\in\{ |i\rangle\langle j| \}_{i,j=1}^{d_h}} \Tr[X_{h_1}]\Tr[\IC_a] \Tr[O_h]\Tr[O_a] \Tr[X_{h_1}] \\
    & = \sum_{i,j = 1}^{d_h} \delta^2_{ij} d_a \Tr[O] \\
    & = d_h d_a \Tr[O] \;,
\end{align}
where we use $\Tr[X_{h_1}] = \delta_{ij}$ together with $\delta_{ij}^2 = \delta_{ij}$, $\Tr[O_h]\Tr[O_a] =\Tr[O]$ and $\Tr[\rho] = 1$ for any quantum state $\rho$.

On the other hand, for $\sg= {\rm SWAP}_{12}$ we have
\begin{align}
     g^{(1)}_{\sg}(O)&=\sum_{\Xv_{h_{1}}\in\{ |i\rangle\langle j| \}_{i,j=1}^{d_h}}\Tr\left[\left(X_{h_1},\IC_a;O_h,O_a\right){\rm SWAP}_{12}\right]\Tr\left[\left(\rhoh{0},\rhoa{1};X_{h_1},\rhoa{2}\right){\rm SWAP}_{12}\right] \\
     & = \sum_{\Xv_{h_{1}}\in\{ |i\rangle\langle j| \}_{i,j=1}^{d_h}} \Tr[X_h O_h] \Tr[\IC_a O_a] \Tr\left[\rhoh{0}X_{h_1}\right]\Tr\left[\rhoa{1}\rhoa{2}\right] \\
     & = \sum_{i,j=1}^{d_h} \langle j | O_h |i\rangle \Tr[O_a] \langle j | \rhoh{0} |i\rangle \Tr[\rhoa{1}\rhoa{2}] \\
     & \leq \sum_{i,j=1}^{d_h} \langle j | O_h |i\rangle \Tr[O_a] \delta_{ij} \\
     & = \Tr[O] \;,
\end{align}
where in the second equality we have used $\Tr[(P_1 \ot P_2){\rm SWAP}_{12}] = \Tr[P_1P_2]$ with $P_1$ and $P_2$ being some operators on the first and second copies. The inequality is from $\Tr[\rhoh{0}X_{h_1}] \leq \|\rhoh{0}\|_{\infty} \|X\|_1 \leq \delta_{ij}$ (with $\| \rhoh{0}\|_{\infty} \leq 1$), and $\Tr[\rhoa{1}\rhoa{2}] \leq 1$. 

Hence, we observe that $g^{(1)}_{\sg}(O)$ with $\sg$ being the identity is the leading term and larger than the other term by the factor $d_a d_h$. This can be intuitively understood from the role of $\rm SWAP$ operator. In particular, we observe that
\begin{enumerate}
    \item The disappear of $d_a$ is due to the fact that the SWAP operation leads to $\Tr[\IC_a O_a] = \Tr[O_a]$ (instead of $ \Tr[\IC_a] \Tr[O_a] = d_a \Tr[O_a]$ with the identity).
    \item The disappear of $d_h$ is due to the fact that the SWAP operation leads to the upper bound of $\sum_{i,j} \langle j | O_h |i\rangle \delta_{ij} = \Tr[O_h]$ (instead of the double Kronecker deltas $\sum_{i,j} \Tr[O_h]\delta_{ij}^2 = d_h \Tr[O_h]$ with the identity).
\end{enumerate}
This intuitive understanding of how the SWAP operator can impact the terms does indeed carry out to more general setting. We argue this more thoroughly in the coming step.

\medskip

\paragraph*{\underline{General setting with arbitrary time steps.}}
We now compute $g^{(1)}_{\sg}(O)$ in Eq.~\eqref{A34} with the symmetric group element $s=((1)(2)\cdots(t))$ for which $\sigma=\bigotimes_{i=1}^t \IC_{a_i}\otimes\IC_{h_i}$ leading to
\begin{align}
    g^{(1)}_{\sg}(O)&=\sum_{\Xv_{h_{t-1}}}\Tr\left[\left(X_{h_1},\IC_a;X_{h_2},\IC_a;\cdots;X_{h_{t-1}},\IC_a;O\right)\bigotimes_{i=1}^t \IC_{a_i}\otimes\IC_{h_i}\right]\Tr\left[\left(\rhoh{0},\rhoa{1};X_{h_1},\rhoa{2};X_{h_2},\rhoa{3};\cdots;X_{h_{t-1}},\rhoa{t}\right)\bigotimes_{i=1}^t \IC_{a_i}\otimes\IC_{h_i}\right]\\
    &=\sum_{\Xv_{h_{t-1}}}\Tr[X_{h_1}]^2\cdots\Tr[X_{h_{t-1}}]^2\Tr[I_a]^{t-1}\Tr[O]\\
    &=d_a^{t-1}\Tr[O]\sum_{{i_1,j_1;\cdots i_{t-1},j_{t-1}}\in\{1,
    \cdots, d_h\} }\Tr[\ketbra{i_1}{j_1}]^2\cdots\Tr[\ketbra{i_{t-1}}{j_{t-1}}]^2\\
    &=d_a^{t-1}\Tr[O]\sum_{{i_1,j_1;\cdots i_{t-1},j_{t-1}}\in\{1,
    \cdots, d_h\} }\delta_{i_1,j_1}^2\cdots\delta_{i_{t-1},j_{t-1}}^2\\
    &=d_a^{t-1}\Tr[O]\sum_{{i_1;\cdots i_{t-1}}\in\{1,
    \cdots, d_h\} }1\\
    &=d_a^{t-1}d_h^{t-1}\Tr[O]\,, \label{eq:1moment-leading-term}
\end{align}
where in the second line we ignore the traces of states as they equal 1.

\medskip

We next claim that this is the leading order term. This is justified with two arguments.

\medskip
 
(i).~The scaling with respect to $d_a$ is maximized with the identity. That is, for all permutations, only $s$ consisting of identities leads the term in Eq.~\eqref{A34} to $\Tr[I_a]^{t-1}=d_a^{t-1}$. Similar to the toy example presented above, other permutations containing ``SWAP'' operations between copies of spaces ranging from $1$ to $(t-1)$ will lead to terms containing $\Tr[\IC_a \cdot \IC_a]=\Tr[\IC_a]=d_a$, while with identities permutation we will obtain $\Tr[\IC_a]\Tr[\IC_a]=\Tr[\IC_a]^2=d_a^2$. We also remark that with the SWAP operators the terms get smaller by at least a factor of $d_a$.

\medskip

(ii).~The scaling with respect to $d_h$ is maximized with the identity.
Consider the summation over all the operators $X_{h_1},\cdots X_{h_{t-1}}$, when computing product of their traces, we see that the identity permutation will maximize the number of identical delta functions, which is the situation when the number of $d_h$ factors is maximal.
To better illustrate this argument, we further provide an example:

Consider $X_1$ and $X_2$ with $X_k=\ketbra{i_k}{j_k}$ and $i_k, j_k=0,\cdots d_h-1$, then we calculate two quantities
\begin{itemize}
    \item Case 1: 
    \begin{align}
        \sum_{X_1,X_2}\Tr[X_1]\Tr[X_2]\Tr[X_1 X_2]&=\sum_{i_1,j_1,i_2,j_2}\Tr[\ketbra{i_1}{j_1}]\Tr[\ketbra{i_2}{j_2}]\Tr[\ket{i_1}\braket{j_1}{i_2}\bra{j_2}]\label{A46}\\ &=\sum_{i_1,j_1,i_2,j_2}\delta_{i_1,j_1}\delta_{i_2,j_2}\delta_{i_1,j_2}\delta_{i_2,j_1}\\
        &=\sum_{i_1}1\\
        &=d_h
    \end{align}
    \item Case 2:
    \begin{align}
        \sum_{X_1,X_2}\Tr[X_1]^2\Tr[X_2]^2&=\sum_{i_1,j_1,i_2,j_2}\Tr[\ketbra{i_1}{j_1}]^2\Tr[\ketbra{i_2}{j_2}]^2\label{A50}\\ &=\sum_{i_1,j_1,i_2,j_2}\delta_{i_1,j_1}^2\delta_{i_2,j_2}^2\\
        &=\sum_{i_1,j_1,i_2,j_2}\delta_{i_1,j_1}\delta_{i_2,j_2}\\
        &=\sum_{i_1,i_2}1\\
        &=d_h^2
    \end{align}
\end{itemize}
We see that case 2 leads to a larger scaling than case 1, since in case 2 the product traces in Eq.~\eqref{A50} eventually lead to only 2 distinct Kronecker delta functions. In contrast, in case 1 although the degree in operators are the same, the traces in Eq~\eqref{A46} induce 4 distinct delta functions and hence the magnitude of the sum is smaller.

More generally. Given a set of operators $\{X_k\}_{k=1}^K$ with $X_k=\ketbra{i_k}{j_k}$ and $i_k, j_k=0,\cdots d_h-1$, given two sets of indices $\{a_k\}_{k=1}^{p}=\{b_k\}_{k=1}^{p}\subseteq[K]$ which are identical sets but the elements could be in different order, i.e. not necessarily $a_k=b_k$, then the traces of their products are
    \begin{equation}
        \Tr[M_{a_1}\cdots M_{a_p}]=\delta_{j_{a_p},i_{a_1}}\delta_{j_{a_1},i_{a_2}}\cdots \delta_{j_{a_{p-1}},i_{a_p}}
    \end{equation}
    and
    \begin{equation}
        \Tr[M_{b_1}\cdots M_{b_p}]=\delta_{j_{b_p},i_{b_1}}\delta_{j_{b_1},i_{b_2}}\cdots \delta_{j_{b_{p-1}},i_{b_p}}\,.
    \end{equation}
    And the product of both traces is
    \begin{equation}
        \Tr[M_{a_1}\cdots M_{a_p}]\Tr[M_{b_1}\cdots M_{b_q}]=\delta_{j_{a_p},i_{a_1}}\delta_{j_{a_1},i_{a_2}}\cdots \delta_{j_{a_{p-1}},i_{a_p}}\delta_{j_{b_q},i_{a_1}}\delta_{j_{b_1},i_{b_2}}\cdots \delta_{j_{b_{q-1}},i_{b_q}}\,.
    \end{equation}
    Summing over all the $i_k$ and $j_k$, where $k\in\{a_l\}_{l=1}^{p}=\{b_l\}_{l=1}^{p}$, we obtain 
    \begin{align}
        \sum_{\substack{i_k,j_k\in\{0,\cdots d-1\} \\ \forall k\in\{a_l\}_{l=1}^{p}=\{b_l\}_{l=1}^{p}}}  \Tr[M_{a_1}\cdots M_{a_p}]\Tr[M_{b_1}\cdots M_{b_q}]&=\frac{d_h^{2p}}{d_h^{r_1}}\\
        &=\frac{d_h^{2p}}{d_h^{2p-r_2}}\\
        &=d_h^{r_2}\,,
    \end{align}
    where $r_1$ denotes the number of distinct Kronecker delta functions since $\delta_{i,j}^2=\delta_{i,j}\,\forall i,j$ and $r_2$ is the number of identical pairs $(a_{k},a_{k+1})=(b_{k},b_{k'+1})$ among all $k\in [p]$. Consequently, to maximize the sum above, one has to maximize the number of identical pairs of indices $r_2$. In addition, we note that failing to maximize the scaling leads to at least one order of $d_h$ magnitude smaller. Together with the $d_a$ case, we have that the less dominant terms are at least $d_a d_h$ smaller than the leading term.

Altogether, we insert Eq.~\eqref{eq:1moment-leading-term} into Eq.~\eqref{A36} and obtain an asymptotic form of the first moment as
\begin{equation}
    \Ebb_{\Ubb(d_a d_h)}[\expval{O}_t]=\frac{\Tr[O]}{d_a d_h}\left(1+\OC\left(\frac{1}{d_a d_h}\right)\right) \; .\label{Eq:FirstMoment}
\end{equation}

\subsubsection{Second moment}
We then similarly calculate the second moment of the QRP output at time step $t$:
\begin{align}
   &\mathbb{E}_{\Ubb(d_a d_h)}\left[\expval{O}^2\right]\nonumber\\
    =&\sum_{{\Xv_{h_{t-1}}, \Yv_{h_{t-1}} } }\int d\mu(U)\Tr\left[\left(X_{h_1},\IC_{a};X_{h_2},\IC_{a};\cdots;X_{h_{t-1}},\IC_{a};O\right)U^{\otimes t}\left(\rhoh{0},\rhoa{1};X_{h_1},\rhoa{2};X_{h_2},\rhoa{3};\cdots;X_{h_{t-1}},\rhoa{t}\right)(U^\dag)^{\otimes t}\right]\nonumber\\
    &\qquad\qquad\cdot \Tr\left[\left(Y_{h_1},\IC_{a};Y_{h_2},\IC_{a};\cdots;Y_{h_{t-1}},\IC_{a};O\right)U^{\otimes t}\left(\rhoh{0},\rhoa{1};Y_{h_1},\rhoa{2};Y_{h_2},\rhoa{3};\cdots;Y_{h_{t-1}},\rhoa{t}\right)(U^\dag)^{\otimes t}\right]\\
    =&\sum_{\Xv_{h_{t-1}}, \Yv_{h_{t-1}}}\int d\mu(U)\Tr\Bigr[\left(X_{h_1},\IC_{a};X_{h_2},\IC_{a};\cdots;X_{h_{t-1}},\IC_{a};O;Y_{h_1},\IC_{a};Y_{h_2},\IC_{a};\cdots;Y_{h_{t-1}},\IC_{a};O\right)U^{\otimes 2t}\nonumber\\
    &\quad\quad\cdot \left(\rhoh{0},\rhoa{1};X_{h_1},\rhoa{2};X_{h_2},\rhoa{3};\cdots;X_{h_{t-1}},\rhoa{t};\rhoh{0},\rhoa{1};Y_{h_1},\rhoa{2};Y_{h_2},\rhoa{3};\cdots;Y_{h_{t-1}},\rhoa{t}\right)(U^\dag)^{\otimes 2t}\Bigr]\\
    =&\frac{1}{(d_a d_h)^{2t}}\biggr(\sum_{{\Xv_{h_{t-1}}, \Yv_{h_{t-1}} } }\bigg(\sum_{\sg\in P_{d_a d_h}(\mathcal{S}_{2t})}\Tr\left[\left(X_{h_1},\IC_{a};X_{h_2},\IC_{a};\cdots;X_{h_{t-1}},\IC_{a};O;Y_{h_1},\IC_{a};Y_{h_2},\IC_{a};\cdots;Y_{h_{t-1}},\IC_{a};O\right)\sg\right]\nonumber\\
    &\qquad\qquad\cdot \Tr\left[\left(\rhoh{0},\rhoa{1};X_{h_1},\rhoa{2};X_{h_2},\rhoa{3};\cdots;X_{h_{t-1}},\rhoa{t};\rhoh{0},\rhoa{1};Y_{h_1},\rhoa{2};Y_{h_2},\rhoa{3};\cdots;Y_{h_{t-1}},\rhoa{t}\right)\sg^{-1}\right]\nonumber\\
     &\qquad+\sum_{\sg\in P_{d_a d_h}(\mathcal{S}_{2t})} c_{\sg,\pi}\Tr\left[\left(X_{h_1},\IC_{a};X_{h_2},\IC_{a};\cdots;X_{h_{t-1}},\IC_{a};O;Y_{h_1},\IC_{a};Y_{h_2},\IC_{a};\cdots;Y_{h_{t-1}},\IC_{a};O\right)\sg\right]\nonumber\\
     &\quad\quad\cdot \Tr\left[\left(\rhoh{0},\rhoa{1};X_{h_1},\rhoa{2};X_{h_2},\rhoa{3};\cdots;X_{h_{t-1}},\rhoa{t};\rhoh{0},\rhoa{1};Y_{h_1},\rhoa{2};Y_{h_2},\rhoa{3};\cdots;Y_{h_{t-1}},\rhoa{t}\right)\pi\right]\bigg)\biggr)\\
     =&\frac{1}{(d_a d_h)^{2t}}\left(\sum_{\sg\in P_{d_a d_h}(\mathcal{S}_{2t})}g^{(2)}_\sg(O)\right)\cdot\left(1+\OC\left((d_a d_h)^{-1}\right)\right)\,,\label{Eq:Summands}
\end{align}
where in the second equality we use the identity $\Tr[A\otimes B]=\Tr[A]\Tr[B]$, in the third one the Lemma~\ref{lemma:L3_deep_QNN}, and in the last one the fact that the prefactors $c_{\sg,\pi}\in\OC\left((d_a d_h)^{-1}\right)$ are exponentially small in numbers of qubits, where we also define
\begin{align}
    g^{(2)}_\sg(O):=&\sum_{\Xv_{h_{t-1}}, \Yv_{h_{t-1}} }\Tr\left[\left(X_{h_1},\IC_{a};\cdots;X_{h_{t-1}},\IC_{a};O;Y_{h_1},\IC_{a};\cdots;Y_{h_{t-1}},\IC_{a};O\right)\sg\right]\nonumber\\
    &\qquad\qquad\cdot\Tr\left[\left(\rhoh{0},\rhoa{1};X_{h_1},\rhoa{2};\cdots;X_{h_{t-1}},\rhoa{t};\rhoh{0},\rhoa{1};Y_{h_1},\rhoa{2};\cdots;Y_{h_{t-1}},\rhoa{t}\right)\sg^{-1}\right]\,.
\end{align}
Using the same arguments as before, here we identify two types of leading terms that maximize the scaling in $d_a$ and $d_h$. First, the one containing $\Tr[O^2]$ will be induced by the symmetric group element $s=((1)(2)\cdots (t-1) (t,2t) (t+1)(t+2)\cdots (2t-1))$, such that the permutation only swaps the $t$-th and $2t$-th spaces and applies identity for the rests. This contributes to the second moment by:
\begin{align}
     g^{(2)}_\sg(O)&=\sum_{{\Xv_{h_{t-1}}, \Yv_{h_{t-1}} } } \Tr[X_{h_1}]^2\cdots\Tr[X_{h_{t-2}}]^2\Tr[Y_{h_1}]^2\cdots\Tr[Y_{h_{t-2}}]^2\Tr[X_{h_{t-1}}]\Tr[Y_{h_{t-1}}]\Tr[X_{h_{t-1}}Y_{h_{t-1}}]\Tr[I_{d_a}]^{2(t-1)}\Tr[O^2]\\
    &=d_a^{2t-2}\sum_{{\Xv_{h_{t-1}}, \Yv_{h_{t-1}} } } \Tr[X_{h_1}]^2\cdots\Tr[X_{h_{t-2}}]^2\Tr[Y_{h_1}]^2\cdots\Tr[Y_{h_{t-2}}]^2\Tr[X_{h_{t-1}}]\Tr[Y_{h_{t-1}}]\Tr[X_{h_{t-1}}Y_{h_{t-1}}]\Tr[O^2]\\
    &=d_a^{2t-2}\sum_{\substack{i_1,j_1;\cdots i_{t-1},j_{t-1} \\ k_1,l_1;\cdots k_{t-1},k_{t-1} }\in\{1,
    \cdots, d_h\} }\Tr[\ketbra{i_1}{j_1}]^2\cdots\Tr[\ketbra{i_{t-2}}{j_{t-2}}]^2\nonumber\\
    &\qquad\Tr[\ketbra{k_1}{l_1}]^2\cdots\Tr[\ketbra{k_{t-2}}{l_{t-2}}]^2 \Tr[\ketbra{i_{t-1}}{j_{t-1}}]\Tr[\ketbra{k_{t-1}}{l_{t-1}}]\Tr[\ket{i_{t-1}}\braket{j_{t-1}}{k_{t-1}}\bra{l_{t-1}}]\Tr[O^2]\\
    &=d_a^{2t-2} d_h^{4(t-1)-2(t-2)-3}\Tr[O^2]\\
    &=d_a^{2t-2} d_h^{2t-3}\Tr[O^2]\,,\label{eq:TrO2}
\end{align}
where in the first line we drop the traces of states which equal 1. 

Another type of potential leading term contains $\Tr[O]^2$ and is given by the symmetric group element $s=((1)(2)\cdots (2t))$, such that the permutation applies identity $\IC_a\otimes\IC_h$ in all spaces. This leads to:
\begin{align}
    g^{(2)}_\sg(O)&=\sum_{{\Xv_{h_{t-1}}, \Yv_{h_{t-1}} } } \Tr[X_{h_1}]^2\cdots\Tr[X_{h_{t-2}}]^2\Tr[Y_{h_1}]^2\cdots\Tr[Y_{h_{t-2}}]^2\Tr[X_{h_{t-1}}]^2\Tr[Y_{h_{t-1}}]^2\Tr[I_{d_a}]^{2(t-1)}\Tr[O]^2\\
    &=d_a^{2t-2}\sum_{{\Xv_{h_{t-1}}, \Yv_{h_{t-1}} } } \Tr[X_{h_1}]^2\cdots\Tr[X_{h_{t-2}}]^2\Tr[Y_{h_1}]^2\cdots\Tr[Y_{h_{t-2}}]^2\Tr[X_{h_{t-1}}]^2\Tr[Y_{h_{t-1}}]^2\Tr[O]^2\\
    &=d_a^{2t-2}\sum_{\substack{i_1,j_1;\cdots i_{t-1},j_{t-1} \\ k_1,l_1;\cdots k_{t-1},k_{t-1} }\in\{1,
    \cdots, d_h\} }\nonumber\\
    &\qquad\Tr[\ketbra{i_1}{j_1}]^2\cdots\Tr[\ketbra{i_{t-2}}{j_{t-2}}]^2\Tr[\ketbra{k_1}{l_1}]^2\cdots\Tr[\ketbra{k_{t-2}}{l_{t-2}}]^2 \Tr[\ketbra{i_{t-1}}{j_{t-1}}]^2\Tr[\ketbra{k_{t-1}}{l_{t-1}}]^2\Tr[O]^2\\
    &=d_a^{2t-2} d_h^{4(t-1)-2(t-1)}\Tr[O]^2\\
    &=d_a^{2t-2} d_h^{2t-2}\Tr[O]^2\,.\label{eq:TrO}
\end{align}
Insert Eq.~\eqref{eq:TrO} and \eqref{eq:TrO2} into Eq.~\eqref{Eq:Summands}, we obtain
\begin{equation}
    \mathbb{E}_{\Ubb(d_a d_h)}\left[\expval{O}_t^2\right]=\left(\frac{\Tr[O^2]}{d_a^2 d_h^3}+\frac{\Tr[O]^2}{d_a^2 d_h^2}\right)\left(1+\OC\left(\frac{1}{d_a d_h}\right)\right)\,.
\end{equation}
We now subtract the square of first moment as in Eq.~\eqref{Eq:FirstMoment} and obtain output's variance over reservoirs sampled from Haar measure on $\Ubb(d_a d_h)$:
\begin{equation}
    \Var_{\Ubb(d_a d_h)}\left[\expval{O}^2_t\right]=\frac{\Tr[O^2]}{d_a^2 d_h^3}\left(1+\OC\left(\frac{1}{d_a d_h}\right)\right) \;.
\end{equation}

Finally, by applying the Chebyshev's tail bound with the average and the variance, we obtain the statement of probabilistic exponential concentration as claimed in Eq.~\eqref{eq:qrp-scrambling-concentration}. 
\subsubsection{Time-dependent terms}\label{app:var_small_t}
Aforementioned calculations focus on variance terms that do not vanish in the number of iterations $t$. And hence those are the asymptotic values of variance for large $t$. Nevertheless, for small $t$, the real variance may deviate from those values, as demonstrated by the numerical simulations in Fig.~\ref{fig:conc_U}. Here, we discuss the non-negligible contributions in the regime of small $t$, which analytically explains the saturation behaviours observed in Fig.~\ref{fig:conc_U}. We recall the variance of output given by Eq.~\eqref{Eq:Summands}:
\begin{align}
   \mathbb{E}_{\Ubb(d_a d_h)}\left[\expval{O}^2\right]
     =&\frac{1}{(d_a d_h)^{2t}}\left(\sum_{\sg\in P_{d_a d_h}(\mathcal{S}_{2t})}g^{(2)}_\sg(O)\right)\cdot\left(1+\OC\left((d_a d_h)^{-1}\right)\right)\,,\label{A88}
\end{align}
where
\begin{align}
    g^{(2)}_\sg(O):=&\sum_{\Xv_{h_{t-1}}, \Yv_{h_{t-1}} }\Tr\left[\left(X_{h_1},\IC_{a};\cdots;X_{h_{t-1}},\IC_{a};O;Y_{h_1},\IC_{a};\cdots;Y_{h_{t-1}},\IC_{a};O\right)\sg\right]\nonumber\\
    &\qquad\qquad\cdot\Tr\left[\left(\rhoh{0},\rhoa{1};X_{h_1},\rhoa{2};\cdots;X_{h_{t-1}},\rhoa{t};\rhoh{0},\rhoa{1};Y_{h_1},\rhoa{2};\cdots;Y_{h_{t-1}},\rhoa{t}\right)\sg^{-1}\right]\,.
\end{align}
\begin{figure}[ht]
    \centering
    \includegraphics[width=0.85\textwidth]{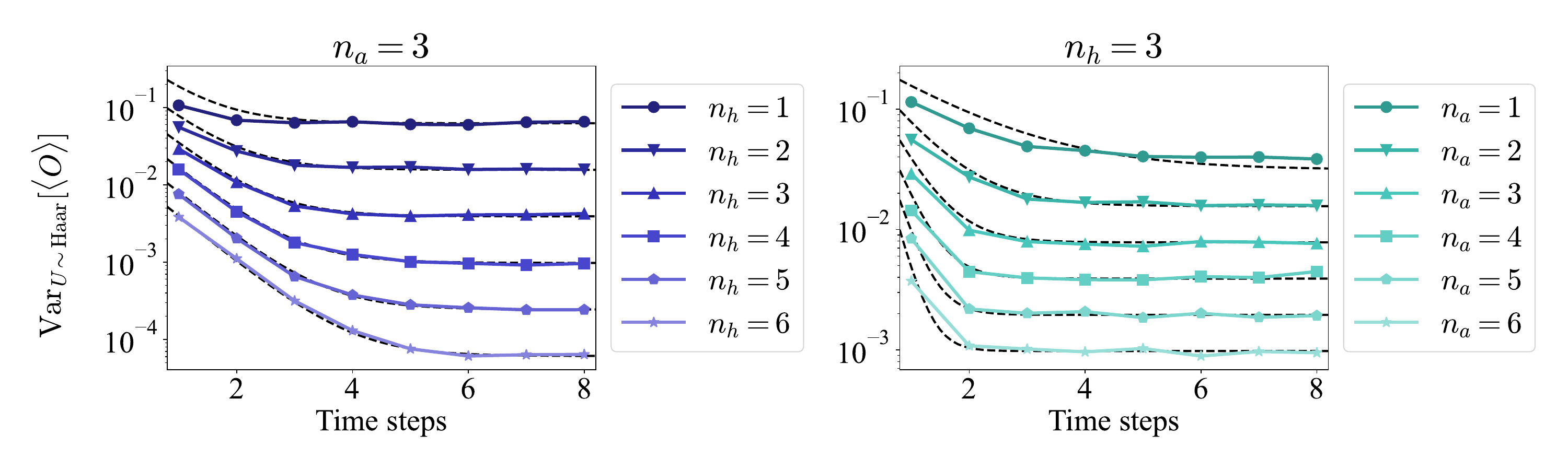}
    \caption{\textbf{Output variance of QRP compared with theoretical results considering time-dependent term.} Output variances of QRP with different qubit numbers are plotted and compared with analytical results (black dashed curves) including the correction term for small $t$ given in Eq.~\eqref{eq:var_correction}.
    }
    \label{fig:conc_u_t}
\end{figure}

Following previous arguments, here the leading terms for smalls should still contains $\Tr[O^2]$. We consider instead the permutation $\sigma=\bigotimes_{\t=1}^t \rm{SWAP}_{\t,\t+t}$, which maximizes the scaling with respect to $d_h$ but leads to a term that vanishes exponentially in $t$ with $1/d_a$ as base. More specifically, this term is:
\begin{align}
     g^{(2)}_\sg(O)&=\sum_{{\Xv_{h_{t-1}}, \Yv_{h_{t-1}} } } \Tr[X_{h_1} Y_{h_1}]\cdots\Tr[X_{h_{t-1}}Y_{t-1}]\Tr[I_{d_a}]^{t-1}\Tr[O^2]\\
    &=d_a^{t-1}\Tr[O^2]\left(\sum_{{X_{h_{1}}, Y_{h_{1}} } } \Tr[X_{h_1}Y_{h_1}]^2\right)^{t-1}\\
    &=d_a^{t-1}\Tr[O^2]\left(\sum_{i,j,k,l\in\{1,\cdots d_h\} } \Tr[\ket{i}\braket{j}{k}\bra{l}]^2\right)^{t-1}\\
    &=d_a^{t-1}\Tr[O^2]\left(\sum_{i,j,k,l\in\{1,\cdots d_h\} } \delta_{i,l}\delta_{j,k}\right)^{t-1}\\
    &=d_a^{t-1}d_h^{2(t-1)}\Tr[O^2]\,,
\end{align}
where in the first line we again ignore all the traces of density matrices and their squares that equal $1$. Combining with Eq.~\eqref{A88}, this leads to a temporally vanishing contribution of the variance:
\begin{equation}
    \Delta_{\rm temp}(t):=\frac{1}{d_a^{t+1}d_h^2}\Tr[O^2]\,.\label{eq:var_correction}
\end{equation}
Moreover, all other temporally vanishing terms are dominated by $\Delta_{\rm temp}(t)$ with a relatively suppressed scaling of $\OC(1/d_h)$. This is because among all the terms containing $\Tr[O^2]$, the one resulted by $\sigma=\bigotimes_{\t=1}^t \rm{SWAP}_{\t,\t+t}$ contributes the maximal scaling in $d_h$, as it leads to the most duplicated Kronecker delta functions when summing over $X_h$'s and $Y_h$'s. 

We further numerically verify this argument. As illustrated in Fig.~\ref{fig:conc_u_t}, we observe that, in particular for large $n_h$, the analytical values are well aligned with simulation results, not only in asymptotic regime with large $t$, but also for small $t$ where the saturation is not achieved yet.

In addition, compared to the asymptotic value of $\frac{\Tr[O^2]}{d_a^2 d_h^3}$ as given in Theorem~\ref{Thm:HaaResVar}, we find the ratio of $\Delta_{\rm temp}(t)=\frac{1}{d_a^{t+1}d_h^2}\Tr[O^2]$ to the asymptotic variance to be $\frac{d_h}{d_a^{t-1}}$. That is, the smaller this ratio, the closer is the variance to the asymptotic, saturated variance. Consequently, the approach to variance saturation presented in Fig.~\ref{fig:conc_U} has been analytically explained --the more the hidden qubits, the slower the saturation; the more the accessible qubits, the faster the saturation.
\end{specialproof}

\subsection{Proof of Theorem~\ref{thm_main:mem_ind_upp_bound}: Exponential memory decay}\label{app:mem_conc}
By using the QRP output form presented in Supplemental Proposition~\ref{sup-prop:qrp-output}, we prove Theorem~\ref{thm_main:mem_ind_upp_bound} in the main text, which shows the the exponential tight upper bound of memory indicators for an arbitrary ensemble of early state (see Definition~\ref{def:init_state_dep} and~\ref{def:input_dep} of the memory indicators). However, before going to the proof of the theorem, we present the following simplified result where the input states are also extremely scrambled. This provides a guideline and exemplify some key steps in more simple setting for the proof of Theorem~\ref{thm:mem_ind_upp_bound}.

\begin{supplemental_proposition}[Decay of memory indicator with scrambling initial states]  For extreme scrambling reservoirs and also an ensemble of extreme scrambling inputs $\SC_a=\{V\ketbra{0}{0}V^\dag \}$ with $V$ drawn from the scrambling ensemble, we have \label{prop:mem_ind_upp_bound_Haar}
\be
   \Ebb_{U\sim\Ubb(d_a d_h)} \left[\MC_a(t;\SC_a,U)\right]=\frac{1}{d_h d_a^t}\left(1+\OC\left(\frac{1}{d_h}\right)\right)\,,
\ee
where $\MC_a(t;\SC_a,U):=\Var_{\rhoa{\t}\sim \SC_a}\left[\expval{O}_{(t+\t)}\right]$. Similarly, for extreme scrambling initial reservoir  states $\SC_h=\{V\ketbra{0}{0}V^\dag : V\in\nu(\Ubb(d_h))\}$, we have 
\be
    \Ebb_{U\sim\Ubb(d_a d_h)} \left[\MC_h(t;\SC_h,U)\right]=\frac{1}{d_h d_a^t}\left(1+\OC\left(\frac{1}{d_h}\right)\right)\,,
\ee
where $\MC_h(t;\SC_h,U):=\Var_{\rhoh{0}\sim \SC_h}\left[\expval{O}_{t}\right]$.
\end{supplemental_proposition}

\begin{specialproof}
We recall the expression in Eq.~\eqref{Eq:OutputExpression} from Supplemental Proposition~\ref{sup-prop:qrp-output} and identify the term containing the initial state $\rhoh{0}$ of hidden qubits:
\begin{align}
    \expval{O}_{t}&=\sum_{\Xv_{h_{t-1}}\in\{|r\rangle\langle r'|_h\}}\Tr\left[\left(X_{h_1},\IC_{a};X_{h_2},\IC_{a};\cdots;X_{h_{t-1}},\IC_{a};O\right)U^{\otimes t}\left(\rhoh{0},\rhoa{1};X_{h_1},\rhoa{2};X_{h_2},\rhoa{3};\cdots;X_{h_{t-1}},\rhoa{t}\right)(U^\dag)^{\otimes t}\right]\\
    &=\sum_{\Xv_{h_{t-1}}\in\{|r\rangle\langle r'|_h\}}\Tr\left[\left(X_{h_1},\IC_{a};X_{h_2},\IC_{a};\cdots;X_{h_{t-1}},\IC_{a};O\right)U^{\otimes t}\left(\rhoh{0},\rhoa{1};X_{h_1},\rhoa{2};X_{h_2},\rhoa{3};\cdots;X_{h_{t-1}},\rhoa{t}\right)(U^\dag)^{\otimes t}\right]\\
    &=\sum_{\Xv_{h_{t-1}}} \Tr[(X_{h_1},\IC_{a}) U (\rhoh{0},\rhoa{1}) U^\dag]\nonumber\\
    &\qquad\cdot \Tr\left[\left(X_{h_2},\IC_{a};\cdots;X_{h_{t-1}},\IC_{a};O\right)U^{\otimes (t-1)}\left(X_{h_1},\rhoa{2};X_{h_2},\rhoa{3};\cdots;X_{h_{t-1}},\rhoa{t}\right)(U^\dag)^{\otimes (t-1)}\right]\,.
\end{align}
Here, ``$,$'' denotes the tensor products of operators in accessible and hidden space, and ``$;$'' denotes the tensor products of composite operators in different copies of the composite space.

We next assume the initial state $\rhoh{0}$ is obtained via a random unitary $U_0$ such that $\rhoh{0}=U_0 \ketbra{0}{0} U_0^\dag$. We calculate the expectation values over $U_0$ sampled from Haar measure. The first moment is
\begin{align}
    \Ebb_{\rhoh{0}}[\expval{O}_t]&=\sum_{\Xv_{h_{t-1}}} \Ebb_{U_0\sim\Ubb(d_h)}\left[\Tr[(X_{h_1},\IC_{a}) U (U_0 \ketbra{0}{0} U_0^\dag,\rhoa{1}) U^\dag]\right]\nonumber\\
    &\qquad\cdot \Tr\left[\left(X_{h_2},\IC_{a};\cdots;X_{h_{t-1}},\IC_{a};O\right)U^{\otimes (t-1)}\left(X_{h_1},\rhoa{2};X_{h_2},\rhoa{3};\cdots;X_{h_{t-1}},\rhoa{t}\right)(U^\dag)^{\otimes (t-1)}\right]\,,\label{A101}
\end{align}
where
\begin{align}
    \Ebb_{U_0\sim\Ubb(d_h)}\left[\Tr[(X_{h_1},\IC_{a}) U (U_0 \ketbra{0}{0} U_0^\dag,\rhoa{1}) U^\dag]\right]
    &=\Tr[(X_{h_1},\IC_{a}) U \left[\left(\int_{\Ubb(d_h)} dU_0 U_0 \ketbra{0}{0} U_0^\dag \right),\rhoa{1}\right] U^\dag]\\
    &=\Tr[(X_{h_1},\IC_{a}) U \left(\frac{\IC_h}{d_h},\rhoa{1}\right) U^\dag]\,.
\end{align}
The second moment is
\begin{align}
    &\Ebb_{\rhoh{0}}[\expval{O}^2_t]\nonumber\\
    =&\sum_{\Xv_{h_{t-1}}} \sum_{\Yv_{h_{t-1}}}\Ebb_{\rhoh{0}}\left[ \Tr[(X_{h_1},\IC_{a}) U (\rhoh{0},\rhoa{1}) U^\dag]\Tr[(Y_{h_1},\IC_{a}) U (\rhoh{0},\rhoa{1}) U^\dag]\right]\nonumber\\
    &\cdot \Tr\bigg[\left(X_{h_2},\IC_{a};\cdots;X_{h_{t-1}},\IC_{a};O;Y_{h_2},\IC_{a};\cdots;Y_{h_{t-1}},\IC_{a};O\right)U^{\otimes t}\\
    &\qquad\left(X_{h_1},\rhoa{2};X_{h_2},\rhoa{3};\cdots;X_{h_{t-1}},\rhoa{t};Y_{h_1},\rhoa{2};Y_{h_2},\rhoa{3};\cdots;Y_{h_{t-1}},\rhoa{t}\right)(U^\dag)^{\otimes t}\bigg]\,,\label{A105}
\end{align}
where
\begin{align}
&\Ebb_{\rhoh{0}}\left[\Tr[(X_{h_1},\IC_{a}) U (\rhoh{0},\rhoa{1}) U^\dag]\Tr[(Y_{h_1},\IC_{a}) U (\rhoh{0},\rhoa{1}) U^\dag]\right]\nonumber\\
    =&\Tr[(X_{h_1},\IC_{a};Y_{h_1},\IC_{a}) U^{\otimes 2} \left(\int_{\Ubb(d_h)} dU_0 U_0^{\otimes 2} \ketbra{0}{0}^{\otimes 2} (U_0^\dag)^{\otimes 2} \otimes\rhoa{1}^{\otimes 2} \right)(U^\dag)^{\otimes 2}]\\
    =&\frac{1}{d_h(d_h+1)}\Tr[(X_{h_1},\IC_{a};Y_{h_1},\IC_{a}) U^{\otimes 2} \left[\left( \IC_h^{\otimes 2} + \rm{SWAP_{h,h}} \right) \otimes\rhoa{1}^{\otimes 2} \right](U^\dag)^{\otimes 2}]\\
    =&\frac{1}{d_h(d_h+1)}\Tr[(X_{h_1},\IC_{a}) U (I ,\rhoa{1} )U^\dag] \Tr[(Y_{h_1},\IC_{a}) U (I  ,\rhoa{1} )U^\dag]\nonumber\\
    &\qquad+\frac{1}{d_h(d_h+1)}\Tr[(X_{h_1},\IC_{a};Y_{h_1},\IC_{a}) U^{\otimes 2} (
    {\rm SWAP_{h,h}} \otimes\rhoa{1}^{\otimes 2} )(U^\dag)^{\otimes 2}]\,.
\end{align}
Here in the last equality the SWAP operator is applied on the two copies of hidden spaces of the first iteration, i.e. ${\rm SWAP}_{h,h}=\sum_{i,j=0}^{d_h-1} \ketbra{i}{j}\otimes\ketbra{j}{i}$. We next combine the expressions of the first and second moment, Eq.~\eqref{A101} and~\eqref{A105}, to calculate the variance of initial state:
\begin{align}
    \Var_{\rhoh{0}}\expval{O}_t &=\Ebb_{\rhoh{0}}[\expval{O}^2_t]-\Ebb_{\rhoh{0}}[\expval{O}_t]^2\\
    &=\sum_{\Xv_{h_{t-1}}} \sum_{\Yv_{h_{t-1}}}\nonumber\\
    &\quad\Bigg(\frac{1}{d_h(d_h+1)}\Tr[(X_{h_1},\IC_{a}) U (\IC_h ,\rhoa{1} )(U^\dag)] \Tr[(Y_{h_1},\IC_{a}) U (\IC_h  ,\rhoa{1} )(U^\dag)]\nonumber\\
    &\qquad+\frac{1}{d_h(d_h+1)}\Tr[(X_{h_1},\IC_{a};Y_{h_1},\IC_{a}) U^{\otimes 2} (
    {\rm SWAP_{h,h}} \otimes\rhoa{1}^{\otimes 2} )(U^\dag)^{\otimes 2}]\nonumber\\
    &\qquad\quad-\Tr[(X_{h_1},\IC_{a}) U \left(\frac{\IC_h}{d_h},\rhoa{1}\right) U^\dag]\Tr[(Y_{h_1},\IC_{a}) U \left(\frac{\IC_h}{d_h},\rhoa{1}\right) U^\dag]\Bigg)\nonumber\\
    &\qquad\cdot \Tr\left[\left(X_{h_2},\IC_{a};\cdots;X_{h_{t-1}},\IC_{a};O;Y_{h_2},\IC_{a};\cdots;Y_{h_{t-1}},\IC_{a};O\right)U^{\otimes t}\left(X_{h_1},\rhoa{2};\cdots;X_{h_{t-1}},\rhoa{t};Y_{h_1},\rhoa{2};\cdots;Y_{h_{t-1}},\rhoa{t}\right)(U^\dag)^{\otimes t}\right]\\
    &=\frac{1}{d_h(d_h+1)}\sum_{\Xv_{h_{t-1}}} \sum_{\Yv_{h_{t-1}}}\nonumber\\
    &\qquad\Bigg(\Tr[(X_{h_1},\IC_{a};Y_{h_1},\IC_{a}) U^{\otimes 2} (
    {\rm SWAP_{h,h}} \otimes\rhoa{1}^{\otimes 2} )(U^\dag)^{\otimes 2}]\nonumber\\
    &\qquad\qquad-\frac{1}{d_h}\Tr[(X_{h_1},\IC_{a}) U (\IC_h ,\rhoa{1} )(U^\dag)] \Tr[(Y_{h_1},\IC_{a}) U (\IC_h  ,\rhoa{1} )(U^\dag)]\Bigg)\nonumber\\
    &\quad\cdot \Tr\big[\left(X_{h_2},\IC_{a};\cdots;X_{h_{t-1}},\IC_{a};O;Y_{h_2},\IC_{a};\cdots;Y_{h_{t-1}},\IC_{a};O\right)U^{\otimes 2(t-1)}\nonumber \\
    &\qquad\qquad\left(X_{h_1},\rhoa{2};\cdots;X_{h_{t-1}},\rhoa{t};Y_{h_1},\rhoa{2};\cdots;Y_{h_{t-1}},\rhoa{t}\right)(U^\dag)^{\otimes 2(t-1)}\big]\,.
\end{align}
We then decompose the $\rm{SWAP}$ operator using its definition and obtain 
\begin{align}
    \Var_{\rhoh{0}}\expval{O}_t&=\frac{1}{d_h(d_h+1)}\sum_{\Xv_{h_{t-1}}} \sum_{\Yv_{h_{t-1}}}\nonumber \\ \label{Eq:var_O_init}
    &\Bigg(\sum_{i,j=0}^{d_h-1}\Tr\Big[\left(X_{h_1},\IC_{a};X_{h_2},\IC_{a};\cdots;X_{h_{t-1}},\IC_{a};O;Y_{h_1},\IC_{a};Y_{h_2},\IC_{a};\cdots;Y_{h_{t-1}},\IC_{a};O\right)U^{\otimes 2t}\nonumber\\
    &\qquad\left( \ketbra{i}{j},\rhoa{1}  ;X_{h_1},\rhoa{2};\cdots;X_{h_{t-1}},\rhoa{t};\ketbra{j}{i},\rhoa{1}; Y_{h_1},\rhoa{2};\cdots;Y_{h_{t-1}},\rhoa{t}\right)(U^\dag)^{\otimes 2t}\Big]\nonumber\\
    &\qquad\qquad\qquad-\frac{1}{d_h}\Tr\Big[\left(X_{h_1},\IC_{a};X_{h_2},\IC_{a};\cdots;X_{h_{t-1}},\IC_{a};O;Y_{h_1},\IC_{a};Y_{h_2},\IC_{a};\cdots;Y_{h_{t-1}},\IC_{a};O\right)U^{\otimes 2t}\nonumber\\
    &\qquad\qquad\qquad\qquad\left(\IC_h,\rhoa{1};X_{h_1},\rhoa{2};\cdots;X_{h_{t-1}},\rhoa{t};\IC_h,\rhoa{1};Y_{h_1},\rhoa{2};\cdots;Y_{h_{t-1}},\rhoa{t}\right)(U^\dag)^{\otimes 2t}\Big]\Bigg)\,,
\end{align}
where in the last equality we obtain the standard form for Haar integral over reservoirs $U$. This quantity gives the input dependence of QRP for a fix reservoir. To find the averaged input dependence given an ensemble of reservoirs, we further calculate the expectation value of the variance, and here over Haar random reservoirs. We use again Lemma~\ref{lemma:L3_deep_QNN} to find the dominant term of the memory indicator:
\begin{align}
    \Ebb_{U}[\Var_{\rhoh{0}}\expval{O}_t]=&\frac{1}{d_h(d_h+1)}\frac{1}{d_a^{2t} d_h^{2t}}\left(1+\OC\left(\frac{1}{d_a d_h}\right)\right)\sum_{\Xv_{h_{t-1}}} \sum_{\Yv_{h_{t-1}}}\sum_{\sg\in P_{d_a d_h}(\mathcal{S}_{t})} \nonumber\\
    &\Tr\Big[\left(X_{h_1},\IC_{a};X_{h_2},\IC_{a};\cdots;X_{h_{t-1}},\IC_{a};O;Y_{h_1},\IC_{a};Y_{h_2},\IC_{a};\cdots;Y_{h_{t-1}},\IC_{a};O\right) \sg \Big]\nonumber\\
    &\qquad\Bigg(\sum_{i,j=0}^{d_h-1}\Tr\Big[\left( \ketbra{i}{j},\rhoa{1}  ;X_{h_1},\rhoa{2};\cdots;X_{h_{t-1}},\rhoa{t};\ketbra{j}{i},\rhoa{1}; Y_{h_1},\rhoa{2};\cdots;Y_{h_{t-1}},\rhoa{t}\right)\sg^{-1}\Big]\nonumber\\
    &\qquad\qquad-\frac{1}{d_h}\Tr\Big[\left(I,\rhoa{1};X_{h_1},\rhoa{2};\cdots;X_{h_{t-1}},\rhoa{t};I,\rhoa{1};Y_{h_1},\rhoa{2};\cdots;Y_{h_{t-1}},\rhoa{t}\right)\sg^{-1}\Big]\Bigg)\,.
\end{align}
Here we observe that the final expression is given by the difference of two Haar integrals. We compare the terms therein and find that all the terms with $P_{d_a d_h}(\sg^{-1})$ containing identity operators on the first or $(t+1)$-th space will be cancelled, since $\sum_{i,j}\Tr[\ketbra{i}{j}]g(\ketbra{j}{i})=g(\IC)$ for arbitrary function $g$. Among all the rest terms, we identify the leading term corresponding to the permutation $\bigotimes_{\t=1}^{t}{\rm SWAP}_{\t,t+\t}$ and obtain the initial state dependence for QRP with Haar random initial state and reservoir:
\begin{align}
    \Ebb_{U}[\Var_{\rhoh{0}}\expval{O}_t]=&\frac{1}{d_h(d_h+1)}\frac{1}{d_a^{2t} d_h^{2t}}\left(1+\OC\left(\frac{1}{d_h}\right)\right)\sum_{\Xv_{h_{t-1}}} \sum_{\Yv_{h_{t-1}}}\left(\sum_{i,j=0}^{d_h-1}\Tr[|i\rangle\langle j|j\rangle\langle i|]\right) \left(\prod_{\t=1}^{t-1} \Tr[X_\t Y_\t]^2\Tr[I_a]\right)\Tr[O^2]\\
    =&\frac{1}{d_h d_a^t}\left(1+\OC\left(\frac{1}{d_h}\right)\right)\,.
\end{align}
To calculate the expectation of memory indicator for early inputs $\MC_a(t;\SC_a,U):=\Var_{\rhoa{\t}\sim \SC_a}\left[\expval{O}_{(t+\t)}\right]$, we use the same approach. Without loss of generality, we here set $\t=0$. We first calculate the output's variance over the input state at an arbitrary time step. Compared with Eq.~\eqref{Eq:var_O_init} the only difference is that the Haar integral is taken over unitaries in accessible space instead of hidden space. That is, we view the objective input step as the first step and consider the ensemble of input states $\{\rhoa{1}=V\rho_0 V^\dag : V\in\Ubb(d_a)\}$. Thus we obtain:
\begin{align}
    \Var_{\rhoa{1}}\expval{O}_t&=\frac{1}{d_a(d_a+1)}\sum_{\Xv_{h_{t-1}}} \sum_{\Yv_{h_{t-1}}} \nonumber\\ 
    &\Bigg(\sum_{i,j=0}^{d_a-1}\Tr\Big[\left(X_{h_1},\IC_{a};X_{h_2},\IC_{a};\cdots;X_{h_{t-1}},\IC_{a};O;Y_{h_1},\IC_{a};Y_{h_2},\IC_{a};\cdots;Y_{h_{t-1}},\IC_{a};O\right)U^{\otimes 2t}\nonumber\\
    &\qquad\left( \rhoh{0},\ketbra{i}{j} ;X_{h_1},\rhoa{2};\cdots;X_{h_{t-1}},\rhoa{t}; \rhoh{0},\ketbra{j}{i}; Y_{h_1},\rhoa{2};\cdots;Y_{h_{t-1}},\rhoa{t}\right)(U^\dag)^{\otimes 2t}\Big]\nonumber\\
    &\qquad\qquad\qquad-\frac{1}{d_a}\Tr\Big[\left(X_{h_1},\IC_{a};X_{h_2},\IC_{a};\cdots;X_{h_{t-1}},\IC_{a};O;Y_{h_1},\IC_{a};Y_{h_2},\IC_{a};\cdots;Y_{h_{t-1}},\IC_{a};O\right)U^{\otimes 2t}\label{B82}\nonumber\\
    &\qquad\qquad\qquad\qquad\left(\rhoh{0},\IC_{a};X_{h_1},\rhoa{2};\cdots;X_{h_{t-1}},\rhoa{t};\rhoh{0},\IC_{a};Y_{h_1},\rhoa{2};\cdots;Y_{h_{t-1}},\rhoa{t}\right)(U^\dag)^{\otimes 2t}\Big]\Bigg)\,.
\end{align}
Finally we average this quantity over Haar random reservoirs and find that the leading term is given by the same permutation as before:
\begin{align}
    \Ebb_{U}[\Var_{\rhoa{1}}\expval{O}_t]=&\frac{1}{d_a(d_a+1)}\frac{1}{d_a^{2t} d_h^{2t}}\left(1+\OC\left(\frac{1}{d_a d_h}\right)\right)\sum_{\Xv_{h_{t-1}}} \sum_{\Yv_{h_{t-1}}}\sum_{\sg\in P_{d_a d_h}(\mathcal{S}_{t})}\nonumber \\
    &\Tr\Big[\left(X_{h_1},\IC_{a};X_{h_2},\IC_{a};\cdots;X_{h_{t-1}},\IC_{a};O;Y_{h_1},\IC_{a};Y_{h_2},\IC_{a};\cdots;Y_{h_{t-1}},\IC_{a};O\right) \sg \Big]\nonumber\\
    &\qquad\Bigg(\sum_{i,j=0}^{d_a-1}\Tr\Big[\left( \rhoh{0},\ketbra{i}{j}  ;X_{h_1},\rhoa{2};\cdots;X_{h_{t-1}},\rhoa{t};\rhoh{0},\ketbra{j}{i}; Y_{h_1},\rhoa{2};\cdots;Y_{h_{t-1}},\rhoa{t}\right)\sg^{-1}\Big]\nonumber\\
    &\qquad\qquad-\frac{1}{d_a}\Tr\Big[\left(\rhoh{0},\IC_{a};X_{h_1},\rhoa{2};\cdots;X_{h_{t-1}},\rhoa{t};\rhoh{0},\IC_{a};Y_{h_1},\rhoa{2};\cdots;Y_{h_{t-1}},\rhoa{t}\right)\sg^{-1}\Big]\Bigg)\label{B87}\\
    =&\frac{1}{d_a(d_a+1)}\frac{1}{d_a^{2t} d_h^{2t}}\left(1+\OC\left(\frac{1}{d_h}\right)\right)\sum_{\Xv_{h_{t-1}}} \sum_{\Yv_{h_{t-1}}}\left(\sum_{i,j=0}^{d_a-1}\Tr[|i\rangle\langle j|j\rangle\langle i|]\right) \left(\prod_{\t=1}^{t-1} \Tr[X_\t Y_\t]^2\Tr[I_a]\right)\Tr[O^2]\\
    =&\frac{1}{d_h d_a^t}\left(1+\OC\left(\frac{1}{d_h}\right)\right)\,.
\end{align}
\end{specialproof}

\bigskip

\paragraph*{\underline{We are now ready for the main proof of Theorem~\ref{thm_main:mem_ind_upp_bound}.}} Here, we consider an arbitrary ensemble of input states instead of Haar random inputs. That is, $\rhoa{1}$ is sampled from an ensemble $\SC_a$ with a measure $d\nu(\rhoa{1})$ and the average of input states is $\Bar{\rho}:=\int_{\rhoa{1}} d\nu(\rhoa{1}) \rhoa{1}$. Similarly we consider an initial state ensemble $\SC_h$ with the measure $d\nu(\rhoh{0})$ and the averaged initial state 
$\overline{\rhoh{0}}$.

We restate the theorem again below for readers' convenience. In addition, we note that complementary to Fig~\ref{fig:inp_dep} in the main text, further numerical simulation is provided in  Fig.~\ref{fig:init_dep}. In particular, it shows the scaling of $\MC_h(t;\SC_h,U)$ which is found to be in an agreement with the theoretical prediction, further supporting Theorem~\ref{thm_main:mem_ind_upp_bound}.

\begin{customthm}{2}[Upper bound of memory of arbitrary early input and initial state ensemble)]\label{thm:mem_ind_upp_bound}
    For extreme scrambling reservoirs and an arbitrary ensemble of inputs $\SC_a$,  we have
\begin{align}
       \Ebb_{U\sim\Ubb(d_a d_h)}\left[\MC_a(t;\SC_a,U)\right]\in\OC\left(\frac{1}{d_h d_a^t}\right)\;.
\end{align}
Similarly, for an arbitrary ensemble of initial reservoir states $\SC_h$, we have 
\begin{align}
    \Ebb_{U\sim\Ubb(d_a d_h)}\left[\MC_h(t;\SC_h,U)\right]\in\OC\left(\frac{1}{d_h d_a^t}\right)\;.
\end{align}
\end{customthm}

\begin{figure}[ht]
    \centering
    \includegraphics[width=0.48\textwidth]{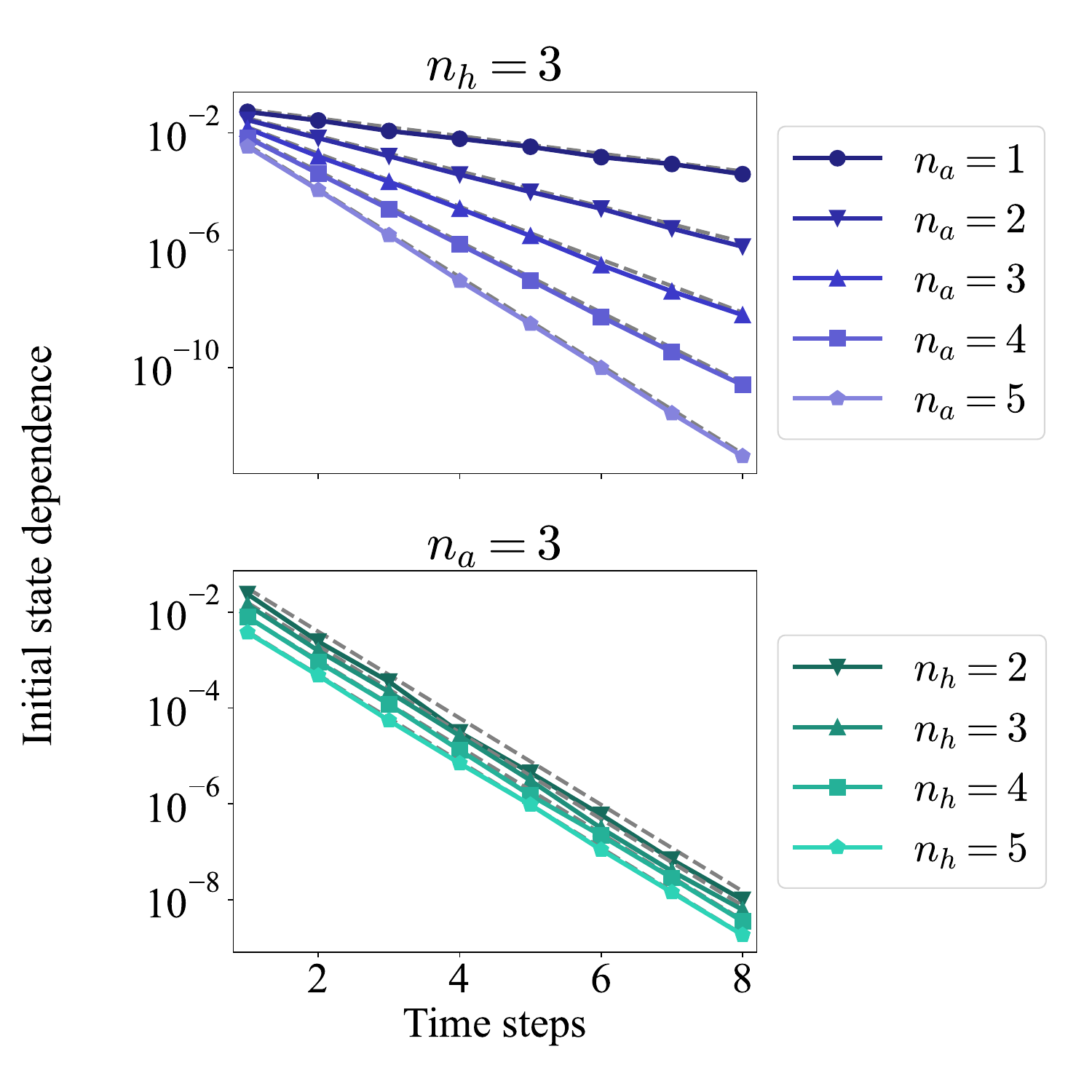}
    
    \caption{\textbf{Initial reservoir state dependence.} The averaged dependence of outputs on initial state of hidden qubits of $\Ebb_{U\sim\Ubb(d_a d_h)}[\MC_h(t;\SC_h,U)]$ are plotted for different qubit numbers. Here both the reservoirs and initial states are Haar random. The grey lines indicate the bounds given in Theorem~\ref{thm_main:mem_ind_upp_bound}.
    }
    \label{fig:init_dep}
\end{figure}
\begin{specialproof}
Using the similar approach as before, we obtain
\begin{align}
    \Var_{\rhoa{1}\sim \SC_a }\expval{O}_t&=\sum_{\Xv_{h_{t-1}}} \sum_{\Yv_{h_{t-1}}} \nonumber\\
    &\Bigg(\int d\nu(\rhoa{1}){\rhoa{1}}\Tr\Big[\left(X_{h_1},\IC_{a};X_{h_2},\IC_{a};\cdots;X_{h_{t-1}},\IC_{a};O;Y_{h_1},\IC_{a};Y_{h_2},\IC_{a};\cdots;Y_{h_{t-1}},\IC_{a};O\right)U^{\otimes 2t}\nonumber\\
    &\qquad\left( \rhoh{0},\rhoa{1} ;X_{h_1},\rhoa{2};\cdots;X_{h_{t-1}},\rhoa{t}; \rhoh{0},\rhoa{1}; Y_{h_1},\rhoa{2};\cdots;Y_{h_{t-1}},\rhoa{t}\right)(U^\dag)^{\otimes 2t}\Big]\nonumber\\
    &\qquad\qquad\qquad-\Tr\Big[\left(X_{h_1},\IC_{a};X_{h_2},\IC_{a};\cdots;X_{h_{t-1}},\IC_{a};O;Y_{h_1},\IC_{a};Y_{h_2},\IC_{a};\cdots;Y_{h_{t-1}},\IC_{a};O\right)U^{\otimes 2t}\nonumber\\
    &\qquad\qquad\qquad\qquad\left(\rhoh{0},\Bar{\rho};X_{h_1},\rhoa{2};\cdots;X_{h_{t-1}},\rhoa{t};\rhoh{0},\Bar{\rho};Y_{h_1},\rhoa{2};\cdots;Y_{h_{t-1}},\rhoa{t}\right)(U^\dag)^{\otimes 2t}\Big]\Bigg)\,,
\end{align}
where $\Bar{\rho}:=\int_{\rhoa{1}} d\nu(\rhoa{1}) \rhoa{1}$. We average the variance over Haar random reservoirs and obtain
\begin{align}
    \Ebb_{U}[\Var_{\rhoa{1}\sim \SC_a }\expval{O}_t]=&\frac{1}{d_a^{2t} d_h^{2t}}\left(1+\OC\left(\frac{1}{d_a d_h}\right)\right)\sum_{\Xv_{h_{t-1}}} \sum_{\Yv_{h_{t-1}}}\sum_{\sg\in P_{d_a d_h}(\mathcal{S}_{t})} \nonumber\\
    &\Tr\Big[\left(X_{h_1},\IC_{a};X_{h_2},\IC_{a};\cdots;X_{h_{t-1}},\IC_{a};O;Y_{h_1},\IC_{a};Y_{h_2},\IC_{a};\cdots;Y_{h_{t-1}},\IC_{a};O\right) \sg \Big]\nonumber\\
    &\qquad\Bigg(\int d\nu(\rhoa{1})\Tr\Big[\left( \rhoh{0},\rhoa{1}  ;X_{h_1},\rhoa{2};\cdots;X_{h_{t-1}},\rhoa{t};\rhoh{0},\rhoa{1}; Y_{h_1},\rhoa{2};\cdots;Y_{h_{t-1}},\rhoa{t}\right)\sg^{-1}\Big]\nonumber\\
    &\qquad\qquad-\Tr\Big[\left(\rhoh{0},\Bar{\rho};X_{h_1},\rhoa{2};\cdots;X_{h_{t-1}},\rhoa{t};\rhoh{0},\Bar{\rho};Y_{h_1},\rhoa{2};\cdots;Y_{h_{t-1}},\rhoa{t}\right)\sg^{-1}\Big]\Bigg)\\
    =&\frac{1}{d_a^{2t} d_h^{2t}}\left(1+\OC\left(\frac{1}{d_h}\right)\right)\sum_{\Xv_{h_{t-1}}} \sum_{\Yv_{h_{t-1}}}\left(\int d\nu(\rhoa{1})\Tr[\rhoa{1}^2]-\Tr[\Bar{\rho}^2]\right) \left(\prod_{\t=1}^{t-1} \Tr[X_\t Y_\t]^2\Tr[\IC_a]\right)\Tr[O^2]\\
    =&\frac{1}{d_h d_a^t}\left(1+\OC\left(\frac{1}{d_h}\right)\right)\left(1-\Tr[\Bar{\rho}^2]\right)\\
    \leq&\frac{1}{d_h d_a^t}\left(1+\OC\left(\frac{1}{d_h}\right)\right)\,.\label{eq:A117}
\end{align}
Here in the third equality, the term $\Tr[\Bar{\rho}^2]$ is the purity of the averaged input states. The limit of this purity, as the distribution of inputs approaches Haar measure, equals $1/d_a$, which can also be identified in Eq.~\eqref{B87} and~\eqref{B82}.

Using the same approach, we also upper bound the initial state dependence for an arbitrary ensemble of initial state. We obtain
\begin{align}
    \Ebb_{U}[\Var_{\rhoh{0}\sim \SC_h }\expval{O}_t]=&\frac{1}{d_h d_a^t}\left(1+\OC\left(\frac{1}{d_h}\right)\right)\left(1-\Tr[\overline{\rhoh{0}}^2]\right)\\
    \leq&\frac{1}{d_h d_a^t}\left(1+\OC\left(\frac{1}{d_h}\right)\right)\;,
\end{align}
which completes the proof.
\end{specialproof}

Next, we prove the Corollary~\ref{cor:var_mem} on exponential upper bound of memory indicator's variance. To do this, we first show a lemma:
\begin{lemma}\label{lemma:upper_bound_var_by_E}
    Given a positive function $g(A)$, we have 
    \be
        \Var_A[g(A)]\leq\max_A[g(A)]\Ebb_A[g(A)]\,.
    \ee
\end{lemma}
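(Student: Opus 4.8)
The plan is to prove this by the standard decomposition of the variance into moments, followed by a pointwise bound that exploits the positivity of $g$. The statement is elementary, so the proof will be short; the only substantive ingredient is that positivity of $g$ lets me peel off one factor and replace it by its maximum.

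First I would write the variance as the difference of the second moment and the squared mean,
\begin{align}
    \Var_A[g(A)]=\Ebb_A[g(A)^2]-\left(\Ebb_A[g(A)]\right)^2\,.
\end{align}
Since $\left(\Ebb_A[g(A)]\right)^2\geq 0$, dropping this term gives the elementary upper bound $\Var_A[g(A)]\leq \Ebb_A[g(A)^2]$.

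Next I would bound the second moment. Because $g(A)\geq 0$ for all $A$, we have the pointwise inequality $g(A)^2=g(A)\cdot g(A)\leq \max_A[g(A)]\cdot g(A)$, where the factor $\max_A[g(A)]$ is a constant independent of $A$. Taking the expectation of both sides and using linearity and monotonicity of $\Ebb_A[\cdot]$ yields
\begin{align}
    \Ebb_A[g(A)^2]\leq \max_A[g(A)]\,\Ebb_A[g(A)]\,.
\end{align}
Chaining this with the bound from the previous step completes the argument. The main point to flag, rather than an obstacle, is that positivity of $g$ is genuinely used: it is what justifies replacing one copy of $g(A)$ by $\max_A[g(A)]$ while preserving the inequality direction. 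Without $g\geq 0$ the same manipulation could reverse the sign. In the intended application one takes $g=\MC_a$ or $g=\MC_h$, which are variances and hence non-negative, so the hypothesis is automatically met and the lemma directly converts the first-moment bounds of Theorem~\ref{thm_main:mem_ind_upp_bound} into the variance bounds of Corollary~\ref{cor:var_mem}.
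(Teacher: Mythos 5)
Your proposal is correct and coincides with the paper's own proof: both drop the non-negative squared mean to bound the variance by the second moment, and then use positivity of $g$ to bound $g(A)^2 \leq \max_A[g(A)]\cdot g(A)$ pointwise before taking the expectation. Nothing to add.
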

\begin{specialproof}
    \begin{align}
        \Var_A[g(A)]&=\Ebb_A[g(A)^2]-\Ebb_A[g(A)]^2\\
        &\leq\Ebb_A[g(A)^2]\\
        &\leq\Ebb_A[g(A)\cdot \max_A[g(A)]]\\
        &=\max_A[g(A)]\Ebb_A[g(A)]
    \end{align}
\end{specialproof}

By applying this lemma to $\MC_a$, we obtain
\begin{equation}
    \Var_U[\MC_a(t;\SC_a,U)]\leq\max_U[\MC_a(t;\SC_a,U)]\Ebb_U[\MC_a(t;\SC_a,U)]\,, \label{eq:A125}
\end{equation}
where $\MC_a(t;\SC_a,U)=\Var_{\rhoa{\t}\sim \SC_a }\expval{O}_{\t+t}$, for which we can apply Lemma~\ref{lemma:upper_bound_var_by_E} again and obtain
\begin{align}
    \Var_{\rhoa{\t}\sim \SC_a }\expval{O}_{\t+t}&\leq\max_{\rhoa{\t}}[\expval{O}_{\t+t}]\Ebb_{\rhoa{\t}}[\expval{O}_{\t+t}]\\
    &\leq O_{\rm{max}}^2\in\OC(1)\,,\label{eq:A127}
\end{align}
where $O_{\rm{max}}$ denotes the maximal eigenvalue of $O$.

We then combine Eq.~\eqref{eq:A117}, \eqref{eq:A125}, and~\eqref{eq:A127}, and obtain the following corollary:
\begin{corollary}[Concentration of memory indicators]
    For extreme scrambling reservoirs and an arbitrary ensemble of inputs $\SC_a$, we have
\begin{align}
       \Var_{U\sim\Ubb(d_a d_h)}\left[\MC_a(t;\SC_a,U)\right]\in\OC\left(\frac{1}{d_h d_a^t}\right)\;.
\end{align}
Similarly, for an arbitrary ensemble of initial reservoir states $\SC_h$, we have 
\begin{align}
    \Var_{U\sim\Ubb(d_a d_h)}\left[\MC_h(t;\SC_h,U)\right]\in\OC\left(\frac{1}{d_h d_a^t}\right)\;.
\end{align}
\end{corollary}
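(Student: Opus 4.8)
The plan is to obtain this result directly from the expectation bound of Theorem~\ref{thm_main:mem_ind_upp_bound} combined with the elementary variance inequality of Lemma~\ref{lemma:upper_bound_var_by_E}, without performing any further Haar integration. The essential observation is that each memory indicator is itself a \emph{non-negative} quantity, being a variance over the early-state ensemble, so its fluctuation across the reservoir ensemble can be controlled by the product of its mean and its sup-norm. Since the mean has already been shown to decay exponentially, it remains only to verify that the sup-norm factor is harmless.

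Concretely, I would first apply Lemma~\ref{lemma:upper_bound_var_by_E} with the random variable $A=U$ and the non-negative function $g(U)=\MC_a(t;\SC_a,U)$, giving
\begin{equation}
  \Var_{U}\bigl[\MC_a(t;\SC_a,U)\bigr] \leq \max_{U}\bigl[\MC_a(t;\SC_a,U)\bigr]\,\Ebb_{U}\bigl[\MC_a(t;\SC_a,U)\bigr]\,.
\end{equation}
The second factor is already handled: Theorem~\ref{thm_main:mem_ind_upp_bound} supplies $\Ebb_{U}[\MC_a(t;\SC_a,U)]\in\OC(1/(d_h d_a^t))$, which delivers the entire exponential suppression. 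For the first factor I would apply the same lemma a second time, now over the input ensemble, exploiting that $\MC_a(t;\SC_a,U)=\Var_{\rhoa{\t}\sim\SC_a}[\expval{O}_{\t+t}]$ is again a variance of a bounded quantity: this yields $\MC_a\leq\max_{\rhoa{\t}}[\expval{O}_{\t+t}]\,\Ebb_{\rhoa{\t}}[\expval{O}_{\t+t}]\leq O_{\rm max}^2$, where $O_{\rm max}$ is the largest-magnitude eigenvalue of $O$. Because this bound depends only on $O$ and not on $U$ or the ensemble, one has $\max_{U}[\MC_a]\in\OC(1)$; multiplying the two factors gives $\Var_{U}[\MC_a]\in\OC(1/(d_h d_a^t))$. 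The claim for $\MC_h$ then follows by repeating the identical two-step argument verbatim, with the input ensemble replaced by the initial-state ensemble $\SC_h$ and using the corresponding line of Theorem~\ref{thm_main:mem_ind_upp_bound}.

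This argument carries essentially no obstacle, since all the analytical work — the unrolling of Supplemental Proposition~\ref{sup-prop:qrp-output} and the high-moment Haar integration producing the $1/(d_h d_a^t)$ scaling of the \emph{mean} — has already been completed in Theorem~\ref{thm_main:mem_ind_upp_bound}. The only point requiring genuine care is confirming that the sup-norm factor $\max_{U}[\MC_a]$ does not secretly reintroduce a dimension-dependent term; this is guaranteed because $\expval{O}_{\t+t}$ is bounded by $\norm{O}_\infty=O_{\rm max}$ independently of $n_a$ and $n_h$, so $O_{\rm max}^2$ truly lies in $\OC(1)$. With this verified, the exponential concentration of the memory indicators — and hence, via Chebyshev as noted in the main text, their exponential decay with probability close to $1$ — follows immediately.
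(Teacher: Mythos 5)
Your proposal is correct and follows essentially the same route as the paper: the paper's proof likewise applies Lemma~\ref{lemma:upper_bound_var_by_E} twice --- once over $U$ to get $\Var_U[\MC_a]\leq\max_U[\MC_a]\,\Ebb_U[\MC_a]$, and once over the early-state ensemble to bound $\MC_a\leq O_{\rm max}^2\in\OC(1)$ --- before invoking the expectation bound of Theorem~\ref{thm_main:mem_ind_upp_bound}. Your explicit check that $O_{\rm max}^2$ is dimension-independent matches the paper's reasoning, so nothing is missing.
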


\section{Practical consequences of exponential concentration on QRP}\label{appendix_prob_theory_refresher}
\subsection{Preliminaries: Hypothesis testing}
We first provide some standard results of statistical indistinguishability of probability distributions using hypothesis testing tools
\begin{lemma}\label{lemma:guess-distribution}
Consider two probability distributions $\PC$ and $\QC$ over some finite set of outcomes $\IC$. Suppose we are given a single sample $S$ drawn from either $\PC$ or $\QC$ with equal probability, and we have the following two hypotheses:
\begin{itemize}
    \item Null hypothesis $\mathcal{H}_0$: $S$ is drawn from $\mathcal{P}$\,,
    \item Alternative hypothesis $\mathcal{H}_1:$ $S$ is drawn from $\mathcal{Q}$\,.
\end{itemize}
The probability of correctly deciding the true hypothesis is upper bounded as
\begin{align}
    {\rm Pr}[``{\rm right \; decision \; between \, } \HC_0 \, {\rm and} \, \HC_1"] \leq \frac{1}{2} + \frac{\| \PC - \QC \|_1}{4} \;, 
\end{align}
where we denote $\| \PC - \QC \|_1 = \sum_{s \in \IC} |p(s) - q(s)|$ as the 1-norm between the probability vectors (2 $\times$ the total variation distance).
\end{lemma}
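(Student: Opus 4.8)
The plan is to reduce the problem to identifying the optimal decision rule and then to relate its success probability to the $1$-norm via an elementary identity for the maximum of two numbers. First I would observe that any (deterministic) decision strategy is specified by a partition of the outcome set $\IC$ into a region $A_0$ on which one declares the null hypothesis $\HC_0$ and its complement $A_1 = \IC \setminus A_0$ on which one declares $\HC_1$. Because the two hypotheses are equally likely a priori, the probability of a correct decision for a given rule is
\begin{equation}
    \Pro[\text{correct}] = \frac{1}{2}\sum_{s \in A_0} p(s) + \frac{1}{2}\sum_{s \in A_1} q(s),
\end{equation}
where $p$ and $q$ denote the probability mass functions of $\PC$ and $\QC$ respectively.

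The key step is to maximize this quantity over all choices of $A_0$. Since each outcome $s$ contributes $\tfrac12 p(s)$ when assigned to $A_0$ and $\tfrac12 q(s)$ when assigned to $A_1$, and these choices are independent across outcomes, the best rule assigns each $s$ to whichever hypothesis carries the larger mass, contributing $\tfrac12 \max\{p(s), q(s)\}$. This is precisely the maximum-a-posteriori (likelihood-ratio) test, and it upper bounds the success probability of every rule:
\begin{equation}
    \Pro[\text{correct}] \leq \frac{1}{2}\sum_{s \in \IC} \max\{p(s), q(s)\}.
\end{equation}

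I would then apply the elementary identity $\max\{a,b\} = \tfrac12(a + b + |a-b|)$ termwise and use the normalization $\sum_s p(s) = \sum_s q(s) = 1$ to obtain $\sum_s \max\{p(s), q(s)\} = 1 + \tfrac12 \|\PC - \QC\|_1$. Substituting this back yields the claimed bound $\tfrac12 + \tfrac14 \|\PC - \QC\|_1$. The argument is entirely elementary, so I do not expect a genuine obstacle; the only point warranting care is the claim that the best deterministic rule is optimal among \emph{all} (possibly randomized) tests. This follows because the success probability is a linear function of the decision probabilities $\Pro[\text{declare } \HC_0 \mid s]$, and a linear functional over the probability simplex attains its maximum at a vertex, i.e.\ at a deterministic assignment. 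Hence the bound holds for every decision strategy, as required.
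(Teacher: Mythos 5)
Your proof is correct and takes essentially the same route as the paper's: both reduce to the maximum-a-posteriori test and rewrite its success probability, $\tfrac{1}{2}\sum_{s}\max\{p(s),q(s)\}$, in terms of $\|\PC-\QC\|_1$ via the same elementary rearrangement (the paper splits the sum over the region where $p(s)>q(s)$, which is just your identity $\max\{a,b\}=\tfrac{1}{2}(a+b+|a-b|)$ in disguise). Your closing remark that deterministic rules are optimal among all randomized strategies, by linearity over the decision simplex, makes rigorous a step the paper's proof merely asserts.
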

\begin{proof}
There exists a region $\AC$ such that $p(s) > q(s)$ for all $s \in \AC$. The optimal decision making strategy is to choose that the given sample $S$ is drawn from $\PC$ if it falls in the region i.e., $S \in \AC$ and choose $\QC$, otherwise. The probability of making the right decision can be expressed as
\begin{align}
    {\rm Pr}[``{\rm right \; decision \; between \, } \HC_0 \, {\rm and} \, \HC_1"] =&   {\rm Pr}( S \in \AC | S \sim \PC) {\rm Pr}(S \in \PC) + {\rm Pr}( S \notin \AC | S \sim \QC){\rm Pr}(S \in \QC) \\
    =& \frac{1}{2}\left[ {\rm Pr}( S \in \AC | S \sim \PC) + {\rm Pr}( S \notin \AC | S \sim \QC)  \right] \\
    =& \frac{1}{2}\left[ \sum_{s \in \AC} p(s) + \sum_{s \notin \AC} q(s) \right] \;, \label{eq:appx-swap1}
\end{align}
where the second equality is due to the sample being equally likely to be drawn from either $\PC$ or $\QC$. In the last equality, we use the fact that given that the sample is from $\PC$, the probability that this sample takes any value within the region $\AC$ is simply $ \sum_{s \in \AC} p(s)$, and similarly for $s \notin \AC$.

The 1-norm between probability vectors can be written as 
\begin{align}
    \| \PC - \QC \|_1 = & \sum_{s \in \IC} |p(s) - q(s)| \\
    = & \sum_{s \in \AC} (p(s) - q(s)) + \sum_{s \notin \AC} (q(s) - p(s)) \label{eq:tvd}\;,
\end{align}
where we have separated terms in the sum based on the region $\AC$.
Lastly, we notice that
\begin{align}
    \frac{2 + \| \PC - \QC \|_1}{2} & = \frac{1}{2}\left(\sum_{s\in\IC}p(s) + \sum_{s\in\IC}q(s)+ \| \PC - \QC \|_1 \right)\\
    & = \sum_{s \in \AC} p(s) + \sum_{s \notin \AC} q(s) \;,
\end{align}
where in the second line we have used Eq.~\eqref{eq:tvd}. Substituting this back to Eq.~\eqref{eq:appx-swap1}, we obtain the desired result. 
\end{proof}

\medskip

\begin{lemma}\label{sup-prop:indistin-prob}
Consider two binary probability distributions $\PC_0=(p_0,1-p_0)$ and $\PC_\varepsilon=(p_\varepsilon,1-p_\varepsilon)$ where $p_\epsilon = p_0+\varepsilon$. Suppose we are given $N$ samples (denoted as $\MC$) drawn from either $\PC_0$ or $\PC_\varepsilon$ with equal probability, and we have the following two hypotheses:
\begin{itemize}
    \item Null hypothesis $\mathcal{H}_0$: $\mathcal{M}$ is drawn from $\mathcal{P}_0$\,,
    \item Alternative hypothesis $\mathcal{H}_1: \mathcal{M}$ is drawn from $\PC_\varepsilon$\,.
\end{itemize}
The probability of correctly deciding the true hypothesis is upper bounded as
\begin{align}
    {\rm Pr}[``{\rm right \; decision \; between \, } \HC_0 \, {\rm and} \, \HC_1"] \leq \frac{1}{2} + \frac{N|\varepsilon|}{2} \;.
\end{align}
\end{lemma}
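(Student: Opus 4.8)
The plan is to reduce the $N$-sample problem to the single-sample bound already established in Lemma~\ref{lemma:guess-distribution}. Upon receiving $N$ i.i.d.\ samples, the data $\MC$ is distributed according to the product distribution $\PC_0^{\otimes N}$ under $\HC_0$ and $\PC_\varepsilon^{\otimes N}$ under $\HC_1$. I would therefore apply Lemma~\ref{lemma:guess-distribution} directly, taking $\PC = \PC_0^{\otimes N}$ and $\QC = \PC_\varepsilon^{\otimes N}$, which immediately gives
\begin{align}
    {\rm Pr}[``{\rm right \; decision}"] \leq \frac{1}{2} + \frac{\|\PC_0^{\otimes N} - \PC_\varepsilon^{\otimes N}\|_1}{4} \;.
\end{align}
This reduces the entire task to controlling the $1$-norm between the two product distributions.

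The core of the argument is a hybrid (telescoping) estimate establishing subadditivity of this $1$-norm under tensorization. I would introduce the interpolating distributions $H_k := \PC_\varepsilon^{\otimes k} \otimes \PC_0^{\otimes(N-k)}$ for $k = 0, 1, \dots, N$, so that $H_0 = \PC_0^{\otimes N}$ and $H_N = \PC_\varepsilon^{\otimes N}$. The triangle inequality then yields
\begin{align}
    \|\PC_0^{\otimes N} - \PC_\varepsilon^{\otimes N}\|_1 \leq \sum_{k=1}^N \|H_{k-1} - H_k\|_1 \;.
\end{align}
Consecutive hybrids $H_{k-1}$ and $H_k$ agree on every coordinate except the $k$-th, where one carries $\PC_0$ and the other $\PC_\varepsilon$. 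Factoring out the common distribution $R$ over the remaining $N-1$ coordinates and using the identity $\|R \otimes \mu - R \otimes \nu\|_1 = \sum_x R(x)\sum_y |\mu(y)-\nu(y)| = \|\mu - \nu\|_1$, each summand collapses to the single-sample distance $\|\PC_0 - \PC_\varepsilon\|_1$.

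For the binary distributions at hand, a short computation gives $\|\PC_0 - \PC_\varepsilon\|_1 = |p_0 - p_\varepsilon| + |(1-p_0) - (1-p_\varepsilon)| = 2|\varepsilon|$. Summing the $N$ identical terms yields $\|\PC_0^{\otimes N} - \PC_\varepsilon^{\otimes N}\|_1 \leq 2N|\varepsilon|$, and substituting this back produces the claimed bound $\tfrac{1}{2} + \tfrac{N|\varepsilon|}{2}$. The main obstacle is the subadditivity step: while the hybrid construction is routine, one must verify carefully that tensoring with a shared factor $R$ leaves the $1$-norm unchanged, as this is precisely what makes each telescoped term collapse to the single-coordinate distance rather than accumulating multiplicatively.
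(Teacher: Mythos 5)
Your proposal is correct and follows essentially the same route as the paper: apply Lemma~\ref{lemma:guess-distribution} to the product distributions, invoke the subadditivity bound $\|\PC_0^{\otimes N} - \PC_\varepsilon^{\otimes N}\|_1 \leq N\|\PC_0 - \PC_\varepsilon\|_1$, and evaluate $\|\PC_0 - \PC_\varepsilon\|_1 = 2|\varepsilon|$ explicitly. The only difference is that the paper cites the tensorization bound as standard, whereas you prove it via the hybrid/telescoping argument --- a valid and complete derivation of that step.
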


\begin{proof}
We remark that the combination of the standard upper bound 
\begin{align}
    \| \PC^{\otimes N} - \QC^{\otimes N} \|_1 \leq N \| \PC - \QC \|_1 \;,
\end{align}
along with Lemma~\ref{lemma:guess-distribution} gives success probability 
\begin{align}
     {\rm Pr}[``{\rm right \; decision \; between \, } \HC_0 \, {\rm and} \, \HC_1"] \leq \frac{1}{2} + \frac{N \|\PC_0 - \PC_{\varepsilon}\|_1}{4}
\end{align}
and explicit evaluation shows that $\|\PC_0 - \PC_{\epsilon}\|_1 = 2|\varepsilon|$.
\end{proof}

\subsection{Main results: Statistical indistinguishability in QRP}\label{app:concentration-consequence-summary}
In the presence of concentration, the QRP predictions estimated with a polynomial number of measurement shots become indistinguishable from random variables that are insensitive to input data. This result is indeed the interplay between the concentration and the polynomial measurement shots. In what follows, we formally show that statistical estimates of these expectations as well as model predictions are indistinguishable from some data-independent random variables. 

\subsubsection{Setting}
To rigorously see the practical implications of the concentration, we consider observables $\{O_k \}_{k=1}^M$ to be Pauli operators and the measurements are made in the respective eigen basis. For the Pauli operator $O_k$, after polynomial number of measurements $N$, the statistical estimate of the expectation can be obtained by 
\begin{align}
    \widehat{O}_{k;t}  = \frac{1}{N} \sum_{i=1}^N  \lambda_i \;,
\end{align}
where $\lambda_i$ is the $i^{\rm th}$ measurement outcome which takes the value $+1$ with the probability $\pplus$ and the value $-1$ with the probability $\pminus$. That is, each sample is drawn form a distribution
\begin{align} \label{eq:prob-dist-sample-expectation}
    \PC_{k;t} = \left\{ \pplus, \pminus \right\} \;.
\end{align}

Next, we assume the exponentially vanishing variance over the instances of quantum reservoir
\begin{align}\label{eq:appx-exp-con-qrp}
   \Var_{U} \left[ \langle O_{k} \rangle_{t} \right] \leq \beta \;, \;\; \beta \in \OC(1/b^n) \; \;,
\end{align}
for all $k$ and some constant $b >1$. By using Chebyshev's inequality, the vanishing variance leads to the exponential concentration as in Definition~\ref{def_prob_exp_concentration} with the concentration point being the average over the reservoirs i.e., $\mu = \Ebb_{U} [\langle O_{k} \rangle_{t}]$. We further assume that this concentration point is zero
\begin{align}\label{eq:appx-exp-con-qrp-fixed-point}
    \Ebb_{U} [\langle O_{k} \rangle_{t}] = 0 \;.
\end{align}

Now, the concentration in expectation values also implies the concentration in probability outcomes (when measured in eigen basis) which is formalized in the following proposition
\begin{supplemental_proposition}\label{prop:con-prob}
    Given the concentration of the expectation value of some observable $O_k$ according to Eq.~\eqref{eq:appx-exp-con-qrp} and Eq.~\eqref{eq:appx-exp-con-qrp-fixed-point}, the outcome probabilities when measuring in eigen basis also concentrate with
    \begin{align}
        \Var_{U} \left[ \pplus\right], \; \Var_{U} \left[ \pminus \right] \in \OC(1/b^n) \;,
    \end{align}
    where $\pplus$ and $ \pminus$ are the probabilities of obtaining $+1$ and $-1$ respectively. In addition, the concentration points are $1/2$ i.e.,
    \begin{align}
        \Ebb_{U}\left[\pplus\right] = \Ebb_{U}\left[\pminus\right] = \frac{1}{2} \;.
    \end{align}
\end{supplemental_proposition}
\begin{proof}
    Let us first decompose the expectation into the eigen basis
    \begin{align}\label{eq:expectation-decompose-eigen-basis}
        \langle O_k\rangle_t =  \pplus - \pminus \;,
    \end{align}
    where the eigenvalues for a Pauli operators are $+1$ and $-1$ respectively. By using Eq.~\eqref{eq:appx-exp-con-qrp-fixed-point} and Eq.~\eqref{eq:expectation-decompose-eigen-basis} together with the normalization condition on the outcome probabilities i.e., $ \pplus +  \pminus = 1$, we have the concentration points to be
    \begin{align}
        \Ebb_{U}\left[\pplus\right] = \Ebb_{U}\left[\pminus\right] = \frac{1}{2}  \;.
    \end{align}
    Similarly, by using the same conditions, the variances of the outcome probabilities can be expressed as
    \begin{align}
        \Var_{U} \left[ \langle O_{k} \rangle_{t} \right] & = \Var_{U} \left[ \pplus \right] + \Var_{U} \left[ \pminus \right] - 2 \Cov_{U}\left[ \pplus , \pminus \right] \;\;, \\
        0 & = \Var_{U} \left[ \pplus \right] + \Var_{U} \left[ \pminus \right] + 2 \Cov_{U}\left[ \pplus , \pminus \right] \;,
    \end{align}
    which leads to
\begin{align}\label{eq:appx-exp-con-qelm-var-prob}
        \Var_{U} \left[ \pplus\right], \; \Var_{U} \left[ \pminus \right] \in \OC(1/b^n) \;.
\end{align}
This completes the proof.
\end{proof}

\subsubsection{Statistical indistinguishability with QRP}
We formally introduce the concepts of statistical indistinguishability at the levels of distributions and of outcomes.
\begin{definition}[Statistical indistinguishability (of distributions)]\label{def:StatIndist} 
    Two probability distributions $\PC$ and $\QC$ are statistically indistinguishable with $N$ samples if a binary hypothesis test cannot be passed with  probability at least $0.51$. That is, given a set of $N$ samples $\MC$ drawn from either $\PC$ or $\QC$ (with an equal probability), consider the following hypotheses
    \begin{itemize}
        \item Null hypothesis $\HC_0$: $\MC$ is drawn from $\PC$\,,
        \item Alternative hypothesis $\HC_1$: $\MC$ is drawn from $\QC$ \,,
    \end{itemize}
    where $\PC$ and $\QC$ are statistically indistinguishable (with $N$ samples) if for any algorithm the probability of correctly identifying the correct hypothesis, ${\rm Pr}[``{\rm right \; decision \; between \, } \HC_0 \, {\rm and} \, \HC_1"]$, satisfies:
    \begin{align}
         {\rm Pr}[``{\rm right \; decision \; between \, } \HC_0 \, {\rm and} \, \HC_1"] \leq 0.51 \;.
    \end{align}
    Note that the threshold $0.51$ in the definition is arbitrary chosen to be close to that of random guessing.
\end{definition}

\begin{definition}[Statistical indistinguishability (of outputs)]\label{def:StatIndistOutputs}
    Consider a map $\Phi:\mathbb{R}^N\rightarrow \mathbb{R}^M$ (with $M$ being the dimension of the output) and two distributions $\PC$ and $\QC$ which are statistically indistinguishable under $N$ samples according to Definition~\ref{def:StatIndist}. Draw $N$ respective samples from $\PC$ and $\QC$, which we respectively denote as $\MC_{\PC}$ and $\MC_{\QC}$.  We say that $\Phi(\mathcal{M}_{\mathcal{P}})$ and $\Phi(\mathcal{M}_{\mathcal{Q}})$ are statistically indistinguishable outputs.
\end{definition}

The formal formal version of Proposition~\ref{prop:prac-cons} in the main text which shows the statistical indistinguishability of the QRP output is stated formally below. We note that it is also stated in a more general manner which includes exponential concentration from any other sources (not just the extreme scrambling reservoirs).
\begin{supplemental_theorem}\label{sup-thm:stat-indis-qrp-output}[Statistical indistinguishability of an expectation value] 
Suppose that the expectation value of some Pauli operator $O_k$ exponentially concentrates towards zero according to Eq.~\eqref{eq:appx-exp-con-qrp} and Eq.~\eqref{eq:appx-exp-con-qrp-fixed-point}. For any reservoir instance $U$, we consider estimating the expectation value in the respective eigen basis. That is, samples are drawn from the distribution $\PC_{k;t}$ in Eq.~\eqref{eq:prob-dist-sample-expectation}. Then, for polynomial number of samples $N \in \OC(\poly(n))$, $\PC_{k;t}$ is indistinguishable from the uniform distribution $\PC_{\rm unif} = \{1/2, 1/2\}$ according to Definition~\ref{def:StatIndist}. In particular, let $\MC$ be a polynomial sized set of samples drawn either from $\PC_{k;t}$ or $\PC_{\rm unif}$ (with an equal probability). We then perform a hypothesis test with:
    \begin{itemize}
        \item Null hypothesis $\HC_0$: $\MC_s$ is drawn from the uniform distribution $\PC_{\rm unif}$\,,
        \item Alternative hypothesis $\HC_s$: $\MC$ is drawn from $\PC_{k;t}$ \,.
    \end{itemize}
With probability at least $1 - \delta$ over the choice of reservoir $U$ regardless of the input signals we have that 
\begin{align}\label{eq:swaptestsuccessv2}
 {\rm Pr}\left[{\rm ``right \, decision \, between \,} \HC_{s} \, {\rm and} \, \HC_0" \right]  \leq \frac{1}{2} + \epsilon \, ,
\end{align}
with $\delta \in \OC(c^{-n})$ for some $c>1$ and $\; \epsilon  \in \OC(c'^{-n})$ for some $c' > 1$. In addition, with high probability exponentially close $1$, the statistical estimate of $\langle O_k \rangle_t$ is indistinguishable from a input-independent random variable of the form
\begin{align}\label{eq:rand-output}
    \widehat{O}_{\text{rand}} = \dfrac{1}{N}\sum_{i=1}^N z_i \;,
\end{align}
where each $z_i$ is drawn from $\PC_{\rm unif}$.
\end{supplemental_theorem}

Consequently, Supplemental Theorem~\ref{sup-thm:stat-indis-qrp-output} leads to poor performance of QRP on unseen input sequences. In particular, while training the model with estimated QRP outputs is guaranteed to minimize the training loss thanks to the convex optimization with the hard coded training labels, the trained QRP model generalizes poorly. The following statement captures this subtlety.
\begin{supplemental_corollary}[Impact of concentration on trainability and generalization]\label{sup-coro:qrp-performance}
    Under the same setting as Supplemental Theorem~\ref{sup-thm:stat-indis-qrp-output}, consider the loss in Eq.~\eqref{eq:loss} with the training dataset $\SC_{\rm tr}$. Given that each QRP output (necessarily for estimating the loss and making prediction on unseen input) is estimated with polynomial measurement shots $N \in \OC(\poly(n))$, then after training the QRP model we have:
    \begin{itemize}
        \item With high probability  $1-\delta$ with $\delta' \in \OC(c'^{-n})$ and $c'>1$, the optimal weights $\hat{\vec{\eta}}^*$ are statistically indistinguishable from an input-independent random vector of the form
        \begin{align}
            \vec{\eta}_{\rm rand} = \left(R_0^T R_0\right)^{-1}R_0^T \, \vec{y} \;,
        \end{align}
        where $R_0$ is an input-independent random matrix of the dimensions $N_{\rm tr}\times M$ whose elements are instances of $ \widehat{O}_{\text{rand}}$ defined in Eq.~\eqref{eq:rand-output} and $\vec{y}$ is a vector whose elements corresponding to the training labels of the dimension $N_{\rm tr}$. Here, $N_{\rm tr} = D_{\rm tr}|\TC_{\rm tr}|$ corresponds to the number of terms in the training loss, and $M$ is the number of observables. 
        \item The empirical training loss evaluated with $\vec{\eta}^*$ is at the global minimum (given those estimated QRP outputs).
        \item For an unseen input sequence $\vrhoa{\t} \notin \SC_{\rm tr}$, with high probability $1-\delta'$ with $\delta' \in \OC(c'^{-n})$, the estimated QRP prediction is independent of $\vrhoa{\t}$ and statistically indistinguishable from
        \begin{align}
            \hat{f}_{\rm rand}  = \vec{O}_{\rm rand}^T \,\vec{\eta}_{\rm rand} \;,
        \end{align}
        where $\vec{O}_{\rm rand}$ is an input-independent vector of length $M$ whose elements are instances of $ \widehat{O}_{\text{rand}}$.
    \end{itemize}
\end{supplemental_corollary}

\subsection{Proofs of analytical results}\label{app:concentration-consequence-proof}
\subsubsection{Proof of Supplemental Theorem~\ref{sup-thm:stat-indis-qrp-output}}
\begin{proof}
First, recall that the concentration of $\langle O_k \rangle_t$ implies the concentration of outcome probabilities as in Proposition~\ref{prop:con-prob}.
Our proof strategy is to show that, due to exponential concentration in outcome probabilities, $\Delta_{p}^{(+)} =|\pplus -1 /2|$ and $\Delta_{p}^{(-)}  = |\pminus-1/2|$ are very likely (i.e., with probability exponentially close to 1) to be exponentially small for any given reservoir $U$. This exponentially small value corresponds to an exponentially small perturbation from the uniform distribution $\PC_{\rm unif}$. Combining this observation with Lemma~\ref{sup-prop:indistin-prob} we can then establish that it is hard to decide the correct hypothesis with polynomial resources. 

More explicitly, we first note that it follows from Lemma~\ref{sup-prop:indistin-prob} that 
\begin{align}
 {\rm Pr}\left({\rm ``right \, decision \, between \,} \HC_{s} \, \text{and} \, \HC_0" \big| \Delta_p^{(+)}= s, \Delta_p^{(+)}= s' \right)
    \leq &  \left( \frac{1}{2} + \frac{N(s+s')}{4} \right) \, .
\end{align}
For $s,s' \in \OC(c'^{-n})$ for some $c' > 1$ and $N \in \OC(\poly(n))$ we have $\epsilon  \in \OC(c'^{-n})$ as claimed. It remains to determine with what probability we have $s,s' \in \OC(c'^{-n})$. 
By invoking Proposition~\ref{prop:con-prob}, for $\pplus$ we have
\begin{align}\label{eq:appx-chebyshev}
    {\rm Pr}_{U}\left[ \left| \pplus - \frac{1}{2}\right| \geq \delta_c \right] \leq \frac{\beta}{\delta_c^2} \;,
\end{align}
with
\begin{align}
    \beta \in \OC(1/b^n) \;,
\end{align}
for some $b,b'>1$. We then choose $\delta_c = \beta^{1/4}$ and invert the inequality of Eq.~\eqref{eq:appx-chebyshev}, leading to (with $\Delta_p^{(+)} = |\pplus -1 /2|$)
\begin{align}
    {\rm Pr}_{U}\left[\Delta_p^{(+)} \leq \beta^{1/4} \right] \geq 1 - \sqrt{\beta} \; .
\end{align}
It follows that $\Delta_p^{(+)}$ takes value between $- \beta^{1/4}$ and $ \beta^{1/4}$ (which are exponentially small) with probability at least $1 - \sqrt{\beta}$ (which is exponentially close to $1$). By following the identical steps, one can show that $\Delta_p^{(-)}$ is exponentially small with high probability. This completes the proof of the first part. To show the indistinguishability of the statistical estimate to $\widehat{O}_{\text{rand}}$, we can prove by contradiction. Basically, if they were distinguishable, we could have used this as a quantity to distinguish the distributions.
\end{proof}

\subsubsection{Proof of Supplemental Corollary~\ref{sup-coro:qrp-performance}}
\begin{proof}
We first recall the training procedure of the QRP model. The loss function in Eq.~\eqref{eq:loss} can be written in a compact form as
\begin{align}
    \LC(\vec{\eta}) & = \frac{1}{ N_{\rm tr}} \sum_{l,\t} \left(f_{\t}\bigl(\vrhoa{\t}^{(l)};\bm{\eta}\bigr)
           -
           y_{\tau}^{(l)} \right)^2  \;,
\end{align}
where $N_{\rm tr} = D_{\rm tr}|\TC_{\rm tr}|$. Minimizing this loss over $\vec{\eta}$ corresponds to doing a least square fitting and has an analytical closed-form solution~\cite{fujii2017harnessing}
\begin{align}\label{eq:optimal-params}
    \vec{\eta}^* = \left( R^T R \right)^{-1} R \,\vec{y} \;,
\end{align}
where $R$ is a matrix whose elements are the appropriate QRP outputs $\langle O_k\rangle_{\t}^{(l)} $ with dimensions $N_{\rm tr} \times M$, and $\vec{y}$ is a vector corresponding to the training labels in an appropriated order of the dimension $N_{\rm tr}$. The exact forms of $R$ and $\vec{y}$ are not necessary for the proof~\footnote{and left as an exercise for the readers}. What crucial here is that for a given estimate of $R$ (done by measuring associated QRP outputs with quantum computers), estimating the optimal parameters with Eq.~\eqref{eq:optimal-params} guarantees to minimize the empirical training loss. This happens regardless whether the QRP faces the concentration. This proves the second point of the corollary.

In the presence of concentration, Supplemental Theorem~\ref{sup-thm:stat-indis-qrp-output} indicates that a single attempt to estimate a QRP output with polynomial measurement shots is statistically indistinguishable from $\widehat{O}_{\rm rand}$ in Eq~\eqref{eq:rand-output} with high probability exponentially close to $1$. To prove the first and the last points of the corollary, it is sufficient to show that the statistical indistinguishability pertains with a polynomial number of attempts. This is because estimating optimal parameters and making prediction requires at most polynomial number of QRP outputs.

More concretely, consider estimating the total of $N_a$ QRP outputs (that exponentially concentrate) with $N_a \in \OC(\poly(n))$, and denote $o_k$ as a statistical estimate of the $k^{\rm th}$ output (with polynomial measurement shots). Let $E_k$ be the event that $o_k$ is statistically indistinguishable from $\widehat{O}_{\rm rand}$. From Supplemental Theorem~\ref{sup-thm:stat-indis-qrp-output}, we have
\begin{align}\label{eq:appx-proof-coro-1}
    {\rm Pr}[E_k] \geq 1 - \delta_\kappa \;\;; \; \forall o_k  \;,
\end{align}
with $\delta \in \OC(c^{-n})$ with some positive constant $c>1$ and the total qubits $n$. Now the probability that all $E_k$ occurs can be bounded using the union bound as
\begin{align}
    {\rm Pr}\left[ \bigcap_{k} E_k \right] & =  1 - {\rm Pr}\left[ \bigcup_k \bar{E}_k \right]  \\
    & \geq  1 - \sum_{k=1}^{N_a} {\rm Pr}\left[\bar{E}_k \right] \\
    & \geq 1 - \frac{N_a\delta}{2} \;,
\end{align}
where $\bar{E}_k$ is a conjugate event of $E_k$, we use the union bound in the second line and use ${\rm Pr}[\bar{E}_k] \leq \delta$ by reversing the final inequality in Eq.~\eqref{eq:appx-proof-coro-1}. Since $N_a \in \OC(\poly(n))$, we have that the probability that each of the estimates are statistically indistinguishable is $1 - \delta'$ with $\delta' := N_a\delta_k/2 \in \OC(\tilde{c}^{-n})$ for some $\tilde{c} >1$. 

Crucially, since the estimates are all indistinguishable, post-processing them also leads to the statistical indistinguishability. That is, for some map $\Phi(\cdot)$, we have that $\Phi \left(\{o_k\}_{k=1}^{N_a} \right)$ is also indistinguishable from $\Phi \left(\{o_{{\rm rand};k}\}_{k=1}^{N_a} \right)$ where $\{o_{{\rm rand};k}\}_{k=1}^{N_a}$ are different instances of $\widehat{O}_{\rm rand}$. Hence, by replacing QRP outputs with the instances of $\widehat{O}_{\rm rand}$ in Eq.~\eqref{eq:optimal-params} and the model prediction in Eq.~\eqref{eq:qrp-model-prediction}, with high probability we have that the optimal parameters are indistinguishable from
\begin{align}
        \vec{\eta}_{\rm rand} = \left(R_0^T R_0\right)^{-1}R_0^T \, \vec{y} \;,
\end{align}
where $R_0$ is an input-independent random matrix of the dimensions $N_{\rm tr}\times M$ whose elements are instances of $ \widehat{O}_{\text{rand}}$, and the model prediction on unseen data is indistinguishable from
\begin{align}
        \hat{f}_{\rm rand}  = \vec{O}_{\rm rand}^T \,\vec{\eta}_{\rm rand} \;,
\end{align}
where $\vec{O}_{\rm rand}$ is an input-independent vector of length $M$ whose elements are instances of $ \widehat{O}_{\text{rand}}$. This completes the proof.
\end{proof}

\section{Role of noise in output concentration and memory retention in QRP}
In this appendix, we analyse two distinct noise sources in QRP:  
(i) {input-data} encoding noise and  
(ii) noise acting during reservoir evolution.  We derive recursive relations in the time step $t$, and obtain, respectively, an \emph{output-concentration bound} induced by noisy input-data encodings, and a \emph{memory-forgetting bound} that quantifies how noisy reservoir dynamics washes out information about early inputs and the initial state.

\subsection{Preliminaries on noise channels}
We consider \emph{local} noise, which can be modelled by single-qubit channels. In general, any single-qubit channel $\mathcal{N}$ admits the following decomposition~\cite{king2001minimal, ben2013quantum}
\begin{align}
    \mathcal{N}= \mathcal{U}\circ\mathcal{M} \circ \mathcal{V},
\end{align}
where $\mathcal{U}= U(\cdot)U^\dag$ and 
 $\mathcal{V}= V(\cdot)V^\dag$ are unitary channels and $\mathcal{M}$ is a channel in \emph{normal form}, i.e. a quantum operation that captures the non-unitary components of the noise. Given a single-qubit state $\sg \coloneqq \frac{1}{2}\left(I + \sum_{P\in\{X,Y,Z\}} r_P P\right)$, we have
\begin{align}
    \mathcal{M}\left(\sg\right) = \frac{1}{2}\left(I + \sum_{P\in\{X,Y,Z\}} (D_P r_P + t_P)P \right).
\end{align}
The vectors $\mathbf{D} = (D_X, D_Y, D_Z)$ and $\mathbf{t} = (t_X, t_Y, t_Z)$ are called \emph{normal form parameters}.
A quantum channel is \emph{unital} if it maps the maximally mixed state onto itself, and otherwise it is called \emph{non-unital}. Specifically, a single-qubit channel is unital if it satisfies $\mathbf{t} = (0,0,0)$. 

An important property of unital noise is that the associated normal form channel $\mathcal{M}$ can be expressed as a Pauli channel, i.e. as a probabilistic mixture of Pauli unitaries, i.e. 
\begin{align}
    \mathcal{M}(\cdot) = \sum_{P\in \{I,X,Y,Z \}} p_P P(\cdot)P \label{eq:pauli-noise},
\end{align}
where $p_P\geq 0$ and $\sum_{P\in \{I,X,Y,Z \}} p_P = 1$. A channel $\mathcal{M}$ expressed as in Eq.~\eqref{eq:pauli-noise} is commonly referred as \emph{Pauli noise} in the literature.

Unital channels can be further classified in \emph{depolarizing-like} and \emph{dephasing-like}. Depolarizing-like noise satisfies $\max_P \abs{D_P} < 1$, while dephasing-like noise satisfies $\min_P \abs{D_P} < \max_P \abs{D_P} = 1$.
Importantly, depolarizing-like noise drives any state towards the maximally mixed state, thereby increasing its entropy. In contrast, dephasing-like noise and non-unital noise do not necessarily increase the entropy of a quantum state. In order to show that generic local noise can cause QRP to lose memory of initial states or inputs, we follow two distinct strategies:
\begin{itemize}
    \item for \textbf{depolarizing-like} noise, we will employ entropy accumulation tools, such as those introduced in Refs.~\cite{muller2016relative, hirche2022contraction, wang2020noise},
    \item for \textbf{dephasing-like} and \textbf{non-unital} noise, we will study the effect of noise on \emph{typical} reservoirs, such as those composed by local Haar-random unitaries. Our analysis extends to quantum reservoir processing the techniques introduced in Refs.~\cite{quek2022exponentially, mele2024noise, angrisani2025simulating} in the context of random quantum circuits.
\end{itemize}

\subsection{Noisy input-data encoding induces concentration in the output}\label{sec:conc_noisy_enc}
In this section, we additionally study the scenario where classical data encoding strategy is noisy, generalizing the concentration upper bound induced by noisy embedding from QELM~\cite{xiong2025fundamental} to QRP with arbitrary number of steps. Similar to the noise model we discussed for QELMs~\cite{xiong2025fundamental}, here we also consider a $L$-layered encoding subject to local noise for each time step. Then the embedded state will be
\be\label{noisy_encoding}
\rhoa{\t} = \NC^{\otimes n} \circ \UC_{L}(\sv_{\t,L}) \circ \NC^{\otimes n}  \circ...\circ \NC^{\otimes n}  \circ \UC_1(\sv_{\t,1})\circ \NC^{\otimes n} (\rho_0) \,,
\ee
where $\UC_i(\sv_\t)(\cdot) = U_i(\sv_{\t,i})(\cdot)U_i(\sv_{\t,i})\ad$ and $\{\sv_{\t,i}\}_{i=1}^L$ are some functions of $\sv_\t$.
We assume the noise to be depolarizing-like, i.e. the associated normal form parameters $\mathbf{D}$ and $\mathbf{t}$ satisfy
\begin{align}
    &\max_P \abs{D_P} < 1
    \\& t_X = t_Y = t_Z = 0.
\end{align}
Finally let us define the characteristic noise parameter $q= \max \{|D_X|, |D_Y|, |D_Z|\}$. 
We will make use of the following lemma.
\begin{lemma}[Sandwiched 2-Rényi relative entropy contraction, Supplementary Lemma 6 in Ref.~\cite{wang2020noise} and Corollary 5.6 in Ref.\ \cite{hirche2020contraction}]
\label{lem:wang}
Let $\mathcal{N}$ be a single-qubit noise channel with normal form parameters $\mathbf{D}, \mathbf{t}$ as defined above. Assume that $\mathbf{t}=(0,0,0)$ (i.e. $\mathcal{N}$ is unital) 
\begin{align}
    S_2\left(\mathcal{N}^{\otimes n}(\rho) \bigg\| \frac{I^{\otimes n}}{2^n} \right) \leq q^{b} S_2\left( \rho \bigg\| \frac{I^{\otimes n}}{2^n} \right)
\end{align}
where $q\coloneqq \max_P \abs{D_P}$, $b= 1/\ln2$ and $S_2(\cdot\lVert\cdot)$ denotes the sandwiched 2-Rényi relative entropy.
\end{lemma}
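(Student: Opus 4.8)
The plan is to reduce the claim to a statement about the contraction of the collision relative entropy at the fixed point $I^{\otimes n}/2^n$, and then to separate a clean single-qubit estimate from the harder tensorization step. First I would dispose of the unitary pieces: writing $\NC = \UC\circ\MC\circ\VC$ and using that $\UC,\VC$ preserve the maximally mixed state together with the unitary invariance of $S_2(\cdot\|I^{\otimes n}/2^n)$, I can replace $\rho$ by $\VC^{\otimes n}(\rho)$ and $\NC^{\otimes n}$ by the normal-form channel $\MC^{\otimes n}$. Since $\mathbf t = 0$, $\MC$ is unital and hence a Pauli channel with Pauli-transfer eigenvalues $(1,D_X,D_Y,D_Z)$. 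It is then convenient to use the base-$2$ identity $S_2(\rho\|I^{\otimes n}/2^n) = \log_2\!\big(2^n\Tr[\rho^2]\big)$, so that the assertion becomes a purity inequality. Expanding $\rho = 2^{-n}\sum_{\vec P}c_{\vec P}\vec P$ in the Pauli basis, $\MC^{\otimes n}$ acts diagonally with eigenvalues $\lambda_{\vec P}=\prod_j D_{P_j}$, giving $2^n\Tr[\MC^{\otimes n}(\rho)^2] = \sum_{\vec P}\lambda_{\vec P}^2 c_{\vec P}^2$ and $2^n\Tr[\rho^2]=\sum_{\vec P}c_{\vec P}^2$.

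Next I would pin down the single-qubit contraction coefficient. For one qubit the only nontrivial Paulis have weight one, so with $x=|\vec r|^2\in[0,1]$ the two purities are $1+x$ and $1+\sum_P D_P^2 r_P^2\le 1+q^2x$, and the extremal (Bloch-aligned, pure) state yields $\eta_1 = \sup_{x\in[0,1]}\frac{\log_2(1+q^2x)}{\log_2(1+x)} = \log_2(1+q^2)$, the supremum being attained at $x=1$. The lemma's constant then follows from the elementary calculus inequality $\log_2(1+q^2)\le q^{1/\ln 2}$, which I would establish by checking that both sides agree to first order as $q\to1$ (this is precisely what fixes $b=1/\ln 2$) and that their difference keeps its sign throughout $(0,1]$.

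The hard part is the tensorization: I must show that the $n$-qubit ratio never exceeds the single-qubit value, i.e.\ $\eta_{S_2}(\MC^{\otimes n})=\eta_{S_2}(\MC)=\eta_1$. The tempting shortcut---bounding the Hilbert--Schmidt operator norm of $\MC^{\otimes n}$ on traceless operators by $q$, which gives $\sum_{\vec P\neq\vec I}\lambda_{\vec P}^2 c_{\vec P}^2\le q^2\sum_{\vec P\neq\vec I}c_{\vec P}^2$---is too lossy: for high-purity states the ratio $\log_2(1+q^2x)/\log_2(1+x)$ drifts towards $1$ as $x$ grows, so this estimate cannot reproduce a constant factor $q^b<1$. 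The resolution is that high-purity states necessarily place most of their Pauli weight on high-weight strings, which are suppressed by $\lambda_{\vec P}^2\le q^{2w(\vec P)}$ rather than merely $q^2$; on product states this is visible directly from additivity of $S_2$, but for entangled states it requires the genuine tensorization statement for the sandwiched $2$-Rényi contraction coefficient. I would therefore invoke the tensorization results of Refs.~\cite{hirche2020contraction, wang2020noise} (equivalently, the fact that the single-letter coefficient already governs all tensor powers of $\MC$), which closes the argument.
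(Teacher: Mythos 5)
The paper offers no proof of this lemma at all: it is imported verbatim, with the citation to Refs.~\cite{wang2020noise, hirche2020contraction} serving as the entire justification. Judged against that, your proposal is correct and strictly more informative, though it ultimately leans on the same external results for the decisive step. Your reduction is sound: unitary invariance of $S_2(\cdot\,\|\,I^{\otimes n}/2^n)$ removes the $\UC,\VC$ layers of the normal form, and $\mathbf{t}=0$ makes $\MC$ a Pauli channel acting diagonally in the Pauli basis with eigenvalues $\lambda_{\vec P}=\prod_j D_{P_j}$, so the claim becomes a purity inequality via $S_2(\rho\,\|\,I^{\otimes n}/2^n)=\log_2(2^n\Tr[\rho^2])$. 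Your single-qubit analysis is also right: the ratio $f(x)=\log_2(1+q^2x)/\log_2(1+x)$ is indeed increasing on $(0,1]$ (its positivity of derivative reduces to the monotone decrease of $u\mapsto u/\bigl((1+u)\ln(1+u)\bigr)$), so $\eta_1=\log_2(1+q^2)$ is attained by a pure state on the worst Pauli axis, the bound $\log_2(1+q^2)\leq q^{1/\ln 2}$ holds on $(0,1]$ with equality and matching first derivatives at $q=1$, and this correctly identifies why $b=1/\ln 2$ is the natural (and essentially optimal) exponent — a piece of insight the paper's bare citation does not provide. You also correctly diagnose that the tempting spectral shortcut $\lambda_{\vec P}^2\leq q^2$ on the traceless sector fails, since $f(x)\to 1$ as the purity grows, so no dimension-free constant can come out of it. The one thing to flag is that your closing step — tensorization, $\eta_{S_2}(\MC^{\otimes n})=\eta_1$ — is precisely the nontrivial content of the cited works: the relevant lemma of Ref.~\cite{wang2020noise} is stated for local Pauli noise, which is exactly the case your normal-form reduction produces, so at that point your argument reduces the lemma to the Pauli-channel core of itself plus the citation. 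That is entirely acceptable here, since the paper does the same, but a self-contained proof would still have to convert the weight-dependent suppression $\lambda_{\vec P}^2\leq q^{2w(\vec P)}$ into a dimension-free contraction (e.g., by the qubit-by-qubit inductive argument underlying Ref.~\cite{wang2020noise}), which your sketch motivates but does not carry out.
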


Then, the following theorem holds true.
\begin{supplemental_theorem}[Noisy encoding induced concentration]
    Suppose a $L$-layered encoding as defined in Eq. \eqref{noisy_encoding}, with $q<1$. Then, for each $t \in \mathbb{N}$, the expectation value of the observable will concentrate towards an input-independent quantity $\mu_t = \Tr[\Omega_{t}O]$ where $\Omega_{t} = U\big(\IC/2^{n_a} \otimes \Tr_a[\Omega_{t-1}]\big)U\ad$ with $\Omega(0)=\rhot{0}$:
    \be
        |\expval{O}_t - \mu_t| \leq \norm{O}_\infty \sqrt{t} \biggr[2\ln{2}q^{b(L+1)}S_2\biggr(\rho_0\biggr\lVert\dfrac{\IC_{a}}{2^{n_a}}\biggr)\biggr]^{1/2}\,,
    \ee
    where $b= 1/\ln{2}$, and $S_2(\cdot\lVert\cdot)$ denotes the sandwiched 2-Rényi relative entropy.
\end{supplemental_theorem}
\begin{figure}[ht]
    \centering
    \includegraphics[width=0.48\textwidth]{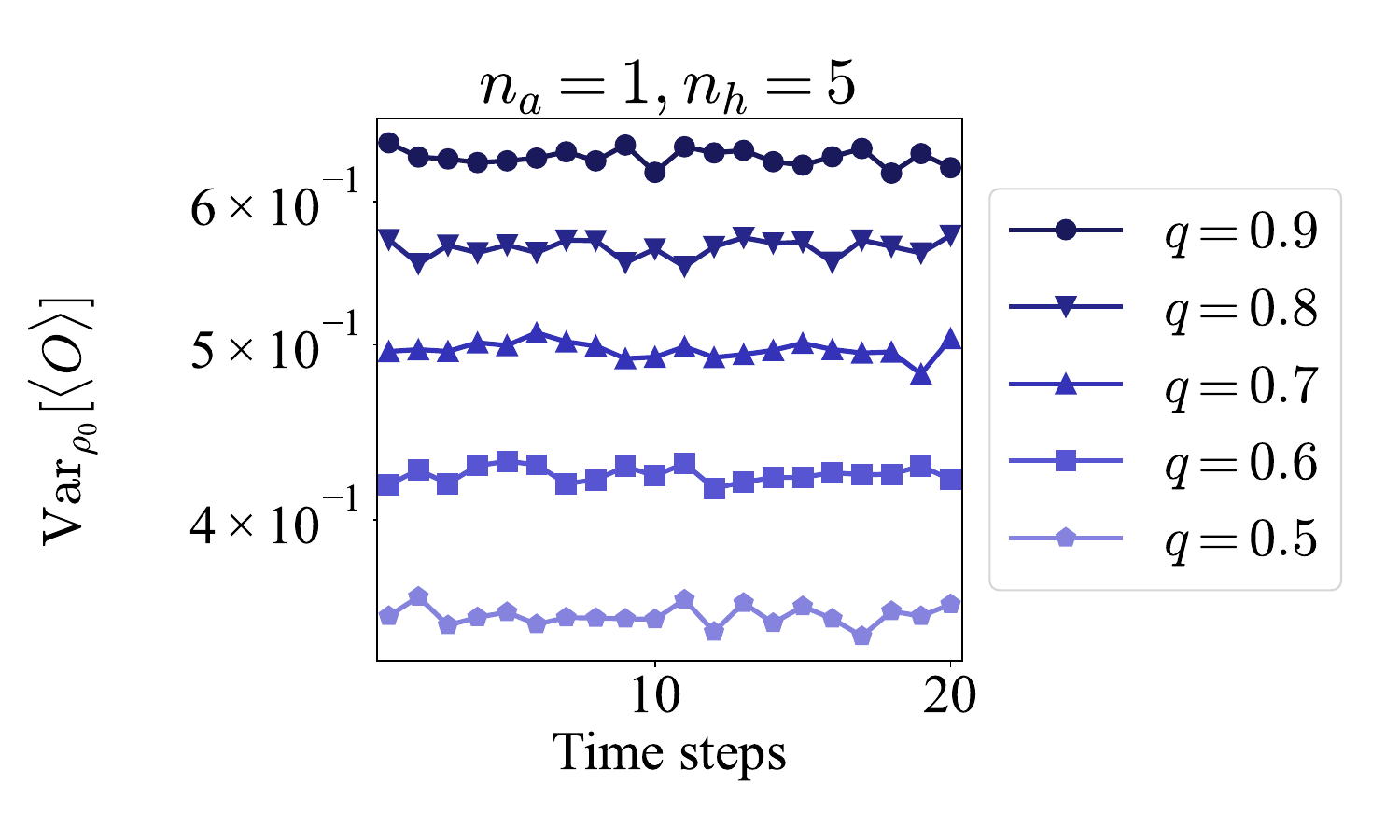}
    
    \caption{\textbf{Noisy encoding induced concentration.} The variance of outputs over randomly selected input series are plotted.
    }
    \label{fig:conc_noisy_enc}
\end{figure}
This upper bound indicates that if the number of encoding layers  $L \in \OC(n_{a})$, then the deviation of the observable from an input-independent value will concentrate exponentially with respect to the number of the accessible qubits.

\begin{specialproof}
We compute the deviation of the expectation value for time-step $t$,
\begin{align}
    |\expval{O}_{t} -\mu_{t}| &= \left|\Tr[\biggr(\rhot{t}-\Omega_{t}\biggr)O]\right| \\
    &\leq \norm{\rhot{t}-\Omega_{t}}_1\norm{O}_\infty \\
    &\leq  \norm{O}_\infty \sqrt{2\ln{2}} S_2\biggr(\rhot{t}\biggr\lVert\Omega_{t}\biggr)^{1/2}\,, \label{Eq:Noise_B13}
\end{align}
where the first inequality is achieved by using the Hölder's inequality, in the second inequality we combine Pinsker's inequality and $S(\cdot\lVert\cdot)\leq S_2(\cdot\lVert\cdot)$. Now, we find the recursive relation for the term $S_2\biggr(\rhot{t}\biggr\lVert\Omega_{t}\biggr)$,
\begin{align}
    S_2\biggr(\rhot{t}\biggr\lVert\Omega_{t}\biggr) &= S_2\biggr(\rhoa{t}\otimes\rhoh{t-1}\biggr\lVert\frac{\IC_{a}}{2^{n_a}}\otimes \Tr_a[\Omega_{t-1}]\biggr)\\
    &= S_2\biggr(\rhoa{t}\biggr\lVert\frac{\IC_{a}}{2^{n_a}}\biggr)+S_2\biggr(\rhoh{t-1}\biggr\lVert \Tr_a[\Omega_{t-1}]\biggr)\\
    &\leq S_2\biggr(\rhoa{t}\biggr\lVert\frac{\IC_{a}}{2^{n_a}}\biggr)+S_2\biggr(\rhot{t-1}\biggr\lVert \Omega_{t-1}\biggr)\,,
\end{align}
where we use the unitary invariance of relative entropy in the first equality and the additivity in the second equality. The last inequality is obtained using data processing inequality. Next, by applying this recursive relation $t$-times we obtain
\begin{align}
     S_2\biggr(\rhot{t}\biggr\lVert\Omega_{t}\biggr) &\leq \sum_{i=1}^t S_2\biggr(\rho_i\biggr\lVert\frac{\IC_{a}}{2^{n_a}}\biggr) \\
     &\leq tq^{b(L+1)} S_2\biggr(\rho_0\biggr\lVert\frac{\IC_{a}}{2^{n_a}}\biggr)\,,\label{Eq:Noise_B18}
\end{align}
where in the last inequality we apply the noise inequality in Lemma~\ref{lem:wang} for all $1 \leq i \leq t$. Finally, by combining Eq.~\eqref{Eq:Noise_B13} and \eqref{Eq:Noise_B18}, we obtain

\begin{align}
     |\expval{O}_t - \mu_t| \leq \norm{O}_\infty \sqrt{t} \biggr[2\ln{2}q^{b}S_2\biggr(\rho_0\biggr\lVert\dfrac{\IC_{a}}{2^{n_a}}\biggr)\biggr]^{1/2}\,.
\end{align}
\end{specialproof}
\subsection{Memory erasure induced by reservoir with depolarizing-like noise}\label{sec:conc_noisy_res_proof}
We now prove that adding noisy layers between each reservoir iteration can also erase memory of the initial condition and early inputs exponentially quickly.
\begin{customthm}{3}[Memory erasure induced by Pauli noise in the reservoir]
    Consider a QRP with noisy layers between reservoir evolutions as defined in Eq.~\eqref{noisy_encoding}, with $q<1$. Then, for each $t\in\mathbb{N}$, the expectation value of the observable will become independent of the initial condition and early inputs, exponentially in number of steps $t$. Formally, given two initial states of hidden qubits $\rhoh{0}$ and $\sgh{0}$, given two inputs $\rhoa{1}$ and $\sga{1}$ at the first time step, given the same series of input states $[\rhoa{k}]_{k=2}^t=[\sga{k}]_{k=2}^t$ from the second step, then the deviation of the corresponding outputs
    \begin{align}
        &\Delta O\leq \norm{O}_\infty \sqrt{2\ln{2}q^{b(t-1)}}\left(S_2\biggr(\rhoa{1}\lVert\sga{1}\biggr)+S_2\biggr(\rhoh{0}\biggr\lVert\sgh{0}\biggr)\right)^{1/2} \,,
    \end{align}
    where $\Delta O=|\expval{O}^{(\rhoh{0},\rhoa{1})}_t - \expval{O}^{(\sgh{0},\sga{1})}_t|$, $b= 1/\ln{2}$ and $S_2(\cdot\lVert\cdot)$ denotes the sandwiched 2-Rényi relative entropy.
\end{customthm}
\begin{specialproof}
We similarly upper bounds the deviation of the expectation values using 2-Rényi relative entropy,
\begin{align}
    \Delta O\ &= \left|\Tr[\biggr(\rhot{t}-\sgt{t}\biggr)O]\right| \\
    &\leq \norm{\rhot{t}-\sgt{t}}_1\norm{O}_\infty \\
    &\leq  \norm{O}_\infty \sqrt{2\ln{2}} S_2\biggr(\rhot{t}\biggr\lVert\sgt{t}\biggr)^{1/2}\,, \label{Eq:Noise_reservoir_evo}
\end{align}
where the first inequality is achieved by using the Hölder's inequality, in the second inequality we combine Pinsker's inequality and $S(\cdot\lVert\cdot)\leq S_2(\cdot\lVert\cdot)$. Next, we find the recursive relation for the term $S_2\biggr(\rhot{t}\biggr\lVert\sgt{t}\biggr)$,
\begin{align}
    S_2\biggr(\rhot{t}\biggr\lVert\sgt{t}\biggr) &= S_2\biggr(\rhoa{t}\otimes\NC\left(\rhoh{t-1}\right)\biggr\lVert\rhoa{t}\otimes \NC\left(\sgh{t-1}\right)\biggr)\\
    &=S_2\biggr(\rhoa{t}\lVert\rhoa{t}\biggr)+S_2\biggr(\NC\left(\rhoh{t-1}\right)\biggr\lVert\NC\left(\sgh{t-1}\right)\biggr)\\
    &=0+q^{b}S_2\biggr(\rhoh{t-1}\biggr\lVert\sgh{t-1}\biggr)\\
    &\leq q^{b}S_2\biggr(\rhot{t-1}\biggr\lVert\sgt{t-1}\biggr)\,.
\end{align}
By applying this recursive relation $(t-1)$ times, we obtain
\begin{align}
    S_2\biggr(\rhot{t}\biggr\lVert\sgt{t}\biggr)&\leq q^{b(t-1)}S_2\biggr(\rhot{1}\biggr\lVert\sgt{1}\biggr)\\
    &=q^{b(t-1)}S_2\biggr(\rhoa{1}\otimes\rhoh{0}\biggr\lVert\rhoa{1}\otimes \sgh{0}\biggr)\\
    &=q^{b(t-1)}\left(S_2\biggr(\rhoa{1}\lVert\sga{1}\biggr)+S_2\biggr(\rhoh{0}\biggr\lVert\sgh{0}\biggr)\right)
\end{align}
Finally, by combining the last inequality with Eq.~\eqref{Eq:Noise_reservoir_evo}, we obtain the statement.
\end{specialproof}

\subsection{Memory erasure induced by reservoir with arbitrary noise}
Let $\mathcal{N}$ be an arbitrary single-qubit noise channel. We assume that the reservoir $\mathcal{U}$ is composed of a sequence of random unitary channels interleaved by local noise of the form $\mathcal{N}^{\otimes n}$:
\begin{align}\label{Eq:non_unital_res_app}
    \mathcal{U} =  \mathcal{U}_L \circ \mathcal{N}^{\otimes n} \circ \mathcal{U}_{L-1}\circ \dots \circ \mathcal{N}^{\otimes n}  \circ \mathcal{U}_1 \circ \mathcal{N}^{\otimes n} .
\end{align}
We define the average contraction coefficient as
\begin{align}\label{Eq:avg_contr}
    \chi \coloneqq \sqrt{\frac{1}{3}\left(\norm{\mathbf{D}}_2^2 +\norm{\mathbf{t}}_2^2 \right)},
\end{align}
where $\mathbf{D}$ and $\mathbf{t}$ are the normal form parameters defined in Section II of Ref.~\cite{angrisani2025simulating}.
Notably, $\eta < 1$ if and only if $\mathcal{N}$ is non-unitary, as shown in Lemma 18 in Ref.~\cite{mele2024noise}.
While our analysis encompasses all non-unitary local noise channels, we will make some additional assumptions on the reservoir dynamics.
\begin{itemize}
    \item \textbf{Random circuit layers.} We assume that the unitary $\mathcal{U}_j$ are independently sampled from a locally scrambled distribution $\mathcal{D}$ up to the second moment, i.e. for any pair of Hermitian operators $O,O'$ we have
\begin{align}
    \mathbb{E}_{\mathcal{U}_j \sim \mathcal{D}} \left\{\mathcal{U}_j^{\otimes 2}\left(O \otimes O' \right)\right\}
    = \mathbb{E}_{\mathcal{U}_j \sim \mathcal{D}} \mathbb{E}_{V_1, V_2, \dots , V_{n_h} \sim \mathrm{Cl}(2)}
    \left\{\mathcal{U}_j^{\otimes 2}\left( \bigotimes_{j=1}^{n_h} V_j^{\otimes 2}\left[O \otimes O'\right]\bigotimes_{j=1}^{n_h} V_j^{\dag\otimes 2} \right)\right\},
\end{align}
where $V_1, V_2,\dots, V_{n_h}$ are single-qubit unitaries sampled independently and uniformly from the Clifford group.
In other terms, each layer $\mathcal{U}_j$ is invariant under post-processing by randomly sampled single-qubit Clifford gates, up to the second moment.
In particular, this assumptions is satisfied by Brickwall circuits with Haar-random gates.
\item \textbf{Linear-depth reservoir.} We assume that the depth $L$ in Eq.~\eqref{Eq:non_unital_res_app} is at least linear in system size, i.e. $L \in \Omega(n_h + n_a)$.
\end{itemize}

As a preliminary step, we restate a result from Ref.~\cite{angrisani2025simulating} within our notation.
\begin{lemma}[Adapted from Theorem 7 in Ref.~\cite{angrisani2025simulating}]
Let $O$ be a traceless Hermitian operator independent of $\mathcal{U}$. Then
    \begin{align}
        \mathbb{E}_{\mathcal{U}} \norm{\mathcal{U}^\emph{\dag}(O)}_2 \leq  \chi^L \norm{O}_2.
    \end{align}
\end{lemma}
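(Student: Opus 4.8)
The plan is to obtain this bound as the specialization of Theorem~7 of Ref.~\cite{angrisani2025simulating} to the reservoir channel in Eq.~\eqref{Eq:non_unital_res_app}, so that the work splits into (i) a reduction to a second-moment quantity, (ii) a per-noise-layer contraction estimate powered by the locally scrambled assumption, and (iii) matching our constant $\chi$ of Eq.~\eqref{Eq:avg_contr} to the contraction coefficient of the reference. First I would invoke Jensen's inequality (concavity of $\sqrt{\cdot}$) to replace the target $\mathbb{E}_{\mathcal{U}}\norm{\mathcal{U}^\dagger(O)}_2$ by $\sqrt{\mathbb{E}_{\mathcal{U}}\norm{\mathcal{U}^\dagger(O)}_2^2}$, and rewrite the second moment as $\mathbb{E}_{\mathcal{U}}\Tr[(\mathcal{U}^\dagger(O))^{\otimes 2}\,\mathrm{SWAP}]$. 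This is quadratic in each layer $\mathcal{U}_j$, which is exactly the form to which the locally scrambled identity applies.

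Then I would propagate this second moment through the Heisenberg-dual circuit $\mathcal{U}^\dagger=(\mathcal{N}^{\otimes n})^\dagger\circ\mathcal{U}_1^\dagger\circ\cdots\circ(\mathcal{N}^{\otimes n})^\dagger\circ\mathcal{U}_L^\dagger$ from the innermost layer outward. Each unitary layer preserves the Hilbert--Schmidt norm exactly, so all contraction is due to the noise layers. Using the locally scrambled property, I would insert a single-qubit Clifford twirl between every $\mathcal{U}_j^\dagger$ and the noise layer that follows it without altering the second moment; averaging the noise over this twirl makes its action isotropic over the three Pauli directions on each qubit. A short single-qubit computation in normal form, using $\mathcal{N}^\dagger(P)=D_P P+t_P I$ for $P\in\{X,Y,Z\}$, then shows that when the input is traceless the expected squared weight is multiplied by exactly $\tfrac13(\norm{\mathbf{D}}_2^2+\norm{\mathbf{t}}_2^2)=\chi^2$ per noise layer. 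Iterating this contraction over the $L$ noise layers would give $\mathbb{E}_{\mathcal{U}}\norm{\mathcal{U}^\dagger(O)}_2^2\le\chi^{2L}\norm{O}_2^2$; the one gap is that the output of a noise layer is no longer traceless, so justifying the iteration is the crux I turn to now.

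The genuinely delicate point---and the reason the statement is imported rather than proved by a naive layer-by-layer induction---is that for non-unital $\mathcal{N}$ the term $t_P I$ lets a noise layer leak weight from non-identity Paulis into the identity (trace) sector, which is then fixed by every subsequent unitary and noise layer and does \emph{not} contract. A single qubit already exhibits this obstruction, so the clean factor $\chi^{2L}$ can only survive because the scrambling layers rapidly spread operator weight onto high-weight Pauli strings: a weight-$w$ string leaks into the identity with second-moment amplitude $\sim(\tfrac13\norm{\mathbf{t}}_2^2)^{w}$, which is negligible once $w=\Omega(n_a+n_h)$. This is precisely where the linear-depth hypothesis $L\in\Omega(n_a+n_h)$ and the locally scrambled brickwork enter, and why I would carry out the induction at the level of the full second-moment transfer operator, as in Refs.~\cite{mele2024noise, angrisani2025simulating}, rather than qubit by qubit. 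The remaining steps are bookkeeping: checking that the traceless hypothesis on $O$ and its independence from $\mathcal{U}$ are exactly those used in Theorem~7, and that $\chi$ in Eq.~\eqref{Eq:avg_contr} coincides with the reference's contraction coefficient.
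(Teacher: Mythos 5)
The paper contains no proof of this lemma to compare against: the statement is imported wholesale from Theorem~7 of Ref.~\cite{angrisani2025simulating} (that is what ``Adapted from'' signals), so your framing---reduce to the reference, match $\chi$ of Eq.~\eqref{Eq:avg_contr} to its contraction coefficient, and check the hypotheses---is in fact exactly the paper's treatment, and your reconstruction of the reference's internals is broadly faithful: unitary layers are Hilbert--Schmidt isometries, the locally scrambled assumption licenses a single-qubit Clifford twirl before each noise layer, and the normal-form action $\mathcal{N}^\dagger(P)=D_P P+t_P I$ then yields the per-site second-moment factor $\tfrac13\left(\norm{\mathbf{D}}_2^2+\norm{\mathbf{t}}_2^2\right)=\chi^2$ (more precisely $\chi^{2w}$ for a weight-$w$ string, which only helps for $w\geq 1$).

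As a standalone proof, however, the proposal has a genuine gap, and it sits exactly where you flag it---but it is worse than a deferred induction. Your step (ii) conclusion, $\mathbb{E}_{\mathcal{U}}\norm{\mathcal{U}^\dagger(O)}_2^2\leq\chi^{2L}\norm{O}_2^2$, is false in general, because the weight that leaks into the identity sector thereafter contracts by $1$, not $\chi^2$. A one-qubit example makes this concrete: take fully damping noise ($\gamma=1$ amplitude damping, so $\mathbf{D}=0$, $\norm{\mathbf{t}}_2=1$, $\chi^2=\tfrac13$), random single-qubit Clifford layers (which satisfy the locally scrambled condition), $O=Z$ and $L=2$. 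Then $\mathcal{U}^\dagger(Z)=\pm\delta_{P,Z}I$ with $P$ uniform over $\{X,Y,Z\}$, so
\begin{align}
\mathbb{E}_{\mathcal{U}}\norm{\mathcal{U}^\dagger(Z)}_2^2=\tfrac23>\chi^{4}\norm{Z}_2^2=\tfrac29\,,
\end{align}
even though the first-moment bound $\mathbb{E}_{\mathcal{U}}\norm{\mathcal{U}^\dagger(Z)}_2=\tfrac{\sqrt2}{3}=\chi^2\norm{Z}_2$ is exactly saturated there (and is violated at $L=3$, showing the clean $\chi^L$ claim itself needs the many-qubit weight-spreading regime, not just any locally scrambled layers). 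So the opening Jensen reduction to the second moment is not merely lossy in constants---for strongly non-unital noise it can exceed the target bound outright, and the actual argument in Refs.~\cite{mele2024noise,angrisani2025simulating} must track the identity sector inside the full transfer-operator induction, using the scrambling layers to push operator weight to $\Omega(n_a+n_h)$ so the per-layer leak is $2^{-\Omega(n_a+n_h)}$. You correctly name this mechanism and the role of the linear-depth hypothesis, but you cite it rather than execute it; since citing the reference is also all the paper does, your proposal should be read as a correct map of the difficulty, not a proof, with one intermediate inequality that would fail if taken literally.
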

The lemma above states that the Schatten 2-norm of an operator \( O \) is suppressed by the noisy reservoir in the adjoint picture, on average over the random unitary layers. However, a potential issue arises in the analysis of reservoir dynamics, where we often deal with observables \( O_{\mathcal{U}} \) that depend on the reservoir. As a result, the above statement is not immediately applicable. To address this, we derive a refined version that accounts for operators that explicitly depend on the reservoir.  

\begin{supplemental_corollary}
\label{cor:non-unital}
It holds that
    \begin{align}
        \mathbb{E}_{\mathcal{U}}\left\{\max_{H} \frac{\norm{ \mathcal{U}(H)}_1}{\norm{H}_2} \right\}\leq   (d_hd_a)^{\frac{3}{2}}\chi^{L},
    \end{align}
where the maximization is over traceless Hermitian operators.
\end{supplemental_corollary}
\begin{proof}
Given a Hermitian operator $H$ and a channel $\mathcal{U}$, we expand $\mathcal{U}(H)$ in the Pauli basis:
    \begin{align}
        \mathcal{U}(H) = \frac{1}{d_hd_a}\sum_{P \in \{I,X,Y,Z\}^{\otimes (n_h + n_a)}} \Tr[\mathcal{U}(H)P] P
        = \frac{1}{d_hd_a}\sum_{P \in \{I,X,Y,Z\}^{\otimes (n_h+n_a)}\setminus \{I^{\otimes (n_h+n_a)}\}} \Tr[H\mathcal{U}^\dag(P)] P,
    \end{align}
where we noted that $\Tr[I^{\otimes (n_h+n_a)} \mathcal{U}(H)] = \Tr[\mathcal{U}(H)] = \Tr[H] = 0$ since $\mathcal{U}$ is trace-preserving.
We can upper bound the trace norm $\norm{\mathcal{U}(H)}_1$ via Minkowski's inequality:
\begin{align}
    \norm{\mathcal{U}(H)}_1 \leq &\frac{1}{{d_hd_a}}\sum_{P \in \{I,X,Y,Z\}^{\otimes (n_h+n_a)}\setminus \{I^{\otimes (n_h+n_a)}\}}\abs{\Tr[H\mathcal{U}^\dag(P)]} \norm{P}_1 \\ \leq &\sum_{P \in \{I,X,Y,Z\}^{\otimes (n_h+n_a)}\setminus \{I^{\otimes (n_h+n_a)}\}}\abs{\Tr[H\mathcal{U}^\dag(P)]}
    \\ \leq & \sum_{P \in \{I,X,Y,Z\}^{\otimes (n_h+n_a)}\setminus \{I^{\otimes (n_h+n_a)}\}}\norm{\mathcal{U}^\dag(P)}_2\norm{H}_2
\end{align}
We obtain the desired result taking the maximum over $H$ and averaging over $\mathcal{U}$.
\begin{align}
    \mathbb{E}_{\mathcal{U}}\left\{\max_{H} \frac{\norm{ \mathcal{U}(H)}_1}{\norm{H}_2} \right\}
    &\leq \mathbb{E}_{\mathcal{U}}\max_{H}\frac{\norm{H}_2}{\norm{H}_2} \left\{ \sum_{P \in \{I,X,Y,Z\}^{\otimes (n_h + n_a)}\setminus \{I^{\otimes (n_h + n_a)}\}}\norm{\mathcal{U}^\dag(P)}_2\right\}\\ 
    &\leq (d_h d_a)^2 \mathbb{E}_{\mathcal{U}} \norm{\mathcal{U}^\dag(P)}_2 \leq (d_h d_a)^2 \chi^L \norm{P}_2 \\
    &= (d_h d_a)^{\frac{3}{2}} \chi^L.
\end{align}
\end{proof}
Repeatedly applying Supplementary Corollary~\ref{cor:non-unital}, we found that average-case noisy reservoirs lead to memory erasure that is exponentially fast in both the system size and the number of iterations. 
\begin{customthm}{4}[Memory erasure induced by non-unital noise in the reservoir]
Assume that the reservoir $\mathcal{U}$ consists in a random circuit interspersed by local noise as in Eq.~\eqref{Eq:non_unital_res_app}, and let $\chi$ with average contraction coefficient defined in Eq.~\eqref{Eq:avg_contr}.
Let the reservoir depth $L$ be at least
\begin{align}
    L \geq \frac{\frac{3}{2}(n_a + n_h) + \log_2(1/\epsilon) + \log(1/\delta)}{\log_2(1/\chi)}.
\end{align}
Given two initial states of hidden qubits $\rhoh{0}$ and $\sgh{0}$, given two inputs $\rhoa{1}$ and $\sga{1}$ at the first time step, given the same series of input states $[\rhoa{k}]_{k=2}^t=[\sga{k}]_{k=2}^t$ from the second step, the corresponding output states satisfies
\begin{align}
    \norm{\rhot{t} - \sgt{t}}_1 \leq \epsilon^{t-1} \norm{\rhoa{1} - \sga{1}}_1,
\end{align}
with probability at least $1-\delta$ over the randomness of the reservoir.
In particular, if $L \geq c\cdot n$ for a sufficiently large constant $c>0$, then with obtain
\begin{align}
    \norm{\rhot{t} - \sgt{t}}_1 \in \exp\left(-\Omega((n_a+n_h)t)\right),
\end{align}
with probability at least $1 - \exp(-\Omega(n_a+n_h))$.
\end{customthm}
\begin{proof}
First, we use Markov's inequality to derive a probabilistic statement from Corollary~\ref{cor:non-unital},
\begin{align}
    \Pr_{\mathcal{U}}\left\{ \forall H : {\norm{ \mathcal{U}(H)}_1}\leq \epsilon {\norm{H}_1} \right\} \geq 1-\frac{(d_hd_a)^{\frac{3}{2}}\chi^L}{\epsilon} \geq 1-\delta,
\end{align}
where last inequality follows from the fact that $L \geq \frac{\frac{3}{2}(n_a + n_h) + \log_2(1/\epsilon) + \log(1/\delta)}{\log_2(1/\chi)}.$
In the following of this proof, we condition on the event $E \coloneqq \left\{ \forall H: {\norm{ \mathcal{U}(H)}_1} \leq \epsilon {\norm{H}_1}\right\}$ happening.

Thus, we can find the recursive relation for the term $\norm{\rhot{t} - \sgt{t}}_1$,
\begin{align}
     \norm{\rhot{t} - \sgt{t}}_1 &= \mathbb{E}_{\mathcal{U}}\norm{\mathcal{U}(\rhoa{t}\otimes\rhoh{t-1} - \rhoa{t}\otimes \sgh{t-1})}_1\\
    &\leq \epsilon \norm{\rhoa{t}\otimes\rhoh{t-1} - \rhoa{t}\otimes \sgh{t-1}}_1\\
    & =  \epsilon \norm{\rhoh{t-1} - \sgh{t-1}}_1\\
    &\leq \epsilon \norm{\rhot{t-1} - \sgt{t-1}}_1.
\end{align}
By applying this recursive relation $t-1$ times, we obtain the desired result
\begin{align}
    \norm{\rhot{t} - \sgt{t}}_1  \leq \epsilon \norm{\rhot{t-1} - \sgt{t-1}}_1
     \leq \epsilon^2 \norm{\rhot{t-2} - \sgt{t-2}}_1 \leq \dots  \leq \epsilon^{t-1} \norm{\rhot{1} - \sgt{1}}_1.
\end{align}
\end{proof}

\section{Memory indicators, and relations to standard metrics for reservoir computing}\label{appendix:memory}
In this appendix, we provide a pedagogical discussion of our \emph{memory indicators} $\MC_a$ and $\MC_h$ for quantum reservoir processing (QRP), and relate these quantities to the common metrics used in reservoir computing literature, namely the \emph{Echo State Property} (ESP) and the \emph{Fading Memory Property} (FMP). However, we emphasize that our definitions capture these properties \emph{on average}, and \emph{not} in the deterministic or topological sense as in mathematical foundations of reservoir computing \cite{gonon2021universal, grigoryeva2018echo}. These quantities are thus better viewed as \emph{FMP-like} and \emph{ESP-like} characteristics, which nonetheless provide quantitative measures of how old inputs and initial reservoir states become irrelevant, due to \emph{exponential concentration}, in typical scenarios for QRP with finite measurement shots. While these concentration-induced memory characteristics resonate with the essence of FMP or ESP in mathematical RC, we do not claim a strict equivalence to the mathematical definitions used in formal universality results. Finally, we discuss the effects that exponential concentration have on reservoir universality.

\subsection{Background: echo state and fading memory properties in reservoir computing}

In \emph{classical} Reservoir Computing (RC) \cite{jaeger2001echo,maass2002real,lukosevicius2009reservoir, yidiz2012echo},
one uses a fixed dynamical system (the \emph{reservoir}) to process sequential data (time-series) $\{s_t\}$. At each time step $t$, the reservoir's hidden state $\mathbf{x}_t$ evolves under the current input $s_t$ and the previous state $\mathbf{x}_{t-1}$ through a fixed (nonlinear) map $f(s_t, \mathbf{x}_{t-1}) = \mathbf{x}_t $. Only a simple \emph{readout} (often, for simplicity, a linear map on $\mathbf{x}_t$) is trained to produce the output $y_t$. 

Two fundamental dynamical properties, informally stated, that are necessary for classical reservoir computing to be well-defined (though by themselves insufficient to guarantee strong predictive performance) are:

\begin{enumerate}

    \item \emph{Fading Memory Property (FMP):} A reservoir has FMP if the influence of inputs fed \emph{far in the past} on the present output \emph{fades away} in time. In other words, perturbations in the input $s_{t-\tau}$ for large $\tau$ cause only a negligible difference in the current output $y_t$. FMP ensures that the reservoir focuses on the relevant recent history of the input and is not overwhelmed by far past information.

    \item \emph{Echo State Property (ESP):} A reservoir has ESP if its internal state $\mathbf{x}_t$ at large $t$ depends only on the driving inputs $\{s_t\}$ and \emph{not} on the reservoir's initial state $\mathbf{x}_0$. ESP thus guarantees that, after a long washout (burn-in) period, the reservoir state becomes a function of only recent inputs, and not on how the reservoir was initially prepared. Without ESP, the output could spuriously depend on arbitrary past state preparations, making the system ill-defined for consistent time series processing.
\end{enumerate}
When both properties are satisfied, the reservoir map $f$ defines a \emph{causal time-invariant filter} on semi-infinite inputs\footnote{
In many treatments, inputs $\{s_t\}
$ are defined for all integer time $t \le 0$, called a \emph{left-infinite} or \emph{semi-infinite} input sequence. Intuitively, we let the reservoir receives the input data from $-\infty$ up to the present so that any dependence on an arbitrarily chosen initial reservoir state $\mathbf{x}_{-\infty}$ is washed out at large times. This ensures well-defined behavior of the map $f$.}, guaranteeing that outputs are primarily driven by recent inputs \cite{grigoryeva2018echo}. Moreover, by optimizing the readout layer (via linear regression) and by properly choosing a reservoir (such as a sufficiently large Echo State Network (ESN)), certain systems with such properties can achieve a \emph{universal approximation} of fading memory maps and possess short-term sequence-modeling capabilities \cite{grigoryeva2018echo, gonon2021universal}.

\smallskip
\noindent
\textbf{How $\MC_a$ and $\MC_h$ in our QRP framework relate to FMP and ESP.}
While mathematical RC defines FMP and ESP using rigorous deterministic statements~\cite{grigoryeva2018echo}, our memory indicators $\MC_a$ and $\MC_h$ are \emph{ensemble-averaged} quantities. We consider how quickly the variance of outputs decays under perturbations (in either past inputs or initial states). This approach captures an \emph{operational} sense of early input forgetting and early state forgetting in the \emph{typical behavior} of a quantum reservoir. Unlike statements that universally work for each input or each initial reservoir instance in mathematical RC, our probabilistic statements reflect the common use of ensemble averaging in theoretical models (e.g., Haar random unitaries or noise channels, as used in the main text) which enables a rigorous study of exponential concentration in QRP, and the essential role of dissipation and noise in open quantum systems \cite{cheamsawat2025dissipation, sannia2024dissipation, martinez2023quantum, yosifov2025dissipation}. Thus, our results do not necessarily imply the same universal approximation theorems proven by classical references \cite{grigoryeva2018echo, gonon2021universal}, but does capture the essence of FMP-like and ESP-like behaviors relevant to QRP with finite measurement shots. The formalization of ESP and FMP in QRP is an active area of research, with recent proposals adapting classical concepts to quantum contexts \cite{kobayashi2024echopre, kobayashi2024coherence, martinez2024input, martinez2023quantum, sannia2024dissipation, gonon2025feedback}.

\subsection{Exponential decay of $\MC_a(t)$ implies Fading Memory Property (FMP)-like characterics in QRP}\label{app:memory}

\noindent
\textbf{Setting.} 
Fix an initial time step $\tau$, such that an input state $\rhoa{\t}$ is injected into the reservoir’s {\it accessible} qubits. The reservoir then evolves for $t$ more steps (with subsequent inputs $\rhoa{\t+1},\rhoa{\t+2},\dots$) until time $(\tau + t)$. Denote the reservoir’s measurement outcome at that final time by
\begin{equation}\label{eq:FMP_setting}
   \langle O\rangle_{(\tau+t)}^{(\rhoa{\t})}
   =
   \mathrm{Tr}\bigl[
       O\rhot{\tau+t}\bigr],
\end{equation}
where $\rhot{\tau+t}$ is the resulting reservoir state (accessible and hidden qubits together).

As in Definition~\ref{def:input_dep} of the main text, we define
\begin{equation}\label{eq:MaDef}
  \MC_{a}(t;\,\SC_a,U)
  =
         \mathrm{Var}_{\rhoa{\t}\sim\SC_a}
         \bigl(\langle O\rangle_{\tau+t}\bigr)\,,
\end{equation}
where $\mathrm{Var}_{\rhoa{\t}\sim\SC_a}(\cdot)$ is the variance of $\langle O\rangle_{\tau+t}$ over the distribution of possible inputs $\rhoa{\t}$ at the \emph{initial time} $\tau$. If $ \MC_a(t)$ tends to be large, then changing the initial input can  significantly influence the final output at the subsequent time step $t$. Conversely, if $ \MC_a(t)$ is likely small, the reservoir effectively \emph{forgets} that earlier input.

\begin{supplemental_theorem}[FMP-like behavior from exponentially decaying $\MC_a(t)$]
\label{thm:FMP_main}
Assume there exist constants $c>0$ and $\alpha>0$ such that
\begin{equation}\label{eq:MaExpDecay}
   \Ebb_{U}[\MC_a(t;\,\SC_a,U)] 
   \le
   c\,\mathrm{e}^{-\alpha t},
   \quad
   \forall\,t\in\mathbb{N}\,,
\end{equation}
Then the reservoir exhibits FMP-like fading of old inputs on average over random pairs of inputs. Concretely, for two different initial states $\rhoa{\t}$ and $\sga{\t}$ drawn from $\SC_a$ (with all subsequent inputs in the time series fixed), the outputs at time $(\tau+t)$ satisfy
\begin{equation}\label{eq:FMPaverage}
   \mathbb{E}_{U,\rhoa{\t},\sga{\tau}}\Bigl[
     \bigl(
        \langle O\rangle_{(\tau+t)}^{(\rhoa{\t})}
        -
        \langle O\rangle_{(\tau+t)}^{(\sga{\tau})}
     \bigr)^2
   \Bigr]
   \le
   4\,c\,\mathrm{e}^{-\alpha t}.
\end{equation}
Hence, on average, older inputs $\rhoa{\t}$ for large $\tau$ fade exponentially from the reservoir’s subsequent outputs.

\smallskip
\noindent
\emph{Implications from exponential concentrations:} 
\begin{itemize}
\item \textbf{Reservoir's dimension-based decays:} For Haar random reservoirs,  $\Ebb_{U}[\MC_a(t)]\in\OC\left( \frac{1}{\,d_{h}\,d_{a}^t}\right)$, see Theorem~\ref{thm:mem_ind_upp_bound} in the main text, and then $\alpha \in\OC(n_{a})$. 
\item \textbf{Reservoir's noise-based decays:} For local Pauli noise with $0<q<1$, one typically gets $\MC_a(t)\propto \sqrt{q}^{bt}$, see Theorem~\ref{thm:unital_noise} in the main text, so $\alpha \in \OC(1)$.  More generally, non-unital noise can also induce local-scrambling exponentials.
\end{itemize}

\end{supplemental_theorem}

\begin{proof}[Proof]
\noindent  
By \eqref{eq:MaDef} and \eqref{eq:MaExpDecay}, we have
\begin{equation}\label{eq:MaAvgVar}
  \Ebb_{U}[\MC_a(t;\SC_a,U)]
  =
     \Ebb_{U}[\mathrm{Var}_{\rhoa{\t}\sim\SC_a}\bigl(\langle O\rangle_{\tau+t}\bigr)]
  \le
  c\,\mathrm{e}^{-\alpha t}.
\end{equation}
Thus, on average over the reservoir $U$, the variance of the output with respect to the random input $\rhoa{\t}$ is at most $c\,e^{-\alpha t}$.

We next draw two i.i.d. states $(\rhoa{\t},\sga{\tau})$ from the same distribution $\SC_a$, representing two possible early inputs at time $\tau$. Our goal is to bound
\begin{equation}\label{eq:FMP2var_objective}
   \bigl(\langle O\rangle_{\tau+t}^{(\rhoa{\t})}
         - \langle O\rangle_{\tau+t}^{(\sga{\tau})}\bigr)^2
\end{equation}
\emph{on average} over both $U$ and $(\rhoa{\t},\sga{\tau})$. 

First, fix the reservoir $U$. For each possible input $\rhoa{\t}\in\SC_a$, define
\begin{equation}\label{eq:X_centering}
   X(\rhoa{\t})
   =
   \langle O\rangle_{\tau+t}^{(\rhoa{\t})} 
    -
   \mathbb{E}_{\rhoa{\t}'\sim \SC_a}\bigl[
       \langle O\rangle_{\tau+t}^{(\rhoa{\t}')}\bigr],
\end{equation}
thus \emph{centering} the random variable $\langle O\rangle_{\tau+t}^{(\rhoa{\t})}$ by subtracting its mean value (averaged over $\rhoa{\t}'\!\sim \SC_a$). Given two distinct inputs $(\rhoa{\t},\sga{\tau})$, one can rewrite the difference
\begin{equation}
  \langle O\rangle_{\tau+t}^{(\rhoa{\t})}
  -
  \langle O\rangle_{\tau+t}^{(\sga{\tau})}
  =
  \bigl(X(\rhoa{\t}) - X(\sga{\tau})\bigr)
    + \Delta_{\mu},
\end{equation}
where 
\(
  \Delta_{\mu}
   =
  \mathbb{E}_{\rho'_{\tau}\sim\SC_a}
   [\langle O\rangle_{\tau+t}^{(\rho'_{\tau})}]
  -
  \mathbb{E}_{\sg'_{\tau}\sim\SC_a}
   [\langle O\rangle_{\tau+t}^{(\sg'_{\tau})}] 
   = 0
\)
since the two initial states are drawn from the same ensemble $\SC_a$ and the rest of the sequence are identical. We will use the usual inequality $(x\pm y)^2 \le 2x^2 + 2y^2$ (see the footnote \footnote{A quick way to see the inequality is to first expand $(x\pm y)^2 = x^2 \pm 2xy + y^2.$ Then, note that
   $\pm 2xy \le x^2 + y^2$, since $0 \le (x \mp y)^2 = x^2 \mp 2xy + y^2.$
Therefore, $(x \pm y)^2 \le x^2 + y^2 + (x^2 + y^2) = 2x^2 + 2y^2$.
}).
From $(x-y)^2 \le 2x^2 + 2y^2$, one obtains
\begin{equation}\label{eq:last_step_fmp}
  \bigl(\langle O\rangle_{\tau+t}^{(\rhoa{\t})}
        -\langle O\rangle_{\tau+t}^{(\sga{\tau})}\bigr)^2
  =
  \bigl(X(\rhoa{\t}) - X(\sga{\tau})\bigr)^2
  \le
    2X(\rhoa{\t})^2 + 2X(\sga{\tau})^2.
\end{equation}

We now take expectations \emph{first} over the random pair \((\rhoa{\t},\sga{\tau})\) drawn i.i.d.\ from \(\SC_a\), and \emph{then} over the reservoir \(U\). Observe that
\begin{equation}
   \mathbb{E}_{\rhoa{\t}}\bigl[X(\rhoa{\t})^2\bigr]
   =
   \mathrm{Var}_{\rhoa{\t}\sim\SC_a}\!\bigl(
       \langle O\rangle_{\tau+t}^{(\rhoa{\t})}
     \bigr),
\end{equation}
by definition of variance. The same argument applies to $\sga{\tau}$. Hence, all terms inside that expectation in the RHS of \eqref{eq:last_step_fmp} are precisely 
\begin{align}
  \mathbb{E}_U\Bigl[
    \mathrm{Var}_{\rhoa{\t}\sim\SC_a}\!\bigl(
      \langle O\rangle_{\tau+t}^{(\rhoa{\t})}
    \bigr)
  \Bigr],
\end{align}
which is indeed $ \Ebb_{U}[\MC_a(t)]$. Therefore, taking all the expectations together on both sides of \eqref{eq:last_step_fmp} yields
\begin{equation}\label{eq:FMPfinal}
  \mathbb{E}_{U,\rhoa{\t},\sga{\tau}}
   \Bigl[
     \bigl(
       \langle O\rangle_{\tau+t}^{(\rhoa{\t})}
       -\langle O\rangle_{\tau+t}^{(\sga{\tau})}
     \bigr)^2
   \Bigr]
  \le
  4 \Ebb_{U}[\MC_a(t)] \;.
\end{equation}

Thus, on average, old inputs cause at most an exponentially small deviation in the output, once $ \Ebb_{U}[\MC_a(t)]$ is exponentially small in $t$.
\end{proof}

Note that inequality \eqref{eq:FMPfinal} implies that once $ \Ebb_{U}[\MC_a(t)]$ decays as $c e^{-\alpha t}$, any small change in the input at time $\tau$ leads to an \emph{exponentially small} difference in the outputs at time  $(\tau+t)$, \emph{on average} over the reservoir ensemble and pairs of inputs. This reflects an \emph{FMP-like} characteristic, such that old inputs are effectively forgotten at a rate set by $\alpha$. By contrast, mathematical RC typically uses a more stringent, deterministic definition of FMP that requires \emph{all} possible input variations to vanish in time \cite{grigoryeva2018echo}. In our QRP framework, the exponential concentration phenomena leads to the difference between outputs shrinking \emph{exponentially} fast in $t$. 

\subsection{Exponential decay of $\MC_h(t)$ implies Echo State Property (ESP)-like characteristics in QRP}

\textbf{Setting.}  
Let $\rhoh{0}$ be the hidden qubits’ initial state,  while keeping all subsequent inputs $\rhoa{1},\rhoa{2},\dots$ in the accessible qubits fixed.  After $t$ steps of evolution, an observable $O$ is measured, giving
\begin{equation}
   \langle O\rangle_{t}^{(\rhoh{0})}
   =
   \mathrm{Tr}\bigl[O\,\rhot{t}\bigr],
\end{equation}
where $\rhot{t}$ is the resulting reservoir state (accessible and hidden qubits together).
As in Definition~\ref{def:init_state_dep} of the main text, we define
\begin{equation}
   \MC_h(t;\SC_h,U)
   =
     \mathrm{Var}_{\rhoh{0}\sim\SC_h}
       \bigl(\langle O\rangle_{t}\bigr)\,,
\end{equation}
  Similarly to the previous subsection, if $\MC_h(t)$ is small with high probability, then $\langle O\rangle_{t}^{(\rhoh{0})}$ barely depends on how the hidden qubits were initially prepared, implying the reservoir \emph{forgets} its own initial condition. If one starts the reservoir in two distinct hidden-qubit states $\rhoh{0}$ and $\sgh{0}$ at the outset, the reservoir's outputs should eventually converge, given the \emph{same} subsequent sequence of inputs. As discussed earlier, this intuitive ESP-like requirement guarantees the reservoir is well-defined on average, since one does not want the final outputs to spuriously depend on arbitrary preparation procedures from long in the past.

\begin{supplemental_theorem}[ESP-like behavior from exponentially decaying $\MC_h(t)$]
\label{thm:ESP-from-Mh}
Suppose there exist constants $c'>0$ and $\alpha'>0$ such that
\begin{equation}
\label{eq:MhExpDecay}
  \Ebb_{U}[\MC_h(t;\SC_h,U)]
  \le
  c' e^{-\alpha' t}
  \quad
  \forall\,t\in \mathbb{N}\,,
\end{equation}
Then the reservoir exhibits an ESP-like behavior: on average, the influence of the reservoir's initial hidden state decays exponentially. Specifically, consider two runs of the reservoir with the same time-series inputs $\{\rhoa{1},\dots,\rhoa{t},\dots\}$, but different reservoir's hidden-qubit initial states $\rhoh{0}$ and $\sgh{0}$. Denote their final outcomes at time $t$ by
\begin{equation}
\label{eq:twoOutputsHidden}
\langle O\rangle_{(t)}^{(\rhoh{0})}
\quad\text{and}\quad
\langle O\rangle_{(t)}^{(\sgh{0})}.
\end{equation}
Then, on average over random draws $\,\rhoh{0},\sgh{0}\sim\SC_h$ and $\,U\sim\UC$,
\begin{equation}
\label{eq:ESPResult}
\mathbb{E}_{U,\rhoh{0},\sgh{0}}
\Bigl[
   \Bigl(
     \langle O\rangle_{(t)}^{(\rhoh{0})}
     -
     \langle O\rangle_{(t)}^{(\sgh{0})}
   \Bigr)^2
\Bigr]
\le
4 c' e^{-\alpha' t}.
\end{equation}
Hence, the two output trajectories converge exponentially quickly, irrespective of distinct hidden-state preparations in the far past. In other words, old initial conditions of the reservoir state are washed away, and the final outputs does echo the recent input signals.
\end{supplemental_theorem}

\begin{proof}[Proof]
The argument fully parallels the proof for FMP in Theorem~\ref{thm:FMP_main}. First, the assumption \eqref{eq:MhExpDecay} tells us that
\begin{equation}
 \Ebb_{U}[\MC_h(t)]
=
\mathbb{E}_{U}
\Bigl[
  \mathrm{Var}_{\rhoh{0}\sim\SC_h}
    \bigl(\langle O\rangle_{(t)}\bigr)
\Bigr]
\le
c' e^{-\alpha' t}.
\end{equation}
For any \emph{fixed} reservoir evolution $U$, define the centered random variable
\begin{equation}
Y(\rhoh{0})
\coloneqq
\langle O\rangle_{(t)}^{(\rhoh{0})}
-
\mathbb{E}_{\rhoh{0}'\sim\SC_h}
   \bigl[
     \langle O\rangle_{(t)}^{(\rhoh{0}')}
   \bigr].
\end{equation}
One sees that
$\mathbb{E}_{\rhoh{0}}[Y(\rhoh{0})] = 0$
and
$\mathbb{E}_{\rhoh{0}}[Y(\rhoh{0})^2]
 = \mathrm{Var}_{\rhoh{0}\sim\SC_h}\bigl(\langle O\rangle_{(t)}\bigr).$
Next, let us compare two distinct initial hidden states, $\rhoh{0}$ and $\sgh{0}$. Their outputs differ by
\begin{equation}
\langle O\rangle_{(t)}^{(\rhoh{0})}
  -
\langle O\rangle_{(t)}^{(\sgh{0})}
=
\bigl(Y(\rhoh{0}) - Y(\sgh{0})\bigr)
   +
\Delta_{\mu},
\end{equation}
where 
$\Delta_{\mu} \equiv
   \mathbb{E}_{\rhoh{0}'\sim\SC_h}[\langle O\rangle_{(t)}^{(\rhoh{0}')}]
     - \mathbb{E}_{\sgh{0}'\sim\SC_h}[\langle O\rangle_{(t)}^{(\sgh{0}')}]].$
Since $\rhoh{0},\sgh{0}$ are \emph{both} drawn from the same distribution $\SC_h$, then $\Delta_{\mu} = 0.$ Applying the usual inequalities $(x - y)^2 \le 2(x^2 + y^2)$ to the squared of the previous equation yields
\begin{equation}
\bigl(\langle O\rangle_{(t)}^{(\rhoh{0})}
   -
   \langle O\rangle_{(t)}^{(\sgh{0})}
\bigr)^2
\le
2\bigl[Y(\rhoh{0})^2 + Y(\sgh{0})^2\bigr].
\end{equation}
Finally, taking the expectation over the i.i.d.\ draws $\rhoh{0},\sgh{0} \sim \SC_h$ and over $U \sim \UC$, we get
\begin{equation}
\begin{aligned}
\mathbb{E}_{U,\rhoh{0},\sgh{0}}
\Bigl[
  \bigl(
    \langle O\rangle_{(t)}^{(\rhoh{0})}
    -
    \langle O\rangle_{(t)}^{(\sgh{0})}
  \bigr)^2
\Bigr]
&\le
2\,
\mathbb{E}_{U,\;\rhoh{0},\;\sgh{0}}
\bigl[
  Y(\rhoh{0})^2 \;+\; Y(\sgh{0})^2
\bigr]
\\[6pt]
&=
4\mathbb{E}_{U}
\Bigl[
  \mathrm{Var}_{\rhoh{0}\sim\SC_h}\bigl(\langle O\rangle_{(t)}^{(\rhoh{0})}\bigr)
\Bigr]
=
4 \Ebb_{U}[\MC_h(t)]
\le
4c' e^{-\alpha' t}.
\end{aligned}
\end{equation}
This completes the proof of Eq.~\eqref{eq:ESPResult}.
\end{proof}

 Once $ \Ebb_{U}[\MC_h(t)]$ decays as $c' e^{-\alpha' t}$, any difference in the reservoir’s initial hidden state $\rhoh{0}$ results in an \emph{exponentially small} difference in outputs at time $t$ \emph{on average}, over both hidden-state pairs and the reservoir ensemble. This signifies an ESP-like characteristic: the system forgets initial conditions at a rate determined by $\alpha'$. Note that mathematical RC typically adopts a more demanding, deterministic form of ESP~\cite{grigoryeva2018echo}. It is important to emphasize again that in our QRP framework, the \emph{exponential concentration} phenomena typically induce this rapid forgetting, exponentially fast in $t$.

\subsection{Discussions on reservoir universality and the role of dissipation}

In mathematical RC, if a sufficiently rich reservoir deterministically satisfies formal definitions of ESP and FMP (often via topological continuity or uniform state convergence), one can prove \emph{universal approximation} of fading-memory filters \cite{grigoryeva2018echo,gonon2021universal}. In our QRP framework, however, the average of memory indicators i.e., $ \Ebb_{U}[\MC_a(t)]$ and $ \Ebb_{U}[\MC_h(t)]$ yield a \emph{probabilistic} statement on the exponential decay of old inputs and initial states. While this forgetting is essential for operative QRP, it does \emph{not} directly imply the deterministic criteria required by classical universality theorems. Whether there is a direct link to formal universality would require additional analysis beyond the scope of this work. Instead, our findings emphasize that, under finite measurement shots, practical QRP faces a tension between scaling up the reservoir to handle more complex memory tasks and avoiding exponential concentration of outputs.

Moreover, realistic QRP platforms harness dissipation and decoherence as \emph{resources} for RC scheme: they wash out old inputs or initial states to enforce FMP-like and ESP-like behaviors \cite{sannia2024dissipation,sannia2023engineered,innocenti2023potential,cheamsawat2025dissipation,martinez2023quantum}. Such dissipative or measurement processes are useful for short-term time-series prediction. Hence, although existing universal approximation proofs typically rely on stronger deterministic definitions, our analysis highlights how dissipation and measurement can induce exponential concentration phenomena that realize decaying memory properties in QRP.

\end{document}